\documentclass[
10pt,aps,prx,twocolumn,superscriptaddress,nofootinbib,
amsmath,amssymb,floatfix]{revtex4-2}
\usepackage{xcolor}
\usepackage{graphicx}
\usepackage{mathtools}
\usepackage{amsthm}
\usepackage[hypertexnames=false]{hyperref}
\usepackage[capitalize, nameinlink]{cleveref}
\usepackage{enumitem}
\usepackage{microtype}
\usepackage[breakable,skins]{tcolorbox}
\usepackage{booktabs}
\usepackage{threeparttable}

\allowdisplaybreaks

\definecolor{red4}{Hsb}{0,0.79,0.72}
\definecolor{blue4}{Hsb}{240,0.65,0.85}
\definecolor{purple4}{Hsb}{330,0.79,0.70}
\hypersetup{colorlinks,
citecolor=blue4,
linkcolor=red4,
urlcolor=purple4}

\DeclarePairedDelimiter{\ket}{\lvert}{\rangle}
\DeclarePairedDelimiter{\bra}{\langle}{\rvert}
\DeclarePairedDelimiter{\norm}{\lVert}{\rVert}

\DeclareMathOperator{\diag}{diag}

\newcommand{\braket}[2]{\langle{#1}|{#2}\rangle}
\newcommand{\ketbra}[2]{\ensuremath{\ket{#1} \bra{#2}}}

\newtheorem{lemma}{Lemma}
\newtheorem{theorem}{Theorem}
\newtheorem{corollary}{Corollary}
\newtheorem{result}{Result}
\newtheorem{definition}{Definition}
\newtheorem{problem}{Problem}

\makeatletter

\newcommand{\apptocfile}{atoc}
\let\apptoc@orig@appendix\appendix
\renewcommand{\appendix}{%
  \apptoc@orig@appendix
  \let\apptoc@orig@addtocontents\addtocontents
  \long\def\addtocontents##1##2{%
    \def\apptoc@ext{##1}%
    \def\apptoc@toc{toc}%
    \ifx\apptoc@ext\apptoc@toc
      \apptoc@orig@addtocontents{\apptocfile}{##2}%
    \else
      \apptoc@orig@addtocontents{##1}{##2}%
    \fi
  }%
}
\newcommand{\appendixtableofcontents}{%
  \begingroup
    \setcounter{tocdepth}{3}%
    \phantomsection
    \let\addcontentsline\@gobblethree
    \section*{Contents}%
    \pdfbookmark[1]{Appendices}{apxcontents}%
    \@starttoc{\apptocfile}%
  \endgroup
}

\renewcommand\paragraph{\@startsection{paragraph}{4}{\parindent}
  {1ex \@plus1ex \@minus.2ex}
  {-1em}
  {\normalfont\normalsize\bfseries}}
  \makeatletter
\renewcommand\subparagraph{\@startsection{subparagraph}{5}{\parindent}
  {1ex \@plus1ex \@minus.2ex}
  {-1em}
  {\normalfont\normalsize\itshape}}

\makeatother

\crefname{section}{Sec.}{Secs.}

\begin{document}
\title{Quantum enhanced rare event discovery and sampling}

\author{Naixu Guo}
\email{naixug@u.nus.edu}
\affiliation{Centre for Quantum Technologies, National University of Singapore, Singapore 117543, Singapore}

\author{Po-Wei Huang}
\affiliation{Mathematical Institute, University of Oxford, Oxford OX2 6GG, United Kingdom}

\author{Qisheng Wang}
\affiliation{School of Computer Science, Shanghai Jiao Tong University, Shanghai 200240, China}
\affiliation{School of Informatics, University of Edinburgh, Edinburgh EH8 9AB, United Kingdom}

\author{Jayne Thompson}
\affiliation{Centre for Quantum Technologies, National University of Singapore, Singapore 117543, Singapore}
\affiliation{College of Computing and Data Science, Nanyang Technological University, Singapore 639798, Singapore}

\author{Patrick Rebentrost}
\affiliation{Centre for Quantum Technologies, National University of Singapore, Singapore 117543, Singapore}
\affiliation{School of Computing, National University of Singapore, Singapore 117417, Singapore}

\author{Mile Gu}
\email{ceptryn@gmail.com}
\affiliation{Centre for Quantum Technologies, National University of Singapore, Singapore 117543, Singapore}
\affiliation{Nanyang Quantum Hub, School of Physical and Mathematical Sciences, Nanyang Technological University, Singapore 637371, Singapore}

\author{Chengran Yang}
\email{yangchengran92@gmail.com}
\affiliation{Nanyang Quantum Hub, School of Physical and Mathematical Sciences, Nanyang Technological University, Singapore 637371, Singapore}

\date{\today}

\begin{abstract}  
Financial crashes, cascading failures in infrastructure, and critical errors in AI systems are frequently triggered by events that occur with extremely small probability. Efficiently discovering and sampling events with probability below a threshold is therefore of critical interest. Yet this task is highly non-trivial using existing classical or quantum methods. Being rare, such events require an immense sampling overhead to collect sufficient data samples. Moreover, because the rare events are not known in advance, they cannot be flagged for amplification using standard techniques. Here, we introduce a quantum algorithm for rare-event discovery and sampling without first learning which events are rare. The algorithm achieves the optimal quantum scaling with the rarity threshold.
We further demonstrate that this can achieve a quadratic speedup for heavy-tailed systems whose tail has nonvanishing total mass, and translates into a robust polynomial speedup for stationary stochastic processes, with the exponent determined by its entropy-rate structure.
\end{abstract}

\maketitle

\section{Introduction}
From edge-case scenarios that compromise AI safety~\cite{hendrycks2018baselinedetectingmisclassifiedoutofdistribution,amodei2016concreteproblemsaisafety} to black-swan and seismic events~\cite{utsu1999earthquake, shyalika2023comprehensive}, rare events can have a disproportionate impact. The discovery of such events — together with the simulation of their potential consequences — is therefore critical for accurate risk assessment and for devising contingencies to avert catastrophe~\cite{Embrechts2013Modelling,Sornette2012Dragonkings}. Consequently, there is strong interest in efficient methods to sample only from the rare events of a distribution, namely those whose probabilities fall below some target threshold $\Delta$.

Yet the scarcity of such events renders them statistically difficult to model. Critically, we do not \emph{a priori} possess a list of what outcomes are rare, nor knowledge of what effects they may have. This limits the applicability of existing quantum rare-event techniques, such as quantum Monte Carlo and amplitude amplification, which assume that the target events can be efficiently identified and flagged. Classical methods face similar challenges: importance sampling approaches~\cite{Tokdar2010Importancea,Balesdent2013Krigingbased} rely on carefully designed biasing distributions whose effectiveness depends strongly on prior knowledge of the rare events themselves. Consequently, it is not immediately clear whether methods more efficient than classical Monte Carlo sampling exist. The standard approach is therefore first to estimate the probabilities of outcomes to determine which events are rare, and only then to sample the process and post-select those events.

Here, we introduce a quantum algorithm for sampling rare events without first knowing which events are rare. Given a probability distribution $P$ through a quantum state preparation unitary $U_P$, our algorithm produces samples from the distribution restricted to events whose probabilities lie below the threshold $\Delta$. Its query complexity scales as 
$\mathcal{O}(1/\sqrt{p_{\mathrm{rare}}\Delta})$, where $p_{\mathrm{rare}}$ is the probability that an ordinary sample from $P$ is rare. 
We prove that the algorithm achieves optimal scaling with respect to $\Delta$, offering a quadratic improvement over the optimal classical $1/\Delta$ dependence.
We further demonstrate speedups in heavy-tailed distributions and stochastic processes, where many individually unlikely events can together form a substantial tail.
Beyond sampling, our algorithm can also synthesize a coherent quantum representation of the rare-event distribution that presents a crucial quantum resource for further downstream quantum advantage

\begin{figure*}
    \includegraphics[width=0.7\linewidth]{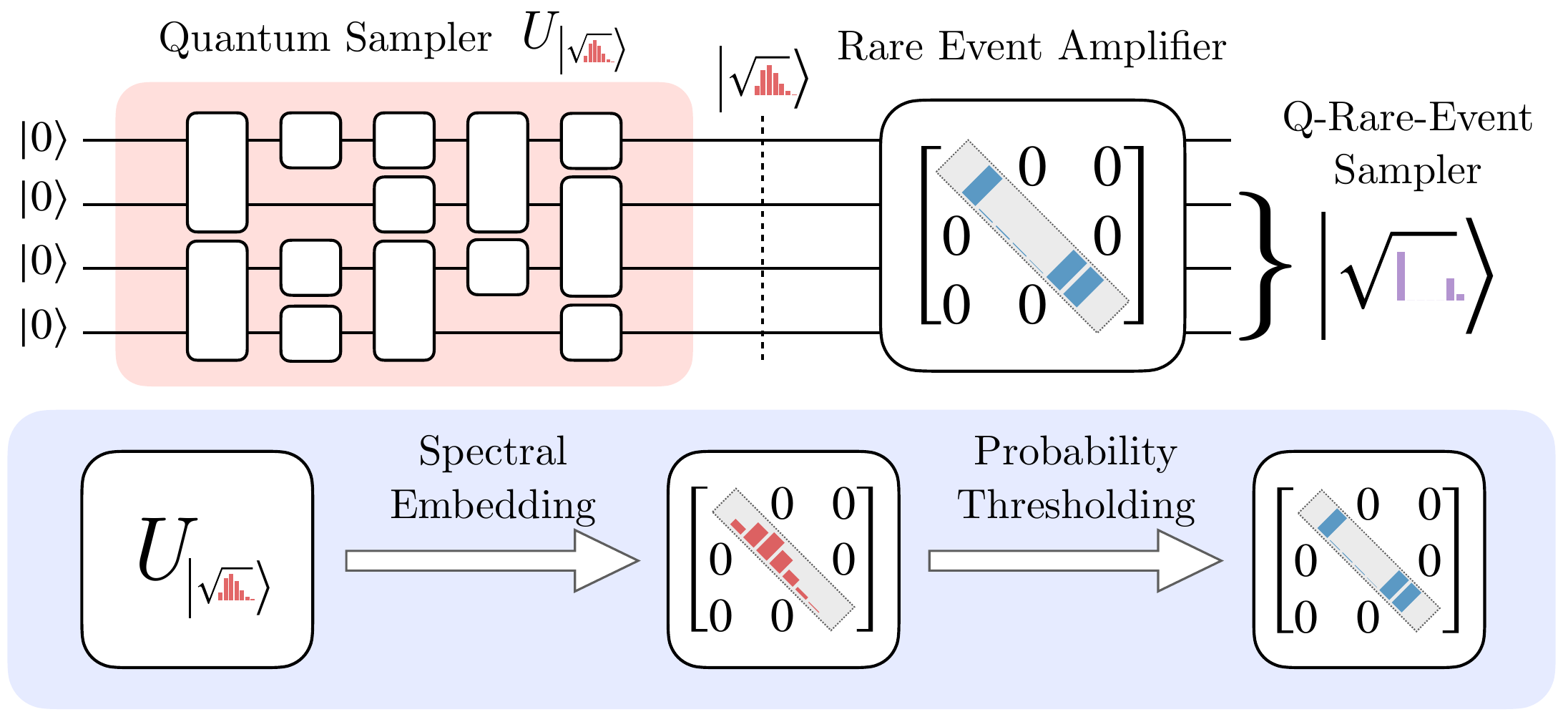}
    \caption{\textbf{Overview of quantum rare event sampling.} \textbf{Top:} Starting from the initial sample state $|P\rangle$, the rare-event amplifier filters basis states with $P(x_i) \le \Delta$ and amplifies the surviving subspace to prepare the target state $|P_{\mathcal R}\rangle$. \textbf{Bottom:} The amplifier is implemented by first converting amplitudes into a spectral representation (spectral embedding), and then applying an approximate threshold at $\sqrt{\Delta}$. This coherent pipeline enables sampling from the rare-event distribution without requiring prior identification of what events are rare.
    }
    \label{fig:alg.explanation}
\end{figure*}

\section{Results}
\paragraph*{Rare event sampling.}
Consider a probability distribution $P$ over $N$ events accessible via sampling. For a given threshold $\Delta>0$, we define the unknown but non-empty set of rare events as $\mathcal{R}=\{x_i \mid 0 < P(x_i) \leq \Delta\}$, which has a total probability mass of $p_{\mathrm{rare}}=\sum_{x_i\in \mathcal{R}}P(x_i)$.
Our objective is to construct a procedure capable of generating samples from the conditional distribution
\begin{equation}
    P_{\mathcal{R}}(x_i) := \begin{cases} \frac{P(x_i)}{p_{\text{rare}}} & \text{if } x_i \in \mathcal{R} \\ 0 & \text{if } x_i \notin \mathcal{R} \end{cases}.
\end{equation}

A sharp cutoff at $\Delta$ is unattainable with finite samples or bounded circuit depth.
Because near-threshold events become indistinguishable from the true rare set, we relax the boundary by defining an ambiguous region of near-threshold events that may be mistakenly included: $\mathcal{S}_{\Delta} = \{x_i: \Delta < P(x_i) \leq (1+\alpha) \Delta\}$ for a small constant $\alpha > 0$ (e.g., $\alpha=0.001$). This relaxation introduces an approximation error, measured by the total variation distance $D_{\mathrm{tv}}(P_\mathcal{R}, P'_\mathcal{R})=\frac{1}{2}\sum_{x_i}|P_\mathcal{R}(x_i)-P'_\mathcal{R}(x_i)|$ between the ideal and achieved distributions. This error is controlled by the relative weight $\zeta = p_{\Delta}/p_{\mathrm{rare}}$, where $p_{\Delta} = \sum_{x_i \in \mathcal{S}_{\Delta}} P(x_i)$. Our question is: how many samples from $P$ (or applications of $U_P$) are required to construct a sampler $\mathcal{O}(\zeta+\epsilon)$-close to $P_{\mathcal{R}}$ for a tunable precision $\epsilon$?

To achieve this task quantum mechanically, we assume access to a quantum sampler (unitary) $U_P$ that prepares the quantum sample state
\begin{align}
    \ket{P} = U_P \ket{0}=\sum_{i=1}^N \sqrt{P(x_i)}\ket{x_i},
\end{align}
where $\ket{x_i}$ denotes the computational basis state corresponding to the binary representation of event $x_i$, analogous to other quantum sampling algorithms \cite{aharonov2003adiabatic,gilyen2020distributional,zoufal2019quantum, temme2025quantizedmarkov}.
Our objective then translates to constructing a quantum circuit $U_{P_{\mathcal{R}}}$ that prepares the quantum rare event state
\begin{align}
    \ket{P_\mathcal{R}}=U_{P_{\mathcal{R}}} \ket{0}=\frac{1}{\sqrt{p_{\mathrm{rare}}}}\sum_{x_i\in \mathcal{R}}\sqrt{P(x_i)}\ket{x_i}.
\end{align}
Measuring the state $\ket{P}$ or $\ket{P_{\mathcal{R}}}$ in the computational basis yields samples distributed according to $P$ and $P_{\mathcal{R}}$ respectively, matching the classical definitions.
This correspondence allows us to compare classical and quantum algorithms based on the number of samples from $P$ or applications of $U_P$ required to construct the rare event sampler.
Beyond sampling, the quantum circuit $U_{P_{\mathcal{R}}}$ can efficiently prepare many copies of $\ket{P_{\mathcal{R}}}$ for use in downstream quantum algorithms where such superpositions on critical.

Our goal is to develop both classical and quantum algorithms for this task, characterize the optimal sample complexity, and identify the potential for quantum advantage.
Further details are provided in \cref{sec.algorithm}, and we summarize the problem setting as \cref{fig.problem.def}.

\begin{figure}
    \vspace*{0.5em}
    \begin{tcolorbox}[enhanced]
        \centering
        \textbf{Problem: Rare Event Sampling}\\[0.5em]
        \begin{tabular}{@{} l @{\hspace{1.0em}} p{0.62\linewidth} @{}}
            \textbf{Input:} 
            & Sampling access to $P$ classically, or quantum access through $U_P$. \\[0.3em]
            
            \textbf{Rare set:} 
            & $\mathcal{R}=\{x:0<P(x)\leq \Delta\}$. \\[0.3em]
            
            \textbf{Goal:} 
            & Construct a sampler for $P_{\mathcal{R}}$ with total variation error $\mathcal{O}(\zeta+\epsilon)$. \\[0.3em]
            
            \textbf{Cost:} 
            & Number of samples from $P$, or calls to $U_P$ and $U_P^\dagger$.
        \end{tabular}
    \end{tcolorbox}
    \caption{\textbf{Problem setting of rare event sampling.}
    The task is to sample from the conditional distribution over events whose probabilities are at most $\Delta$. The error term $\zeta$ captures the unavoidable ambiguity from near-threshold events.}
    \label{fig.problem.def}
\end{figure}

\paragraph*{Classical baseline.}
We first establish what can be achieved classically. If the set of rare events $\mathcal{R}$ were known a priori, one could directly apply rejection sampling: generate a sample $x_i$ from $P$ and accept it if and only if $x_i \in \mathcal{R}$.
Each accepted sample would then require $\mathcal{O}(1/p_{\mathrm{rare}})$ samples from $P$ on average.

However, the rare set $\mathcal{R}$ is not known a priori. To decide whether to accept a sample $x_i$, we must determine if $P(x_i) \leq \Delta$, which requires estimating outcome probabilities to precision $\mathcal{O}(\alpha\Delta)\subseteq\mathcal{O}(\Delta)$.
By adapting distribution-learning and estimation techniques from previous works \cite{Waggoner15, Apeldoorn2021multidimensional}, we can construct a list of candidate rare events and use it for rejection sampling. Combining these steps yields the following result (see \cref{sec.algorithm} for the detailed proof and construction).

\begin{result}[Classical rare event sampling]\label{main.thm.classical.alg} Given sampling access to a probability distribution $P$ over $N$ events, for any $\epsilon>0$,
    \begin{equation}
        \mathcal{O}\left(\min\left\{\frac{1}{\Delta}\log\left(\frac{N}{\epsilon}\right),\frac{1}{\Delta^2}\log\left(\frac{1}{\epsilon}\right)\right\}+\frac{1}{p_{\mathrm{rare}}}\right)
    \end{equation}
    samples from $P$ suffice to construct a sampler for a distribution $\mathcal{O}(\zeta + \epsilon)$-close to $P_{\mathcal{R}}$ in total variation distance, where $\zeta = p_{\Delta}/p_{\mathrm{rare}}$.
\end{result}

Here, the first term arises from the discovery of this list of rare events, while the second reflects the standard overhead of rejection sampling. As we will see, in many contexts, such as rare-event discovery in heavy-tailed distributions and stochastic processes, the former term dominates. This is because there can exist exponentially many rare events, ensuring that $p_{\text{rare}} \gg \Delta$. In such situations, our sampling cost minimally scales as $\Omega(1/\Delta)$. Our next result shows that there is way around this using any classical technique:

\begin{result}[Classical lower bound]
    For $\Delta>0$, any classical algorithm requires $\Omega(1/\Delta)$ samples from $P$ to construct a sampler for $P_{\mathcal{R}}$.
\end{result}

The above lower bound is robust, and continues to hold when we construct a sampler for a distribution that is close to $P_{\mathcal{R}}$ . The proof proceeds by reduction from the distribution distinguishing problem \cite{BY02}. Details can be found in \cref{sec.algorithm}. 

\paragraph*{Quantum algorithm.}
We now turn to the quantum setting.
A naive quantum approach would translate the classical strategy directly: use standard amplitude estimation
\cite{brassard2002quantum} to determine $P(x_i)$ and verify whether each outcome satisfies the rare-event criterion. This incurs a linear overhead in the sample space size $N$, negating the potential for a broad quantum advantage.
Even multidimensional quantum amplitude estimation \cite{Apeldoorn2021multidimensional}, while more efficient at identifying multiple outcomes, still requires $\mathcal{O}(1/\Delta)$ applications of $U_P$ to estimate probabilities with precision $\mathcal{O}(\Delta)$, which we prove to be suboptimal, and provides no scaling advantage.

Our algorithm takes a different approach: it directly amplifies the rare components of the superposition state $\ket{P}$ without explicitly needing to first identify which events are rare. The conceptual strategy is illustrated in \cref{fig:alg.explanation}, and we formalize our contribution as follows (see \cref{sec.algorithm} for details).

\begin{result}[Quantum rare event sampling]
    \label{resQuantumRareSample}
    Given a quantum sampler $U_P$ for a probability distribution $P$ over $N$ events, for any $\epsilon>0$,
    \begin{equation}
        \mathcal{O}\left(\frac{1}{\sqrt{p_{\mathrm{rare}}\Delta}} \log\left(\frac{1}{\epsilon}\right)\right)
    \end{equation}
    applications of $U_P$ suffice to prepare and measure a quantum state,
    yielding an event from a distribution $\mathcal{O}(\zeta+\epsilon)$-close to $P_{\mathcal{R}}$ in total variation distance, where $\zeta = p_{\Delta}/p_{\mathrm{rare}}$.
\end{result}

Our algorithm produces a quantum rare-event state, $|P_{\mathcal{R}}\rangle$, which is a proportional superposition of all possible rare events. Measuring $|P_{\mathcal{R}}\rangle$ yields a rare-event sample. The algorithm comprises three key components (detailed in \cref{sec.algorithm}):

\begin{enumerate}
    \item \emph{Rare-event amplification.} We can prepare the target rare-event state $|P_{\mathcal{R}}\rangle$ as long as we can synthesis the unitary operator
          $U_{\Pi_\mathcal{R}}$ such that $$U_{\Pi_\mathcal{R}}\ket{0}\ket{P}=\ket{0}\Pi_\mathcal{R}\ket{P}+\sqrt{1-p_{\mathrm{rare}}}\ket{1}\ket{\perp},$$
          where
          $\Pi_\mathcal{R}=\sum_{x_i\in \mathcal{R}} |x_i\rangle \langle x_i|$ and $\ket{\perp}$ is some arbitrary state.
          Measuring the ancilla and post-selecting on outcome $0$ yields $\Pi_\mathcal{R}\ket{P}/\sqrt{p_{\mathrm{rare}}}$, which occurs with probability $p_{\mathrm{rare}}$.
          To boost this success probability, we apply amplitude amplification \cite{brassard2002quantum}, requiring $\mathcal{O}(1/\sqrt{p_{\text{rare}}})$ applications of $U_{P}$ and $U_{\Pi_\mathcal{R}}$. Thus, we reduce the problem to that of synthesizing $U_{\Pi_\mathcal{R}}$.
          
    \item \emph{Probability thresholding.}
          To implement $U_{\Pi_\mathcal{R}}$, we apply an approximate Heaviside step function to the eigenvalues of the Hermitian operator $H=\sum_{x_i}\sqrt{P(x_i)}|x_i\rangle\langle x_i|$.
          The non-smooth Heaviside function is approximated by a polynomial of degree $\mathcal{O}(1/\sqrt{\Delta} \log(1/\epsilon))$, where $\epsilon$ is the approximation error.
          This approximation has a finite transition width of $\mathcal{O}(\alpha\sqrt{\Delta})$ around the threshold $\sqrt{\Delta}$, which corresponds directly to the ambiguous set $\mathcal{S}_{\Delta}$ defined earlier.
          This step requires $\mathcal{O}(1/\sqrt{\Delta} \log(1/\epsilon))$ applications of a unitary $U_H$ that block encodes $H$. Thus the synthesis $U_H$ enables rare-event amplification.
    \item \emph{Spectral embedding.} We can synthesis $U_H$ using only access to $U_P$.
          The key observation is that the amplitudes of $\ket{P}$ equal the eigenvalues of $H$, both given by $\sqrt{P(x_i)}$. This correspondence allows us to extract the spectral information directly from the quantum sampler $U_P$.
          By utilizing nonlinear amplitude transformation techniques \cite{guo2024nonlinear,rattew2023nonlinear, guo2025quantum, rattew2025accelerating, du2025gentle}, we convert the amplitude information in $\ket{P}$ into the eigenvalue structure of $U_H$ with only $\mathcal{O}(1)$ applications of $U_P$.
\end{enumerate}

Combining these three advances yields a systematic way to prepare $\ket{P_\mathcal{R}}$ using $U_p$ that aligns with the state complexity. We complement this constructive upper bound with a matching lower bound, confirming that our algorithm achieves optimal scaling in $\Delta$. The proof is established via reduction from the state and distribution discrimination problem~\cite{bennett1997strengths, belovs:lipics.esa.2019.16}, showing the minimum resources required by any quantum algorithm for this task. Details can be found in \cref{sec.algorithm}.
Similar to the classical case, the lower bound also holds for sampling from an approximate distribution.

\begin{figure*}
    \includegraphics[width=\linewidth]{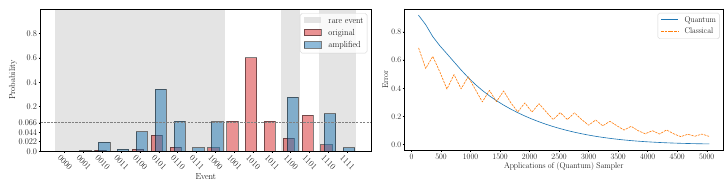}
    \caption{\textbf{Simulation results for a Dyson-Ising chain.} \textbf{Left:} Original (red) versus amplified (blue) probabilities for a sequence of length $L=4$. The dashed horizontal line represents the probability threshold $\Delta$, and the shaded regions denote the target rare events. \textbf{Right:} Total variation distance (TVD) error versus the number of applications of the quantum sampler for a sequence of length $L=8$ and Markov order $\chi=3$. The quantum algorithm converges more smoothly and rapidly than the classical baseline, which exhibits oscillatory behavior due to the intermittent sampling of rare events.}
    \label{fig:Dyson}
\end{figure*}

\begin{result}[Optimality of quantum rare event sampling]
    For $\Delta >0$, any quantum algorithm requires $\Omega(1/\sqrt{\Delta})$ applications of $U_P$ to construct a sampler for $P_{\mathcal{R}}$.
\end{result}

Importantly, the quantum algorithm produces more than a source of rare-event samples. Before the final measurement, it prepares the rare-event state $\ket{P_{\mathcal R}}$, a coherent representation of the full rare tail. This coherent form allows one to compare tails directly: for example, given rare-event states for two distributions on the same event space, a SWAP test estimates their rare-tail overlap and therefore whether the two models assign weight to similar extreme scenarios without needing to to compile a list of rare events. More broadly, rare-event state preparation upgrades rare-event sampling into a quantum primitive that can be used by downstream algorithms for analyzing distributional tails.

\paragraph*{Application to Heavy Tail Distributions.}
Comparing \cref{main.thm.classical.alg} and \cref{resQuantumRareSample}, the scaling depends on the total probability mass is carried by the rare tail as a whole. When $p_{\mathrm{rare}}=\Omega(1)$, the quantum algorithm saturates at the optimal $\mathcal{O}(1/\sqrt{\Delta})$ scaling, whereas the classical cost remains $\mathcal{O}(1/\Delta)$. The resulting quantum advantage is therefore quadratic.

Heavy-tailed distributions provide a natural setting in which rare events can be individually unlikely but collectively important. In \hyperref[sec:methods-power-law]{Methods C} and \cref{app:power_law_scaling}, we make this intuition concrete using rank-ordered power laws, where $x_i$ denotes the $i$-th most likely outcome and $P(x_i)\propto i^{-\gamma}$.
The analysis gives a transition at $\gamma=1$: for broad tails with $0<\gamma\leq 1$, the rare tail can carry non-negligible total probability mass. In these regimes, the quantum algorithm reaches its ideal  scaling, providing a quadratic improvement over the classical baseline. For steeper power laws with $\gamma>1$, the rare tail loses total mass as $\Delta$ decreases, so the additional amplification cost weakens the speedup below the ideal quadratic scaling. These models capture the central feature of fat-tailed risk in systems such as financial markets~\citep{bollerslev2013jump}, cascading infrastructure failures~\citep{dobson2007complex}, and seismic event catalogs~\citep{utsu1999earthquake}: each extreme realization may be exceptionally unlikely, yet the collection of such realizations can remain large enough to matter.

\paragraph*{Application to Stochastic Processes.} Consider a time-series governed by a stationary, ergodic stochastic processes described by a sequence of random variables $\mathfrak{X} = \ldots X_0,X_1,X_2 \ldots$. Let it have some finite entropy rate $H(\mathfrak{X})$, defined as its average uncertainty per symbol \footnote{The entropy rate is given by $H(\mathfrak{X}) = \lim_{L \to \infty} \frac{1}{L} H(X_1, X_2, \ldots, X_L)$, where $H$ is the Shannon entropy.}. The asymptotic equipartition property (AEP) tells us that almost all length-$L$ trajectories lie in a typical set, each with probability approximately $2^{-L H(\mathfrak{X})}$ as we scale $L$~\cite{shannon1948}. 

Our goal is then to sample from only rare trajectories $x_{0:L}=x_0x_1\cdots x_{L-1}$ - trajectories where \begin{equation}
P(x_{0:L}) \le \Delta = 2^{-\alpha L}
\end{equation}
for some $\alpha \geq H(\mathfrak{X})$, such that their relatively likelihood compared to typical events dies off exponentially with $L$. 

In \cref{app:ratio}, we show that this also setting in which we can determine exact - and often robust - quantum scaling advantage. Specifically, existing results in classical rare event generation tell us that such the number of trajectories such that $P(x_{0:L} = \Delta)$ generally scales as $2^{s_\Delta L}$ for some $s_\Delta > 0$~\cite{aghamohammadi2018extreme}. We can then prove that  the total probability mass of the rare events $p_{\mathrm{rare}}$, scales as $\sim \Delta^{1-\mu}$, where $\mu = s_\Delta/\alpha$. This algebraic relationship reduces the general quantum complexity of our protocol to $\tilde{\mathcal{O}}(\Delta^{(\mu -2)/2})$. 

Near full quadratic quantum advantage can be achieved when $\mu$ is near to $1$. This bound can be saturated. A particularly simple case is that of throwing $L \gg 1$ coins with some bias $p \ll 1$. In this setting, $L$ length trajectories in the typical set have probabilities scaling as $2^{-h_pL}$ where $h_p \ll 1$ is the binary entropy \footnote{That is $h_p = - p \log p - (1-p) \log (1-p)$}. We then set rare-events threshold $\Delta = 2^{-L}$. In this setting, $s_\Delta = \alpha = 1$, and thus $\mu = 1$.

\paragraph*{Numerical results.} As an illustration, we apply our algorithm to sample rare events from the thermal distribution of a Dyson--Ising spin chain with finite interaction range $\chi$. Its Hamiltonian is
\begin{equation}
    \mathcal{H}_\chi = - \sum_i \sum_{k=1}^{\chi} J_k s_i s_{i+k},
\end{equation}
where $s_i \in \{+1,-1\}$ denotes the spin at site $i$, $J_k$ is the coupling at separation $k$, and $J_k=0$ for $k>\chi$.
When considering the spins sequentially along the chain, the thermal distribution at each temperature $T$ induces a stochastic process over spin sequences. For a finite interaction range $\chi$, the conditional distribution of the next spin depends only on the previous $\chi$ spins, so the resulting process has Markov order $\chi$.

We set $\alpha=2H(\mathfrak{X})$, $\chi=3$, $J_k=2^{-k}$, $T=0.8$. We then apply our algorithm for a target $\Delta \approx 4.32 \times 10^{-3}$. In this setting, we find $\mu \approx 0.866$, signaling almost full quadratic quantum advantage. The results are shown in \cref{fig:Dyson}. 
The left panel shows that after $600$ applications of the quantum sampler, the algorithm selectively amplifies the rare events. The right panel plots the total variation distance (TVD) as a function of the application count of the quantum sampler. The TVD of the quantum algorithm decreases smoothly and more rapidly than the classical baseline. The classical curve also exhibits oscillatory behavior, as rare events are observed only intermittently. Small changes in the total sample count could distort the probability outcome if the number of rare event samples does not change. 

\section{Discussion}

We have developed a quantum algorithm for discovering and sampling rare events of a given distribution $P$, defined as those occurring with probability below a threshold $\Delta$. The algorithm prepares a quantum state superposing all such rare events weighted by their likelihood, requiring a general query complexity of $\tilde{\mathcal{O}}(1/\sqrt{p_{\mathrm{rare}}\Delta})$. When the aggregate rare-event mass $p_{\mathrm{rare}}$ is non-vanishing, this complexity saturates the optimal $\Theta(1/\sqrt{\Delta})$ scaling limit. For heavy-tailed distributions, this yields a quadratic speedup over all classical methods. Meanwhile, for stochastic processes, we can derive a general closed-form solution for $p_{\mathrm{rare}}$ and thus can analytically determine the anticipated quantum speed-up - which should be near quadratic in physically motivated settings

A key feature of our algorithm is that it produces a quantum rare-event state $|P_{\mathcal{R}}\rangle$, opening pathways beyond simply sampling rare events. Combined with the quantum SWAP test, for instance, this enables efficient measurement of whether the rare-event profiles of two distributions closely match, a quantity of significant interest in extreme value theory~\cite{blanchet2020distributionally}, financial tail risk analysis~\cite{bollerslev2013jump}, and AI safety~\cite{amodei2016concreteproblemsaisafety}.
More broadly, the ability to coherently manipulate rare-event states may serve as a primitive for quantum machine-learning methods that focus on distributional tails.

Realizing these theoretical benefits in practice requires coherent access to a quantum sampler and fault-tolerant quantum primitives. Despite these demanding hardware requirements, our algorithm interfaces naturally with existing quantum simulation subroutines. For stochastic processes in particular, recurrent quantum circuits provide systematic methods to construct the required quantum samplers~\cite{felix2018Practical,yang2018matrix,aharonov2003adiabatic, temme2025quantizedmarkov, layden2025wavefunction}, and these circuits can require drastically less memory than their classical counterparts~\cite{Yang2025Dimension}. Indeed, such memory advantages have already been demonstrated in the context of rare-event sampling for high-temperature spin chains~\cite{aghamohammadi2018extreme}. By leveraging these constructions as subroutines, our framework enables simultaneous memory and computational advantages for rare-event analysis of complex systems. Ultimately, our work demonstrates that a rigorous, optimal quantum advantage in rare-event analysis persists even when the target events cannot be identified in advance.

\section{Methods}
\label{methods}

\subsection{Classical rare event sampling}
We briefly describe the steps in the classical algorithm.
We need to estimate and construct a set of rare events $\mathcal{R}$ with the given sampling access.
There are two ways to achieve this: one is to utilize the Monte Carlo estimation with precision $\mathcal{O}(\Delta)$, and the other is to learn the distribution with $\ell_{\infty}$-distance $\mathcal{O}(\Delta)$.
The complexity of the Monte Carlo method is $\mathcal{O}(\Delta^{-1}\log N)$ with a constant success probability.
As shown in Ref.~\cite{Waggoner15}, the complexity of $\ell_{\infty}$-tomography is $\Theta(\Delta^{-2})$.
Therefore, this step requires $\mathcal{O}(\min\{\Delta^{-1}\log N, \Delta^{-2}\})$ samples.
Based on this constructed set $\mathcal{R}$, we further implement rejection sampling to achieve the sampler. For each sample $x_i$ obtained from the distribution, we check whether $x_i \in \mathcal{R}$ or not. If yes, it is our output of the rare event. Otherwise, we keep resampling. This step takes $\mathcal{O}(1/p_{\mathrm{rare}})$ samples to give a single rare event.

\subsection{Quantum rare event sampling}

Here, we provide more details about the probability thresholding and spectral embedding components of our quantum algorithm.

\subparagraph*{Probability thresholding.}
The construction requires classifying each basis state $\ket{x_i}$ as rare or non-rare based on its probability $P(x_i)$. This is equivalent to applying a filtering function:
\begin{equation*}
    \Pi_\mathcal{R} = \sum_{x_i} f\left(\sqrt{P(x_i)}\right) \ket{x_i}\bra{x_i},
\end{equation*}
where $f$ is a Heaviside step function satisfying $f(p) = 1$ if $p \leq \sqrt{\Delta}$ and $f(p) = 0$ otherwise. To implement this filtering, we observe that the amplitudes $\sqrt{P(x_i)}$ of $\ket{P}$ are precisely the eigenvalues of the Hermitian operator
\begin{equation*}
    H = \sum_{x_i} \sqrt{P(x_i)} \ket{x_i}\bra{x_i}.
\end{equation*}
The problem thus reduces to applying $f$ to the eigenvalues of $H$, a task known as eigenvalue transformation.

To implement this eigenvalue transformation on a quantum computer, we use quantum signal processing \cite{low2017optimal, gilyen2019quantum}, which applies polynomial functions to eigenvalues. We therefore approximate the Heaviside function with a polynomial. This introduces a finite transition width around the threshold $\sqrt{\Delta}$, which is a fundamental limitation since polynomials cannot implement perfectly sharp transitions. The transition width corresponds to the ambiguous set $\mathcal{S}_{\Delta}$. A polynomial of degree $d = \mathcal{O}(1/\sqrt{\Delta} \log(1/\epsilon))$ achieves approximation error $\epsilon$. Using quantum signal processing techniques, $U_{\Pi_\mathcal{R}}$ can then be constructed with $\mathcal{O}(d)$ applications of a unitary $U_H$ that embeds $H$.

\subparagraph*{Spectral embedding.}
It remains to construct $U_H$ from the quantum sampler $U_P$. Noticing that the spectral information of $H$ is already encoded in $U_P$: the first column of $U_P$ contains the amplitudes $\sqrt{P(x_i)}$, which are exactly the eigenvalues of $H$:
\begin{equation*}
    \bra{x_i}U_P\ket{0} = \sqrt{P(x_i)} = \lambda_i(H).
\end{equation*}
The challenge is to convert this amplitude information into an eigenvalue encoding coherently. Using nonlinear amplitude transformation techniques~\cite{guo2024nonlinear, rattew2023nonlinear}, which leverage quantum singular value transformation and coherent amplitude manipulation, we construct $U_H$ with only $\mathcal{O}(1)$ applications of $U_P$ and $U_P^\dagger$.

\subsection{Power-law tails}
\label{sec:methods-power-law}

We illustrate the role of the aggregate rare-event mass using rank-frequency
power-law distributions. Consider
\begin{equation*}
    P(x_k)=\frac{k^{-\gamma}}{Z_{N,\gamma}},
\end{equation*}
where $x_k$ is the $k$-th most likely event, $\gamma>0$, and $Z_{N,\gamma}=\sum_{j=1}^{N}j^{-\gamma}$. 
Since
$P(x_k)$ decreases with $k$, the rare-event set $R_\Delta=\{x_k:P(x_k)\le \Delta\}$
is a tail in rank space. Its total probability mass is
\begin{equation*}
    p_{\rm rare}
    =
    \frac{1}{Z_{N,\gamma}}
    \sum_{k=k_\Delta}^{N} k^{-\gamma},
\end{equation*}
where $k_\Delta:=\min\{k:P(x_k)\le \Delta\}$.

To expose the dependence on the support size, we set
\begin{equation*}
    N(\Delta)=\Theta(\Delta^{-q}),
\end{equation*}
and restrict to the nontrivial regime $P(x_N)\le \Delta < P(x_1)$
so that rare events exist, but not all events are rare.

As shown in \cref{app:power_law_scaling}, the rare-tail mass changes behavior at
$\gamma=1$. For broad tails with $0<\gamma<1$, the nontrivial
support-growth regime is $1\leq q< 1/(1-\gamma)$.
In this regime, $p_{\rm rare}=\Omega(1)$.
The quantum algorithm reaches
\begin{equation*}
    S_{\rm Q}(\Delta)=\mathcal{O}(\Delta^{-1/2}).
\end{equation*}
For polynomial-size support, the classical identification cost scales as
\begin{equation*}
    S_{\rm C}(\Delta)=O \left(\Delta^{-1}\log(1/\Delta)\right).
\end{equation*}
Thus broad power-law tails yield a quadratic leading-power improvement.

At the critical exponent $\gamma=1$, the same ideal quantum scaling holds
when the support grows faster than $1/\Delta$, namely when $q>1$. In that
case,
\begin{equation*}
    p_{\rm rare}\to 1-\frac{1}{q},
\end{equation*}
and again $S_{\rm Q}(\Delta)=\mathcal{O}(\Delta^{-1/2})$.
At the boundary $q=1$, the rare mass decays only logarithmically,
so the scaling is only logarithmically worse than the ideal case.

For steeper power laws with $\gamma>1$, rare events exist when
$q\ge 1/\gamma$. In this regime, $p_{\rm rare}
    =
    \Theta \left(\Delta^{(\gamma-1)/\gamma}\right),$
and hence
\begin{equation*}
    S_{\rm Q}(\Delta)
    =
    \mathcal{O} \left(\Delta^{-1+1/(2\gamma)}\right).
\end{equation*}
The algorithm still gives a polynomial improvement over the classical
baseline, but the speedup is weaker than the ideal quadratic regime because
the rare tail itself has vanishing total probability mass.

\begin{acknowledgments}
    NG, JT, PR, and MG are supported by the National Research Foundation, Singapore, through the National Quantum Office, hosted in A*STAR, under its Centre for Quantum Technologies Funding Initiative (S24Q2d0009).
    NG also acknowledges support through the Research Excellence Scholarship from SandboxAQ.
    PWH acknowledges support from the Engineering and Physical Sciences Research Council (EPSRC) Doctoral Training Partnership (EP/W524311/1), with a CASE Conversion Studentship in collaboration with Quantum Motion. PWH further acknowledges support from the Ministry of Education, Taiwan, for a Government Scholarship to Study Abroad (GSSA) and St. Catherine’s College, University of Oxford, for an Alan Tayler Scholarship. QW acknowledges the support of startup funding from Shanghai Jiao Tong University and the Engineering and Physical Sciences Research Council under grant \mbox{EP/X026167/1}. 
    MG acknowledges the support of the National Research Foundation of Singapore through the NRF Investigatorship Program (Award No. NRFNRFI09-0010). 
    CY is funded by Schmidt Sciences, LLC.
    CQT acknowledges funding from OCBC via a joint NUS-OCBC research project.
\end{acknowledgments}

\section*{Code and data availability}
The source code and data for numerics can be found at \url{https://github.com/georgepwhuang/rare_event}.

\section*{Author contributions}
NG and CY conceived the original idea. PWH conducted the numerical experiments. QW provided the proof idea for the lower bounds. 
JT and MG provided valuable feedback on the manuscript. PR provided input on the theory. All authors contributed to the theoretical analysis and the writing of the paper.

\section*{Competing interests}
All authors are inventors on a patent application related to this work.

\bibliography{main}
\clearpage

\appendix
\setcounter{theorem}{0}
\setcounter{lemma}{0}
\setcounter{corollary}{0}
\setcounter{figure}{0}
\setcounter{table}{0}
\renewcommand{\theequation}{\thesection.\arabic{equation}}
\renewcommand{\thetheorem}{S\arabic{theorem}}
\renewcommand{\theHtheorem}{S\arabic{theorem}}
\renewcommand{\thelemma}{S\arabic{lemma}}
\renewcommand{\theHlemma}{S\arabic{lemma}}
\renewcommand{\thecorollary}{S\arabic{corollary}}
\renewcommand{\theHcorollary}{S\arabic{corollary}}
\renewcommand{\thefigure}{S\arabic{figure}}
\renewcommand{\theHfigure}{S\arabic{figure}}
\renewcommand{\theproblem}{S\arabic{problem}}
\renewcommand{\theHproblem}{S\arabic{problem}}
\renewcommand{\thetable}{S\Roman{table}}
\renewcommand{\theHtable}{S\Roman{table}}
\renewcommand{\thedefinition}{S\arabic{definition}}
\renewcommand{\theHdefinition}{S\arabic{definition}}
\onecolumngrid
\begin{center}
    \noindent{\large\bfseries Appendices for ``\textsl{Quantum enhanced rare event discovery and sampling}''}
\end{center}
\appendixtableofcontents

\section{Notations}

In this appendix, we collect the notation used throughout the paper.
When we need to distinguish them, we reserve $\Omega$ for a generic finite sample space and $\mathcal{X}$ for the alphabet of a stochastic process.
A generic finite distribution is written as $P,Q:\Omega\to[0,1]$, with $\Omega=\{x_1,\dots,x_N\}$ and
\begin{equation}
    \sum_{x\in\Omega} P(x)=1, \qquad \sum_{x\in\Omega} Q(x)=1.
\end{equation}
We write $\mathbb{E}[\cdot]$ for expectation.

The total variation distance (TVD) between two probability distributions $P$ and $Q$ on the same sample space is
\begin{equation}
    D_{\mathrm{tv}}(P,Q)\coloneqq \frac12 \sum_{x\in\Omega} |P(x)-Q(x)|.
\end{equation}

For quantum states we use Dirac notation: $\ket{\psi}$ denotes a state vector, $\bra{\psi}$ its conjugate transpose, and $\langle \psi|\phi\rangle$ the inner product between $\ket{\psi}$ and $\ket{\phi}$.
A unitary $U$ satisfies $U^\dagger U = U U^\dagger = I$.
For vectors, $\|\cdot\|_2$ denotes the Euclidean norm; for matrices, $\|\cdot\|$ denotes the operator norm.

For any subset $A\subseteq \Omega$, we write
\begin{equation}
    \Pi_A \coloneqq \sum_{x\in A} \ket{x}\bra{x}
\end{equation}
for the projector onto the computational-basis subspace indexed by $A$.

Given a probability distribution $P$ over $\Omega=\{x_1,\dots,x_N\}$, a quantum sampler for $P$ is a unitary $U_P$ satisfying
\begin{equation}
    U_P\ket{0} = \sum_{i=1}^N \sqrt{P(x_i)} \ket{x_i}.
\end{equation}

We use standard asymptotic notation $\mathcal{O}(\cdot)$, $\Omega(\cdot)$, and $\Theta(\cdot)$.
We write $\widetilde{\mathcal O}(\cdot)$ to suppress polylogarithmic factors.

We denote a discrete-time stochastic process by
\begin{equation}
    \mathfrak X = \{X_t\}_{t\in\mathbb Z},
\end{equation}
where each $X_t$ takes values in a finite alphabet $\mathcal X$ of size $|\mathcal X|<\infty$.

\section{Problem definition of rare event sampling}

In this section, we formally define the rare event sampling problem.
We consider a general setting where two probability distributions, $P$ and $Q$, are defined over the same sample space.
One may understand the problem as ``Give me samples from $Q$, but only if they are unlikely under $P$."
We start by formally defining the access model.

\begin{definition}[Classical and quantum sampler]
    Let $P$ be a probability distribution over the sample space of $N$ elements $\{x_1,x_2\dots, x_N\}$, where $P(x_i)$ denotes the probability of $x_i$.
    We say we have access to a (classical) sampler for $P$ if we can sample $x_i$ according to the distribution $P$.
    We define a unitary operator $U_P$ as a quantum sampler for $P$ if
    \begin{align}
        U_P \ket{0}=\sum_{i=1}^N \sqrt{P(x_i)}\ket{x_i}.
    \end{align}
\end{definition}

Rare events are defined with respect to the distribution $P$ as elements with probabilities smaller than the threshold $\Delta>0$.
Our goal is to generate samples of these events according to the distribution $Q$.
This distinction is important for applications involving a change of measure.
For instance, if $Q$ is the uniform distribution, we sample rare events uniformly; if $Q=P$, we sample them proportional to their original probabilities.

To facilitate the lower bound proof, we distinguish between ``yes" and ``no" instances, where the ``no" instance corresponds to the case where the total probability of rare events is negligible. We introduce parameters $p_1$ and $p_2$ to separate these cases. For simplicity in later sections, we will implicitly assume $p_2 = 0$ and fix $p_1$ to a suitable value without loss of generality.

Furthermore, we define the ambiguous region as the set of events with probabilities in the interval $[\Delta, 3\Delta/2]$. We note that the upper bound $3\Delta/2$ is chosen for concreteness; this can be generalized to $(1+\alpha)\Delta$ for any constant $\alpha>0$ without affecting the validity of our results.

\begin{problem}[General rare event sampling\label{prob.rare.general}]
Assume we are given access to samplers for probability distributions $P$ and $Q$ over the same sample space of $N$ events $\{x_i\}$, respectively.
Let $\Delta>0$ and $\epsilon > 0$ be given parameters.
We define the set of rare events as $\mathcal{R}=\{x_i:P(x_i)\leq \Delta\}$ and their total probability under $Q$ as $p_{\mathrm{rare}}=\sum_{x_i\in \mathcal{R}}Q(x_i)$.
Additionally, we define the boundary region $\mathcal{S}_{\Delta}=\{x_i:\Delta< P(x_i) < 3\Delta/2\}$ and its mass $q_{\Delta}=\sum_{x_i \in \mathcal{S}_{\Delta}} Q(x_i)$.
Given $p_1,p_2\geq 0$ such that $p_1-p_2>\epsilon+q_{\Delta}$, the task is to perform the following with success probability at least $2/3$:

\begin{enumerate}
    \item If $p_{\mathrm{rare}} \geq p_1$, then construct a sampler for a distribution that is $\Theta(\epsilon+\zeta)$-close to $Q_{\mathcal{R}}$ in total variation distance, where $\zeta=q_{\Delta}/p_{\mathrm{rare}}$ and
    \begin{equation}
         Q_{\mathcal{R}}= \begin{cases}
            Q(x_{i}) /{p_{\mathrm{rare}}}, & x_{i}\in \mathcal{R},\\
            0, & x_{i}\notin \mathcal{R}.
        \end{cases}
    \end{equation} 
    \item If $p_{\mathrm{rare}} \leq p_2$, then output ``Impossible".
\end{enumerate}
\end{problem}

The rare-event sampling problem considered in the main text is recovered as a special case of \cref{prob.rare.general}.
In the generalized formulation, the distribution $P$ defines which events are rare, while the distribution $Q$ defines how these rare events should be sampled.
The main text focuses on the natural self-sampling case, where the same distribution plays both roles:
$Q=P$.
Under this specialization, the rare-event set becomes
\begin{align}\label{eq.rare.set}
    \mathcal R_\Delta
    :=
    \{x:0<P(x)\le \Delta\},
\end{align}
and its total probability mass is $p_{\mathrm{rare}}
    :=
    P(\mathcal R_\Delta)
    =
    \sum_{x\in\mathcal R_\Delta}P(x)$.
The target distribution $Q_{\mathcal R}$ in \cref{prob.rare.general} then reduces to the conditional rare-event distribution
\begin{align}
    P_{\mathcal R}(x)
    =
    \begin{cases}
        P(x)/p_{\mathrm{rare}}, & x\in\mathcal R_\Delta,\\
        0, & x\notin\mathcal R_\Delta.
    \end{cases}
\end{align}
Similarly, the ambiguous region becomes $\mathcal S_\Delta
    :=
    \{x:\Delta<P(x)\le 3\Delta/2\}$,
with mass $    p_\Delta
    :=
    P(\mathcal S_\Delta)$,
so that the unavoidable near-threshold error parameter is $\zeta
    :=p_\Delta/p_{\mathrm{rare}}.$

The parameters $p_1$ and $p_2$ are promise parameters used only in the generalized decision version of the problem.
They allow the algorithm to distinguish between the case where rare events have sufficient total mass and the case where rare-event sampling is impossible.
The main-text problem is the sampling-only, yes-instance regime, where the rare set is assumed to have nonzero mass.
Equivalently, if one wants to embed the main-text problem into \cref{prob.rare.general}, one may take $p_2=0$, $p_1=p_{\min}\leq p_{\mathrm{rare}}$, where $p_{\min}>0$ is a promised lower bound.
The additional promise condition in \cref{prob.rare.general}, $p_1-p_2>\epsilon+q_\Delta$,
is needed only if the algorithm is required to certify the ``Impossible'' case.
When the ``Impossible'' branch is omitted, as in the main text, the algorithm and its complexity are naturally stated directly in terms of the actual rare-event mass $p_{\mathrm{rare}}$.

\begin{problem}[Rare event sampling]\label{prob.rare.specific}
    Let $P$ be a probability distribution over a finite sample space
    $\Omega=\{x_1,\ldots,x_N\}$, and let $\Delta>0$ be a probability
    threshold. Define the rare-event set as \cref{eq.rare.set}, and we assume that the rare-event mass is nonzero $p_{\mathrm{rare}}>0$.
    The target rare-event distribution is the conditional distribution
    \begin{align}
        P_{\mathcal R}(x)
        :=
        \begin{cases}
            P(x)/p_{\mathrm{rare}}, & x\in\mathcal R_\Delta,\\
            0, & x\notin\mathcal R_\Delta.
        \end{cases}
    \end{align}
    Define the ambiguous region $\mathcal S_\Delta
        :=
        \{x\in\Omega:\Delta<P(x)\le (1+\alpha)\Delta\}$,
    with $\alpha>0$ a constant and total mass $p_\Delta
        :=
        P(\mathcal S_\Delta)$.
    We write $\zeta
        :=
        p_\Delta/p_{\mathrm{rare}}$
    for the relative near-threshold ambiguity.
    Given a precision parameter $\epsilon>0$, the task is to construct a
    sampler whose output distribution $\widetilde P_{\mathcal R}$ satisfies $        D_{\mathrm{tv}}
        \left(
            \widetilde P_{\mathcal R},
            P_{\mathcal R}
        \right)
        =
        \mathcal O(\epsilon+\zeta)$.
\end{problem}

\section{Preliminary to quantum linear algebra}

\label{app: amplitude block encoding}

In this section, we introduce quantum linear algebra, including block encoding, amplitude encoding, and quantum singular value transformation.

Block encoding is an encoding scheme that embeds a general matrix $A$ into a block of a unitary matrix.
Without loss of generality, we focus on the top left block of the unitary matrix $U$, that is, $(\bra{0^a}\otimes I_n) U (\ket{0^a}\otimes I_n)$.
Here, $a$ is the number of required ancillary qubits.

\begin{definition}[Block encoding~\cite{chakraborty2019power, gilyen2019quantum}\label{def.blockencoding}]
    We say a unitary $U_A$ is an $(\alpha_{\mathrm{BE}},a,\epsilon)$-encoding of matrix $A\in \mathbb{C}^{2^n\times 2^n}$ if
    \begin{align}
        \|A-\alpha_{\mathrm{BE}} (\bra{0^a}\otimes I_n)U_A(\ket{0^a}\otimes I_n)\|\leq \epsilon.
    \end{align}
\end{definition}

In our work, we focus on a specific type of block encoding, so-called the amplitude block encoding, the unitary matrix containing a diagonal matrix $A$ filled with the amplitude of a given quantum state, namely, $A = \diag(\psi_1, \dots, \psi_{N})$.
Given access to a quantum circuit $U_{\psi}$ that prepares the quantum state $\ket{\psi}=\sum_{j=1}^{N} \psi_j \ket{j}$, the amplitude block encoding can be efficiently constructed. 
Since we focus on the stochastic process, without loss of generality, we consider amplitudes as real values.

\begin{theorem}[Amplitude block encoding \cite{guo2024nonlinear, rattew2023nonlinear}\label{thm.amplitudes.encoding}]
    Given access to a $n$-qubit quantum circuit $U_{\psi}$ that prepares an $n$-qubit state $U_{\psi}:\ket{0}\rightarrow\ket{\psi}=\sum_{j=1}^{N} \psi_j \ket{j}$, where $\{\psi_j\}$ are real and $\norm{\psi}_2=1$,
    one can construct an $(1,n+2,0)$-encoding of the diagonal matrix $A=\diag(\psi_1, \dots, \psi_{N})$ with $\mathcal{O}(n)$ circuit depth and $\mathcal{O}(1)$ queries to controlled-$U_\psi$ and controlled-$U_\psi^\dagger$.
\end{theorem}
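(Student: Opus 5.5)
Our plan is to build the encoding from two pieces: a block encoding of a rank-one projector-like object via $U_\psi$, and a diagonal "copy" operation that correlates an ancilla register with the computational-basis index. The target is
\begin{equation*}
(\bra{0^{n+2}}\otimes I_n)\,U_A\,(\ket{0^{n+2}}\otimes I_n)=\sum_j \psi_j\ket{j}\bra{j}.
\end{equation*}
We use an $n$-qubit ancilla register $\mathsf{B}$ together with the system register $\mathsf{S}$. First we apply $n$ CNOTs from $\mathsf{S}$ to $\mathsf{B}$, which maps $\ket{0^n}_{\mathsf B}\ket{j}_{\mathsf S}\mapsto\ket{j}_{\mathsf B}\ket{j}_{\mathsf S}$ at depth $\mathcal{O}(1)$ (or $\mathcal{O}(n)$ if done sequentially). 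Then we apply $U_\psi^\dagger$ on $\mathsf{B}$. Finally we project $\mathsf{B}$ onto $\ket{0^n}$.

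For this to be a valid block encoding, the resulting operator must be exactly $\diag(\psi_j)$ on $\mathsf{S}$. We compute
\begin{equation*}
\bra{0^n}_{\mathsf B}U_\psi^\dagger\ket{j}_{\mathsf B}=\overline{\bra{j}U_\psi\ket{0}}=\overline{\psi_j}=\psi_j,
\end{equation*}
using that the amplitudes are real. So the block is $\sum_j\psi_j\ket{j}\bra{j}$ with no error and normalization $1$. This would already give a $(1,n,0)$-encoding with a single query to $U_\psi^\dagger$.

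The stated $n+2$ ancillas and the use of \emph{controlled} queries are then accommodated in one of two ways. Either we pad with two idle qubits, which is harmless since an $(1,n,0)$-encoding is trivially an $(1,n+2,0)$-encoding. Or, more likely in the cited construction, we use the extra qubits to produce a Hermitian/symmetric form suitable for QSVT. For the latter we would combine the above unitary $W$ with its adjoint via a one-qubit LCU (Hadamard ancilla, controlled-$W$, controlled-$W^\dagger$) to encode $\tfrac12(A+A^\dagger)=A$. The symmetrization keeps normalization $1$ because $A$ is already real diagonal. This uses $\mathcal{O}(1)$ controlled queries, and the second extra qubit can serve as a flag.

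The depth bound follows because the CNOT fan-out costs $\mathcal{O}(n)$ and the controlled LCU adds $\mathcal{O}(1)$ layers plus the query circuits themselves.

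The main point to check carefully is that the projection on $\mathsf{B}$ really kills all cross terms. Specifically, the block is $\bra{0^n}_{\mathsf B}\bra{k}_{\mathsf S}\,U_\psi^\dagger\,\mathrm{CNOT}\,\ket{0^n}_{\mathsf B}\ket{j}_{\mathsf S}=\delta_{jk}\psi_j$. This holds because $\mathsf{S}$ is left untouched after the CNOTs, so off-diagonal entries vanish exactly. The other subtle point is the realness assumption, which is needed to avoid obtaining $\overline{\psi_j}$ in place of $\psi_j$ and which is used explicitly in the computation above.
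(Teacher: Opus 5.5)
Your proof is correct, and it proves slightly more than the statement claims. The unitary $W=(U_\psi^\dagger\otimes I_n)\,\mathrm{COPY}$, with $\mathrm{COPY}\colon\ket{0^n}\ket{j}\mapsto\ket{j}\ket{j}$, has top-left block exactly $\sum_j\overline{\psi_j}\,\ket{j}\bra{j}=A$. So it is already a $(1,n,0)$-encoding, using one uncontrolled query to $U_\psi^\dagger$ and constant depth apart from that query. Padding to $n+2$ ancillas is harmless. If only controlled access is available, one spare ancilla, conjugated by $X$, can serve as the control qubit without changing the block.

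The paper does not prove this theorem itself; it cites it. Its only comment, in the numerics appendix, is that the cited $\mathrm{PREP}^\dagger\cdot\mathrm{SEL}\cdot\mathrm{PREP}$ construction yields a Grover-like reflection, i.e.\ a \emph{Hermitian} encoding unitary. This is convenient for the reflection-convention QSVT in their simulations, where their surrogate $U(D)$ is likewise a reflection. That Hermiticity is presumably what the extra ancillas and the controlled queries to both $U_\psi$ and $U_\psi^\dagger$ buy.

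Hermiticity of the unitary is not needed for the theorem, nor for the paper's QSVT lemma. That lemma only requires the encoded matrix to be Hermitian, and $A$ already is for real amplitudes. Your symmetrization step therefore leaves the block unchanged, and does not make it any more ``suitable for QSVT.''

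If you do want a reflection, note that your LCU as described (Hadamard, controlled-$W$ on one branch, controlled-$W^\dagger$ on the other, Hadamard) has the right block $\tfrac12(A+A^\dagger)$ but is not itself Hermitian. Using the off-diagonal select $\ket{0}\bra{1}\otimes W+\ket{1}\bra{0}\otimes W^\dagger$ instead, i.e.\ one extra $X$ on the LCU qubit, makes the whole circuit a Hermitian unitary with the same block.
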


Once given a block encoding $U_A$ of a Hermitian matrix $A$, we can construct a unitary operator that is a block encoding of applying certain polynomials $g$ to the matrix $A$.

\begin{equation}
    \text{QSVT}(\Phi, U_A) = \begin{bmatrix}
        g(A)  & \cdot \\
        \cdot & \cdot
    \end{bmatrix}
\end{equation}
where $\Phi$ is a set of phase angles required to implement the polynomial $g$. In general, that is, when $A$ is not a Hermitian matrix, $g(A)$ is a matrix that is obtained by applying the polynomial $g$ to the singular values of $A$. The circuit depth scales with the degree of the polynomial. More formally, we have the following result.

\begin{lemma}[Polynomial eigenvalue transformation \cite{gilyen2019quantum}\label{theorem.qsvt}]
    Given $U$ that is an $(\alpha,a,\epsilon)$-encoding of a Hermitian matrix $A$, and a real $d$-degree polynomial $g(x)$ with $|g(x)|\leq \frac{1}{2}$ for $x\in [-1,1]$. Let $\delta>0$, one can prepare a $(1,a+n+4,4d\sqrt{\epsilon/\alpha}+\delta)$-encoding of $g(A/\alpha)$ by using $\mathcal{O}(d)$ queries to $U_A$ and $\mathcal{O}(d(a+1))$ one- and two-qubit quantum gates.
    The description of the quantum circuit can be computed classically in time $\mathcal{O}(\mathrm{poly}(d,\log(1/\delta)))$.
\end{lemma}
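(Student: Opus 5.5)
The statement is Lemma~\ref{theorem.qsvt}, the polynomial eigenvalue transformation with an imperfect block encoding. The plan has three stages: first treat an exact encoding of the normalized matrix, then track how the block-encoding error propagates, and finally account for classical angle finding.

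\emph{Exact case.} Write $\tilde A \coloneqq \alpha(\bra{0^a}\otimes I)U(\ket{0^a}\otimes I)$, so that $U$ is an exact $(\alpha,a,0)$-encoding of $\tilde A$ and $\|A-\tilde A\|\le\epsilon$. For $\tilde A/\alpha$ we invoke quantum signal processing / QSVT in the form of~\cite{gilyen2019quantum}. A real polynomial $g$ of degree $d$ with $|g|\le 1/2$ on $[-1,1]$ has no definite parity, so we split it as $g=g_{\mathrm{even}}+g_{\mathrm{odd}}$. Each part is bounded by $1/2$ on $[-1,1]$, since $g_{\mathrm{even/odd}}(x)=(g(x)\pm g(-x))/2$.

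For each parity part, the QSP completion theorem gives phase sequences $\Phi_{\mathrm{even}}$ and $\Phi_{\mathrm{odd}}$. The corresponding alternating-phase circuits, each using $d$ applications of $U$ and $U^\dagger$ together with controlled reflections about $\ket{0^a}$, block encode $g_{\mathrm{even/odd}}(\tilde A/\alpha)$ via the real part of the QSP polynomial. Taking the real part costs one ancilla through the usual Hadamard-sandwich (LCU) trick. Adding the two parities costs one more ancilla and a subnormalization factor $1/2$ per term, which the bound $|g|\le 1/2$ is designed to absorb. With one extra qubit to control the reflections, these pieces give the stated count of $\mathcal{O}(d)$ queries, $\mathcal{O}(d(a+1))$ gates, and a constant number of extra ancillas; we will not optimize the constant $a+n+4$ beyond checking that it suffices. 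The resulting unitary is an exact $(1,\cdot,0)$-encoding of $g(\tilde A/\alpha)$.

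\emph{Robustness (main obstacle).} We must bound $\|g(A/\alpha)-g(\tilde A/\alpha)\|\le 4d\sqrt{\epsilon/\alpha}$. Note that $\tilde A$ need not be Hermitian, so we work with singular value transformation and use the robustness lemma of~\cite{gilyen2019quantum}. For a polynomial bounded on $[-1,1]$, that lemma bounds $\|g(X)-g(Y)\|$ by $4d\sqrt{\|X-Y\|}$ for contractions $X$ and $Y$. Its proof is the step I expect to be hardest. Because $X,Y$ are contractions, Markov's inequality cannot be applied directly, and the route instead goes through the substitution $x=\cos\theta$ and the Lipschitz continuity of the Chebyshev/QSP unitaries in the rotation angle. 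Here $\|X-Y\|$ translates to an angle perturbation of order $\sqrt{\|X-Y\|}$, and the $d$ steps accumulate linearly. Applying the lemma with $X=A/\alpha$, $Y=\tilde A/\alpha$ and $\|X-Y\|\le\epsilon/\alpha$ gives $4d\sqrt{\epsilon/\alpha}$. I would cite this lemma rather than reprove it.

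\emph{Classical precomputation.} Phase angles are computed to precision that yields a polynomial $\delta$-close to $g$ in sup norm, in time $\mathrm{poly}(d,\log(1/\delta))$, using the angle-finding algorithms cited in~\cite{gilyen2019quantum}. Summing this with the robustness error via the triangle inequality gives total error $4d\sqrt{\epsilon/\alpha}+\delta$, which completes the lemma.
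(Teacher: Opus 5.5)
Your construction is the standard argument behind the result that the paper imports from \cite{gilyen2019quantum} without proof. You split $g$ by parity and realize $2g_{\mathrm{even}}$ and $2g_{\mathrm{odd}}$ (each bounded by $1$) as real parts of QSP polynomials via the Hadamard sandwich. You then combine them by an LCU with weights $1/2$, which is exactly where $|g|\le 1/2$ is used, invoke the robustness lemma, and add the angle-finding error $\delta$ by the triangle inequality. The resource counts are fine, and your $a+3$ ancillas fit inside the stated $a+n+4$ after padding with idle qubits.

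The step that does not go through as written is applying the robustness lemma with $X=A/\alpha$. You correctly note that the lemma needs both arguments to be contractions. However, the hypotheses only give $\|\tilde A/\alpha\|\le 1$ and $\|A-\tilde A\|\le\epsilon$, hence $\|A/\alpha\|\le 1+\epsilon/\alpha$. Because $g$ is controlled only on $[-1,1]$, this is not cosmetic. Take $U=I$, $\alpha=1$, $A=(1+\epsilon)I$; this is a valid $(1,a,\epsilon)$-encoding. For $g=\tfrac12 T_d$ one has $\|g(A)\|=\tfrac12\cosh\bigl(d\,\mathrm{arccosh}(1+\epsilon)\bigr)$. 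With $\epsilon=10^{-4}$ and $d=1000$ this is about $3\times 10^{5}$, far above $1+4d\sqrt{\epsilon}+\delta=41+\delta$, so no block of a unitary can be within the claimed error.

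So the statement itself needs the extra hypothesis $\|A\|\le\alpha$. Under it, $X$ is a contraction and your argument gives exactly $4d\sqrt{\epsilon/\alpha}+\delta$. Without it, the best you can do is clip the spectrum of $A$ to $[-\alpha,\alpha]$. The clipped matrix is only $2\epsilon$-close to $\tilde A$, which costs a factor $\sqrt2$ in the robustness term. You also pay an extra $\mathcal{O}(d^2\epsilon/\alpha)$ from the growth of $g$ just outside $[-1,1]$. Together these give an $\mathcal{O}(d\sqrt{\epsilon/\alpha})$ bound only when $d\sqrt{\epsilon/\alpha}=\mathcal{O}(1)$. None of this affects the paper, which applies the lemma only to the exact $(1,n+2,0)$ amplitude encoding, where $\epsilon=0$ and $A/\alpha$ is automatically a contraction.
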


Therefore, we can implement many useful functions on matrices by finding their good polynomial approximations.
Here, ``good'' means that the degree of the polynomial scales logarithmically with the error in the interval $[-1,1]$.
In general, especially for functions that are not smooth in this interval, such polynomial approximations do not exist.
However, a good approximation may still be found on the subset of $[-1,1]$.
In the following, we list some known results for the sign and rectangle functions.

The rectangle function $f(x)$ is defined as
\begin{equation}
    f_t(x) \coloneqq \begin{cases}
        1, & x \in [-t,t]               \\
        0, & x \in [-1, -t)\cup (t, 1].
    \end{cases}
\end{equation}

\begin{lemma}[Polynomial approximation of the rectangle function \cite{gilyen2019quantum}\label{approx.rectangle}]
    For any error $\epsilon \in \left(0, 1/2\right)$, there exist an even $d$-degree polynomial $g_d \in \mathbb{R}[x]$ that approximates the rectangular function $f_t(x)$, such that $0\leq g_d(x) \leq 1$, $\forall x \in [-1, 1]$, and
    \begin{align}
        |f_t(x) -g_d(x)| \leq \epsilon, \quad \forall x \notin [-t-\Gamma, -t+ \Gamma]\cup [t - \Gamma, t + \Gamma].
    \end{align}
    where $\Gamma \in (0,1/2)$ denotes the size of ambiguity range, and the degree $d$ scales as  $\mathcal{O}\left(\frac{1}{\Gamma}\log\left(1/\epsilon\right)\right)$.
\end{lemma}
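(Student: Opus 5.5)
The plan is to build the rectangle approximation from a polynomial approximation of the sign function, which is the standard route in the quantum singular value transformation literature. The ingredient I would establish or invoke first is the following: for any $\eta\in(0,1/2)$ and $\kappa\in(0,1)$ there is an odd polynomial $s$ of degree $\mathcal{O}(\kappa^{-1}\log(1/\eta))$ with $|s(y)|\le 1$ on $[-2,2]$ and $|s(y)-\mathrm{sgn}(y)|\le\eta$ for $|y|\ge\kappa$. The route is the usual one.

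\begin{enumerate}
\item Approximate $\mathrm{sgn}(y)$ by $\mathrm{erf}(ky)$ with $k=\Theta(\kappa^{-1}\sqrt{\log(1/\eta)})$. Since $1-\mathrm{erf}(k\kappa)\le e^{-k^2\kappa^2}$, this gives error $\eta/4$ for $|y|\ge\kappa$.
\item Write $\mathrm{erf}(ky)$ as an integral of the Gaussian $e^{-k^2u^2}$.
\item Truncate the Chebyshev (Jacobi--Anger type) expansion of that Gaussian at degree $\mathcal{O}(k\sqrt{\log(1/\eta)})=\mathcal{O}(\kappa^{-1}\log(1/\eta))$, and integrate the truncation.
\item Rescale the result by $(1+\eta/4)^{-1}$ so that $|s|\le 1$ on the whole interval.
\end{enumerate}

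Because the shifted arguments $x\pm t$ range over $[-2,2]$, I would carry out this construction on $[-2,2]$, i.e.\ apply it to $y/2$. This only changes constants.

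Next I set
\[
h(x)=\tfrac12\bigl(s(x+t)-s(x-t)\bigr),
\]
with $\kappa=\Gamma$ and $\eta=\epsilon/2$. Outside the two windows $[\pm t-\Gamma,\pm t+\Gamma]$, both shifted sign terms are $\eta$-accurate. Hence $|h-f_t|\le\eta$ there: for $|x|<t-\Gamma$ we get $h\approx(1-(-1))/2=1$, and for $|x|>t+\Gamma$ we get $h\approx 0$. The bound $|s|\le1$ gives $|h|\le 1$ on all of $[-1,1]$. To make the polynomial even, I replace $h$ by $(h(x)+h(-x))/2$. Oddness of $s$ already makes $h$ even, so this step is only a safeguard, and it does not increase the degree.

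The remaining requirement, and the step I expect to be the main obstacle, is $0\le g_d\le 1$ on all of $[-1,1]$, including inside the ambiguity windows where the approximation gives no pointwise control. For the upper bound, $h\le1$ follows from $|s|\le 1$. For the lower bound, take $x$ in the window near $t$ (the window near $-t$ is symmetric). There $x+t\ge 2t-\Gamma$, and I would check that this is at least $\Gamma$ in the nontrivial regime $t\ge\Gamma$; the case $t<\Gamma$ is handled by shrinking $\Gamma$ by a constant factor. Then $s(x+t)\ge 1-\eta$, while $s(x-t)\le 1$, so $h\ge-\eta/2$. Away from the windows, $h\ge-\eta$ holds directly.

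I would then define $g_d=(h+\eta)/(1+\eta)$. This lies in $[0,1]$ everywhere. Outside the windows its deviation from $f_t$ is at most $\eta+2\eta/(1+\eta)\le 3\eta$, so choosing $\eta=\epsilon/3$ gives the stated error $\epsilon$. The degree remains $\mathcal{O}(\Gamma^{-1}\log(1/\epsilon))$.
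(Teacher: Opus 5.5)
The paper gives no proof of this lemma; it cites Gily\'en et al., who build the rectangle polynomial from a sign-function approximation shifted by $\pm t$. That is the construction you use. Your affine correction $g_d=(h+\eta)/(1+\eta)$ is what delivers the global bound $0\le g_d\le 1$ on $[-1,1]$ required in this statement.

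For $t\ge\Gamma$ your argument is complete. The case $0<t<\Gamma$, however, is not handled correctly. Shrinking $\Gamma$ by a constant factor cannot restore $t\ge\Gamma'$ when $t\ll\Gamma$. Moreover, the black-box properties of $s$ give no lower bound on $h$ inside the merged window. For example, $h(t)=\tfrac12 s(2t)$ with $2t<\Gamma$, and there $s$ is only known to lie in $[-1,1]$, so $g_d(t)$ could be negative.

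There are two easy repairs:
\begin{itemize}
\item If $t\le\Gamma$, every $x\in[-t,t]$ satisfies $\min(|x-t|,|x+t|)=t-|x|\le\Gamma$. So $f_t$ vanishes outside the windows, and $g_d\equiv 0$ meets every requirement.
\item Alternatively, use a property your particular $s$ has beyond the black-box ones. It is uniformly $\mathcal{O}(\eta)$-close on $[-2,2]$ to a rescaled error function, which is increasing. Since $x+t\ge x-t$, this gives $h(x)\ge-\mathcal{O}(\eta)$ for all $x$ and all $t\ge 0$, and removes the case split entirely.
\end{itemize}

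A smaller point concerns the sign lemma itself. If you bound the truncated Gaussian uniformly and then integrate, the prefactor $2k/\sqrt{\pi}$ multiplies the error. That forces Gaussian accuracy $\eta/k$ and yields only degree $\mathcal{O}\bigl(\Gamma^{-1}\sqrt{\log(1/\epsilon)\log(1/(\Gamma\epsilon))}\bigr)$, which exceeds the claimed bound when $\Gamma\ll\epsilon$. Instead, bound the integrated tail term by term using $\bigl|\int_0^y T_{2j}(u)\,du\bigr|\le 1/(2j-1)$. For $j$ beyond the truncation index, which is $\gtrsim k$, this cancels the factor $k$, as in Low and Chuang's analysis; alternatively, simply invoke their lemma.

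Finally, fix $\eta$ once: $\eta=\epsilon/3$ works, whereas you state both $\epsilon/2$ and $\epsilon/3$.
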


\section{Quantum and classical algorithm for rare event sampling\label{sec.algorithm}}

Here, we provide the classical and quantum algorithm for \cref{prob.rare.general}.
We also establish query complexity lower bounds for both cases, demonstrating that our quantum algorithm is optimal with respect to the dominant factor defining the rare event. Furthermore, we show a quadratic separation between the classical and quantum complexities.

\subsection{Classical algorithm \label{app.alg.classical}}

\begin{theorem}[Classical algorithm for general rare event sampling]\label{thm.alg.classical}
    There exists a classical algorithm that solves \cref{prob.rare.general} using
    \begin{equation}
        \mathcal{O}\left(\min\left\{ \frac{1}{\Delta}\log\left(\frac{N}{\epsilon}\right), \frac{1}{\Delta^2}\log\left(\frac{1}{\epsilon}\right) \right\} \right)
    \end{equation}
    samples from $P$ and $\mathcal{O}(\max\{1/(p_1-p_2-q_{\Delta})^2, 1/p_{\mathrm{rare}}\})$ samples from $Q$.
\end{theorem}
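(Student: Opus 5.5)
The algorithm runs in two phases: an identification phase that uses only samples from $P$ to build a candidate rare set $\widehat{\mathcal R}$, and a sampling/decision phase that uses only samples from $Q$. The goal of the first phase is to produce, with probability at least $5/6$, a set $\widehat{\mathcal R}$ satisfying
\[
\mathcal R\subseteq\widehat{\mathcal R}\subseteq\mathcal R\cup\mathcal S_\Delta .
\]
Such a set is obtained from any estimate $\widehat P$ with $|\widehat P(x)-P(x)|\le\Delta/4$ for all $x$, by declaring $x$ rare iff $\widehat P(x)\le 5\Delta/4$. Then every $x$ with $P(x)\le\Delta$ is kept, and every kept $x$ has $P(x)\le 3\Delta/2$.

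For the first phase I would use whichever of two routes is cheaper.

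\emph{(i) Monte Carlo frequencies with a union bound.} Take $m$ samples and let $\widehat P$ be the empirical frequencies. A multiplicative/Bernstein Chernoff bound gives per-element failure probability $\exp(-\Omega(m\Delta))$ for additive deviation $\Delta/4$ when $P(x)=O(\Delta)$. For elements with $P(x)\gg\Delta$, the relevant event is a lower deviation below $5\Delta/4$, which is even less likely. A union bound over $N$ elements, with failure probability $\epsilon$ (or a constant absorbed into $\epsilon$), yields $m=O(\Delta^{-1}\log(N/\epsilon))$.

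\emph{(ii) $\ell_\infty$-learning.} Use the empirical distribution with the DKW/$\ell_\infty$ tomography bound of Ref.~\cite{Waggoner15}: $O(\Delta^{-2}\log(1/\epsilon))$ samples give $\|\widehat P-P\|_\infty\le\Delta/4$ with probability $1-\epsilon$, independently of $N$. Only elements actually observed need to be stored. Every unobserved $x$ has $\widehat P(x)=0$, so it is implicitly in $\widehat{\mathcal R}$; the membership test is therefore "$x$ not seen, or seen with frequency $\le5\Delta/4$."

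In the second phase we draw $m'=O(1/(p_1-p_2-q_\Delta)^2)$ samples from $Q$ and estimate $\widehat p=Q(\widehat{\mathcal R})$ by the empirical fraction. On the good event, $p_{\mathrm{rare}}\le Q(\widehat{\mathcal R})\le p_{\mathrm{rare}}+q_\Delta$. Hoeffding with accuracy $(p_1-p_2-q_\Delta)/3$ separates the yes case ($Q(\widehat{\mathcal R})\ge p_1$) from the no case ($Q(\widehat{\mathcal R})\le p_2+q_\Delta$), using the promise $p_1-p_2>\epsilon+q_\Delta$. We output ``Impossible'' if $\widehat p$ falls below the midpoint. Otherwise the sampler performs rejection sampling: draw from $Q$ until the sample lies in $\widehat{\mathcal R}$. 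This terminates after an expected $1/Q(\widehat{\mathcal R})\le1/p_{\mathrm{rare}}$ draws, which can be truncated at $O(1/p_{\mathrm{rare}})$ by Markov. Its output is exactly $Q$ conditioned on $\widehat{\mathcal R}$.

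The error bound is
\[
D_{\mathrm{tv}}\bigl(Q_{\widehat{\mathcal R}},Q_{\mathcal R}\bigr)=\frac{Q(\widehat{\mathcal R}\setminus\mathcal R)}{Q(\widehat{\mathcal R})}\le\frac{q_\Delta}{p_{\mathrm{rare}}}=\zeta .
\]
This follows from a direct computation for nested sets: the mass removed by renormalization equals the mass added outside $\mathcal R$. The failure probability of phase one, together with the truncation of rejection sampling, contributes the additive $\epsilon$. A union bound over the phases keeps the total success probability at least $2/3$.

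I expect the first phase to be the main obstacle: getting the $\log(N/\epsilon)$ bound with only $1/\Delta$ rather than $1/\Delta^2$ dependence. This needs a relative-error (Bernstein) Chernoff bound rather than Hoeffding, with separate treatment of large-probability elements. It also needs care that elements never observed are correctly treated as rare without enumerating all $N$ of them.
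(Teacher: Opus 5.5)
Your proposal is correct and matches the paper's proof essentially step for step. Like you, the paper thresholds empirical frequencies (at $4\Delta/3$ rather than $5\Delta/4$). It uses monotonicity of the binomial tails in $P(x)$ to reduce to the boundary cases $P(x)=\Delta$ and $P(x)=3\Delta/2$, then applies a multiplicative Chernoff bound and a union bound over $N$; this is exactly the separate treatment of large-probability elements you anticipated. It also gives the $\ell_\infty$-learning alternative of Ref.~\cite{Waggoner15} with $O(\Delta^{-2}\log(1/\epsilon))$ samples. It then estimates $Q(\widehat{\mathcal R})$ with $O(1/(p_1-p_2-q_\Delta)^2)$ samples and rejection-samples, obtaining the same TVD bound $\le\zeta$ plus the failure probability.

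One caveat is cosmetic. Route (i) gives correct classification, not the uniform additive $\Delta/4$ estimate you state as your sufficient condition; that estimate would cost $\Delta^{-2}$ samples for elements with $P(x)=\Theta(1)$. Your argument only ever uses the classification guarantee, so nothing breaks.
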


\begin{proof}
We first construct a set of candidate rare events, which will be used to distinguish between the two cases and subsequently for rejection sampling.
We now show that the candidate rare set can be constructed using
$O(\Delta^{-1}\log(N/\delta))$ samples from $P$. The point is that
we do not need additive $\ell_\infty$-estimation of the whole
distribution. It suffices to perform threshold classification, allowing
arbitrary behavior in the ambiguous region.

Draw $T$ independent samples from $P$, and let
\begin{align}
    \widehat P_T(x)
    =
    \frac{1}{T}
    \sum_{t=1}^T \mathbf 1\{X_t=x\}
\end{align}
be the empirical frequency of event $x$. Define the candidate set
\begin{align}
    \widehat R
    :=
    \left\{
        x:\widehat P_T(x)\le \frac{4\Delta}{3}
    \right\}.
\end{align}

First, fix $x$ with $P(x)\le \Delta$. Let
$N_x=T\widehat P_T(x)$. Then $N_x$ is binomial with mean
$TP(x)$. Since the event
$\widehat P_T(x)>4\Delta/3$ is increasing in $P(x)$, its
probability is maximized over $P(x)\le\Delta$ at $P(x)=\Delta$.
Therefore, by the multiplicative Chernoff bound,
\begin{align}
    \Pr \left[
        \widehat P_T(x)> \frac{4\Delta}{3}
    \right]
    \le
    \Pr \left[
        \mathrm{Bin}(T,\Delta)>
        \frac{4T\Delta}{3}
    \right]
    \le
    \exp \left(-\frac{T\Delta}{27}\right).
\end{align}
Thus a truly rare event is falsely excluded with probability at most
$\exp(-T\Delta/27)$.

Second, fix $x$ with $P(x)\ge 3\Delta/2$. The event
$\widehat P_T(x)\le 4\Delta/3$ is a lower-tail event, and its
probability is maximized when $P(x)=3\Delta/2$. By the Chernoff bound, we have
\begin{align}
    \Pr \left[
        \widehat P_T(x)\le \frac{4\Delta}{3}
    \right]
    \le
    \Pr \left[
        \mathrm{Bin} \left(T,\frac{3\Delta}{2}\right)
        \le \frac{4T\Delta}{3}
    \right]\leq \exp \left(-\frac{T\Delta}{108}\right).
\end{align}
Therefore, a non-rare event with $P(x)\ge 3\Delta/2$ is
falsely included with probability at most $\exp(-T\Delta/108)$.

Taking a union bound over all $N$ events, the probability that any
decisive event is misclassified is at most
\begin{align}
    N\exp \left(-\frac{T\Delta}{27}\right)
    +
    N\exp \left(-\frac{T\Delta}{108}\right)
    \le
    2N\exp \left(-\frac{T\Delta}{108}\right).
\end{align}
Thus, choosing $    T
    =
    O \left(
        1/\Delta\log(N/\delta)
    \right)$
ensures that $R_\Delta\subseteq \widehat R\subseteq R_\Delta\cup S_\Delta$
with probability at least $1-\delta$,
where $R_\Delta=\{x:P(x)\le \Delta\}$ and $S_\Delta=\{x:\Delta<P(x)\le 3\Delta/2\}$.

An alternate approach to this is to first learn the distribution directly without the two-level sampling to precision $\Delta/6$ in the $\ell_{\infty}$ distance with probability $1 - \delta$ with sample complexity $\Theta(\frac{1}{\Delta^2}\log(\frac{1}{\delta}))$ by the algorithm in Ref.~\cite{Waggoner15}, which is also able to give us the set $\widehat R$ such that $\mathcal R_\Delta\subseteq \widehat R\subseteq \mathcal R_\Delta\cup\mathcal S_\Delta$ with probability $1-\delta$ by taking
    \begin{align} \label{eq:E2}
        T=\mathcal{O}\left(\frac{1}{\Delta^2}\log\left(\frac{1}{\delta}\right)\right).
    \end{align}

Combining these two results, it suffices to take
    \begin{align}
        T = \mathcal{O}\left( \min\left\{ \frac{1}{\Delta}\log\left(\frac{N}{\delta}\right), \frac{1}{\Delta^2}\log\left(\frac{1}{\delta}\right) \right\} \right).
    \end{align}

Now we consider distinguishing between the two cases
$p_{\mathrm{rare}}\ge p_1$ and $p_{\mathrm{rare}}\le p_2$.
On the good event of the threshold-classification step, the constructed
candidate set has the form $\widehat R=\mathcal R_\Delta\cup \widetilde{\mathcal S}_\Delta$, where $\widetilde{\mathcal S}_\Delta\subseteq \mathcal S_\Delta$.

Let $\widetilde q_\Delta
    :=
    Q(\widetilde{\mathcal S}_\Delta)$, we have $0\le \widetilde q_\Delta\le q_\Delta$.
If $p_{\mathrm{rare}}\ge p_1$, then
\begin{align}
    Q(\widehat R)
    \ge
    p_{\mathrm{rare}}
    \ge
    p_1.
\end{align}
If $p_{\mathrm{rare}}\le p_2$, then
\begin{align}
    Q(\widehat R)
    \le
    p_2+\widetilde q_\Delta
    \le
    p_2+q_\Delta.
\end{align}
Thus the gap between the two cases is at least
\begin{align}
    g:=p_1-p_2-q_\Delta.
\end{align}
By the promise condition, $g>0$. Estimating $Q(\widehat R)$ to
constant accuracy relative to this gap requires
\begin{align}
    O \left(\frac{1}{g^2}\right)
    =
    O \left(\frac{1}{(p_1-p_2-q_\Delta)^2}\right)
\end{align}
samples from $Q$. Equivalently, since the promise gives
$p_1-p_2-q_\Delta>\epsilon$, one may state this step as requiring
$O(1/\epsilon^2)$ samples from $Q$.

If the first case holds, we then use rejection sampling based on
$\widehat R$. The accepted distribution is
\begin{align}
    \widetilde Q(x)
    =
    \begin{cases}
        \dfrac{Q(x)}{p_{\mathrm{rare}}+\widetilde q_\Delta},
        & x\in \mathcal R_\Delta\cup\widetilde{\mathcal S}_\Delta,\\
        0,
        & \text{otherwise}.
    \end{cases}
\end{align}
The acceptance probability is
$p_{\mathrm{rare}}+\widetilde q_\Delta$, so one accepted sample requires
$O(1/(p_{\mathrm{rare}}+\widetilde q_\Delta))\in O(1/p_{\mathrm{rare}})$
samples from $Q$.

The total variation distance between $\widetilde Q$ and the ideal
rare-event distribution $Q_{\mathcal R}$ satisfies
\begin{align}
    D_{\mathrm{tv}}(\widetilde Q,Q_{\mathcal R})
    &=
    \frac12
    \left[
    \sum_{x\in\mathcal R_\Delta}
    \left|
        \frac{Q(x)}{p_{\mathrm{rare}}+\widetilde q_\Delta}
        -
        \frac{Q(x)}{p_{\mathrm{rare}}}
    \right|
    +
    \sum_{x\in\widetilde{\mathcal S}_\Delta}
    \frac{Q(x)}{p_{\mathrm{rare}}+\widetilde q_\Delta}
    \right]
    \nonumber\\
    &=
    \frac{\widetilde q_\Delta}
    {p_{\mathrm{rare}}+\widetilde q_\Delta}
    \nonumber\\
    &\le
    \frac{q_\Delta}{p_{\mathrm{rare}}}
    =
    \zeta.
\end{align}
Including the failure probability $\delta$ of the threshold-classification
step gives total error at most $\zeta+\delta$. Taking
$\delta=O(\epsilon)$, the final total variation error is
$O(\epsilon+\zeta)$.

In total, the number of samples from $P$ is
\begin{align}
    O \left(
        \min\left\{
            \frac{1}{\Delta}\log\frac{N}{\epsilon},
            \frac{1}{\Delta^2}\log\frac{1}{\epsilon}
        \right\}
    \right),
\end{align}
and the number of samples from $Q$ is
\begin{align}
    O \left(
        \max\left\{
            \frac{1}{(p_1-p_2-q_\Delta)^2},
            \frac{1}{p_{\mathrm{rare}}}
        \right\}
    \right).
\end{align}
\end{proof}

Following this, one can immediately provide the classical algorithm for rare event sampling.

\begin{corollary}[Classical algorithm for rare event sampling]
    There exists a classical algorithm that solves \cref{prob.rare.specific} using
    $$
    \mathcal{O}\left(\min\left\{\frac{1}{\Delta}\log\left(\frac{N}{\epsilon}\right),\frac{1}{\Delta^2}\log\left(\frac{1}{\epsilon}\right)\right\}+\frac{1}{p_\mathrm{rare}}\right)
    $$
    samples from $P$.
\end{corollary}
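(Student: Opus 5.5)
This corollary follows by specializing \cref{thm.alg.classical} to the self-sampling case $Q=P$ of \cref{prob.rare.specific}. Both the threshold-classification step and the rejection-sampling step then draw from the same sampler, so the two sample counts in \cref{thm.alg.classical} simply add.

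The only structural difference is that \cref{prob.rare.specific} has no ``Impossible'' branch. We assume $p_{\mathrm{rare}}>0$ and only need to output a sampler. We therefore drop the distinguishing step, which costs $\mathcal{O}(1/(p_1-p_2-q_\Delta)^2)$ samples, and keep only the first and third stages of the proof of \cref{thm.alg.classical}.

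\textbf{Stage 1: candidate set.} Draw
\[
T=\mathcal{O}\bigl(\min\{\Delta^{-1}\log(N/\delta),\ \Delta^{-2}\log(1/\delta)\}\bigr)
\]
samples from $P$. Use either the empirical-frequency threshold at $4\Delta/3$ with the Chernoff and union bounds, or $\ell_\infty$-learning via Ref.~\cite{Waggoner15}. With probability at least $1-\delta$ this yields $\widehat R$ with $\mathcal R_\Delta\subseteq\widehat R\subseteq\mathcal R_\Delta\cup\mathcal S_\Delta$.

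For a general constant $\alpha$ in place of $1/2$, move the cutoff to $(1+\alpha/2)\Delta$. The Chernoff exponents then become $\Theta(\alpha^2 T\Delta)$, which is still $\Theta(T\Delta)$ since $\alpha$ is constant.

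\textbf{Stage 2: rejection sampling.} Draw fresh samples from $P$ and accept the first one lying in $\widehat R$. The acceptance probability is $p_{\mathrm{rare}}+\widetilde p_\Delta\ge p_{\mathrm{rare}}$, so the expected number of samples is $\mathcal{O}(1/p_{\mathrm{rare}})$. To get a worst-case bound, truncate after $\mathcal{O}(p_{\mathrm{rare}}^{-1}\log(1/\epsilon))$ trials. This only adds $\epsilon$ to the error, and the log factor can be absorbed if the statement is read in expectation.

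\textbf{Error.} On the good event of Stage 1, the computation already done in the theorem's proof gives
\[
D_{\mathrm{tv}}(\widetilde P,P_{\mathcal R})=\frac{\widetilde p_\Delta}{p_{\mathrm{rare}}+\widetilde p_\Delta}\le\zeta .
\]
The bad event contributes at most $\delta$ in total variation. Setting $\delta=\epsilon$ makes the total error $\mathcal{O}(\epsilon+\zeta)$. Adding the two sample counts gives exactly the claimed bound.

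The main point to check carefully is not the calculation but the accounting. One must confirm that the Stage 2 samples are independent of $\widehat R$, which holds because they are fresh draws. One must also settle whether the $1/p_{\mathrm{rare}}$ term is meant as an expected count or carries a $\log(1/\epsilon)$ factor, and I would state it as the expected count.
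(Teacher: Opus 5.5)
Your proposal is correct and matches the paper's proof. Like the paper, you specialize the theorem to $Q=P$, drop the decision step, and build $\widehat R$ with $\mathcal R_\Delta\subseteq\widehat R\subseteq\mathcal R_\Delta\cup\mathcal S_\Delta$. You then rejection-sample with acceptance probability $p_{\mathrm{rare}}+\widetilde p_\Delta$ and bound the error by $\zeta+\delta$ with $\delta=\mathcal O(\epsilon)$. Your extra remarks are sensible refinements that the paper leaves implicit: shifting the cutoff to $(1+\alpha/2)\Delta$ for a general constant $\alpha$, and reading the $1/p_{\mathrm{rare}}$ term as an expected count.
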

\begin{proof}
The result follows from the construction in \cref{thm.alg.classical} specialized to the self-sampling case $Q=P$.
    In \cref{prob.rare.specific}, there is no ``Impossible'' branch, so the decision step distinguishing $p_{\mathrm{rare}}\ge p_1$ from $p_{\mathrm{rare}}\le p_2$ is unnecessary.

    Using the threshold-classification step from the proof of \cref{thm.alg.classical}, with probability at least $1-\delta$ one constructs a candidate set $\widehat R$ satisfying $\mathcal R_\Delta
        \subseteq
        \widehat R
        \subseteq
        \mathcal R_\Delta\cup\mathcal S_\Delta$
    using
    \begin{align}
        \mathcal{O}\left(
            \min\left\{
                \frac{1}{\Delta}\log\left(\frac{N}{\delta}\right),
                \frac{1}{\Delta^2}\log\left(\frac{1}{\delta}\right)
            \right\}
        \right)
    \end{align}
    samples from $P$.

    On this good event, $\widehat R=\mathcal R_\Delta\cup\widetilde{\mathcal S}_\Delta$ for some $\widetilde{\mathcal S}_\Delta\subseteq\mathcal S_\Delta$.
    Rejection sampling from $P$ using $\widehat R$ therefore accepts with probability $ P(\widehat R)=p_{\mathrm{rare}}+\widetilde p_\Delta$,
    where $\widetilde p_\Delta:=P(\widetilde{\mathcal S}_\Delta)\le p_\Delta$.
    Hence one accepted sample requires $\mathcal{O}(1/p_{\mathrm{rare}})$ samples from $P$.

    The same TVD calculation as in the proof of \cref{thm.alg.classical} gives
    \begin{align}
        D_{\mathrm{tv}}(\widetilde P_{\mathcal R},P_{\mathcal R})
        \le
        \frac{\widetilde p_\Delta}{p_{\mathrm{rare}}+\widetilde p_\Delta}
        \le
        \frac{p_\Delta}{p_{\mathrm{rare}}}
        =
        \zeta.
    \end{align}
    Including the failure probability of the candidate-set construction gives error at most $\zeta+\delta$.
    Taking $\delta=\mathcal{O}(\epsilon)$ yields total variation error $\mathcal{O}(\epsilon+\zeta)$ and gives the stated sample complexity.
\end{proof}

Now we prove the classical lower bound, where the proof is achieved by reduction from a standard
distribution distinguishing problem.

\begin{lemma}[Lower bound for distribution distinguishing \cite{BY02}]\label{lem.distinguish}
    For any two probability distributions $P_1$ and $P_2$ over the same sample space, any classical algorithm that distinguishes $P_1$ from $P_2$ requires $\Omega(1/d_H^2(P_1,P_2))$ samples, where 
    \begin{align}
        d_{H}(P_1,P_2)\coloneqq \sqrt{\frac{1}{2}\sum_{i}\left(\sqrt{P_1(i)}-\sqrt{P_2(i)}\right)^2}
    \end{align}
     is the Hellinger distance between $P_1$ and $P_2$.
\end{lemma}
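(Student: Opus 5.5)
We prove the lower bound by the standard route: relate the optimal success probability of any $T$-sample test to the total variation distance between the $T$-fold product distributions, then bound that distance by the Hellinger distance, which tensorizes well. A classical algorithm that distinguishes $P_1$ from $P_2$ with success probability at least $2/3$ using $T$ samples is a (possibly randomized) test applied to a sample from either $P_1^{\otimes T}$ or $P_2^{\otimes T}$. By the Neyman--Pearson / Le Cam argument, the best achievable success probability is
\begin{equation*}
\tfrac12 + \tfrac12 D_{\mathrm{tv}}\bigl(P_1^{\otimes T},P_2^{\otimes T}\bigr).
\end{equation*}
Success probability $2/3$ therefore forces $D_{\mathrm{tv}}(P_1^{\otimes T},P_2^{\otimes T})\ge 1/3$. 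Internal randomness does not help, since a randomized test is a mixture of deterministic ones and its success probability is an average of theirs.

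Next we bound the total variation distance by the Hellinger distance. Writing $d_H^2 = 1 - \mathrm{BC}$ with the Bhattacharyya coefficient $\mathrm{BC}(P_1,P_2)=\sum_i\sqrt{P_1(i)P_2(i)}$, the standard inequality $D_{\mathrm{tv}}\le \sqrt{2}\,d_H$ follows from Cauchy--Schwarz applied to
\begin{equation*}
\sum_i |P_1(i)-P_2(i)| = \sum_i \bigl|\sqrt{P_1(i)}-\sqrt{P_2(i)}\bigr|\,\bigl(\sqrt{P_1(i)}+\sqrt{P_2(i)}\bigr).
\end{equation*}
(One may also use $D_{\mathrm{tv}}\le\sqrt{1-\mathrm{BC}^2}$.)

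Then we tensorize. The Bhattacharyya coefficient is multiplicative, $\mathrm{BC}(P_1^{\otimes T},P_2^{\otimes T})=\mathrm{BC}(P_1,P_2)^T=(1-d_H^2)^T$. Hence
\begin{equation*}
d_H^2\bigl(P_1^{\otimes T},P_2^{\otimes T}\bigr)=1-(1-d_H^2)^T\le T\,d_H^2(P_1,P_2),
\end{equation*}
by Bernoulli's inequality. Combining the three steps gives
\begin{equation*}
1/3\le D_{\mathrm{tv}}\bigl(P_1^{\otimes T},P_2^{\otimes T}\bigr)\le \sqrt{2T}\,d_H(P_1,P_2),
\end{equation*}
and so $T\ge 1/(18\,d_H^2(P_1,P_2))=\Omega(1/d_H^2)$.

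The only delicate point is getting a bound linear in $T\,d_H^2$ rather than $\sqrt T\,d_H$ or worse. Tensorizing total variation directly would only give $D_{\mathrm{tv}}(P_1^{\otimes T},P_2^{\otimes T})\le T\,D_{\mathrm{tv}}(P_1,P_2)$, which yields the weaker bound $\Omega(1/D_{\mathrm{tv}})$. Routing through the multiplicativity of the Bhattacharyya coefficient, as above, is what produces the correct $1/d_H^2$ dependence. The remaining steps are routine.
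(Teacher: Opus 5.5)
Your argument is correct: the optimal-test bound $\frac12+\frac12 D_{\mathrm{tv}}(P_1^{\otimes T},P_2^{\otimes T})$, the inequality $D_{\mathrm{tv}}\le\sqrt{2}\,d_H$, and the multiplicativity of the Bhattacharyya coefficient combine with the right constants to give $T\ge 1/(18\,d_H^2(P_1,P_2))$. The paper gives no proof of its own and simply cites Ref.~\cite{BY02}, whose lower bound follows essentially this same route, controlling the total variation of the product distributions through the tensorization of the Hellinger distance.
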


\begin{theorem}[Classical lower bound in $\Delta$]\label{thm.lowerbound.classical}
    Any classical algorithm that solves \cref{prob.rare.general} or \cref{prob.rare.specific} with $0<\Delta<\frac{1}{4}$ requires $\Omega(1/\Delta)$ samples drawn from $P$.
\end{theorem}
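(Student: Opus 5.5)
We argue by reduction from the distribution distinguishing problem of \cref{lem.distinguish}. We construct two distributions $P_1,P_2$ on the same sample space with Hellinger distance $d_H^2(P_1,P_2)=\Theta(\Delta)$, chosen so that the rare-event samplers they induce have nearly disjoint supports. Any algorithm solving \cref{prob.rare.specific} (or the yes-branch of \cref{prob.rare.general} with $Q=P$) would then distinguish $P_1$ from $P_2$. Concretely, take three events $\{a,b,c\}$ and set
\begin{align}
P_1 &= (\Delta,\ 3\Delta,\ 1-4\Delta), &
P_2 &= (3\Delta,\ \Delta,\ 1-4\Delta).
\end{align}
The condition $\Delta<1/4$ guarantees that these are valid distributions and that $c$ is never rare, because $1-4\Delta>0$ and we may also assume $1-4\Delta>3\Delta/2$ after adjusting constants. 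Under $P_1$ the rare set is $\{a\}$, and the ambiguous region $(\Delta,3\Delta/2]$ is empty, so $\zeta=0$ and $Q_{\mathcal R}=\delta_a$. Under $P_2$ the same holds with $\delta_b$ in place of $\delta_a$. Also $p_{\mathrm{rare}}=\Delta>0$ in both cases, so each is a valid yes-instance with $p_1\le\Delta$ and $p_2=0$.

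The Hellinger distance is $d_H^2=(\sqrt{3}-1)^2\Delta=\Theta(\Delta)$, since the $c$-terms cancel. Suppose an algorithm uses $T$ samples from $P$ and, with probability at least $2/3$, outputs a sampler whose distribution is $\mathcal{O}(\epsilon)$-close to $Q_{\mathcal R}$, with $\epsilon$ a small constant (for instance $\epsilon<1/4$ after absorbing constants). We build a distinguisher as follows. Run the algorithm, draw a single sample from the output sampler, and answer ``$P_1$'' if the sample is $a$. This does not consume further samples from $P$ if the sampler is self-contained. If the constructed sampler may itself query $P$, we count those queries too, and a single output costs finitely many more samples, which are included in $T$.

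We then check the success probability. Under $P_1$, the output is $a$ with probability at least $(2/3)(1-\epsilon')$. Under $P_2$, it is $a$ with probability at most $1/3+(2/3)\epsilon'$. These are separated by a constant gap, so the success probability can be boosted by constant repetition. \cref{lem.distinguish} then gives $T=\Omega(1/d_H^2)=\Omega(1/\Delta)$.

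The main obstacle is making the reduction respect the error tolerance $\Theta(\epsilon+\zeta)$. We must ensure that $\zeta=0$ in both instances, so the ambiguity does not swamp the distinction. This is why the probabilities $\Delta$ and $3\Delta$ are placed outside the window $(\Delta,3\Delta/2]$. We also need the constant in $\Theta(\epsilon)$ to be small enough that the two target point masses remain distinguishable, and we must handle the case where the sampler continues to use $P$ by charging its queries to the algorithm. A secondary check is that the ``Impossible'' branch of \cref{prob.rare.general} never triggers, which holds because both instances have $p_{\mathrm{rare}}=\Delta\ge p_1$.
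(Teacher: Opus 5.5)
For \cref{prob.rare.specific} your argument is essentially the paper's: two events swap a rare and a clearly non-rare probability, the ambiguous region is empty, $d_H^2=\Theta(\Delta)$, and one output sample distinguishes the two inputs, so \cref{lem.distinguish} applies. Only the constants differ. One detail: $1-4\Delta>3\Delta/2$ requires $\Delta<2/11$, not $\Delta<1/4$. For larger $\Delta$ your third event is either ambiguous (with $\zeta>1$, which makes the accuracy requirement vacuous) or rare. The paper's choice $(\Delta/2,\,2\Delta,\,1-5\Delta/2)$ works on the whole range $0<\Delta<1/4$. Charging any $P$-samples consumed by the output sampler to $T$ is fine.

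The genuine gap is in your treatment of \cref{prob.rare.general} via $Q=P$.

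\emph{Promise violation.} That problem carries the promise $p_1-p_2>\epsilon+q_\Delta$. With $p_2=0$, $q_\Delta=0$ and $p_1\le p_{\mathrm{rare}}=\Delta$, this forces $\epsilon<\Delta$. Your pair is therefore not a legal instance for the constant $\epsilon$ you use later. You could shrink $\epsilon$ below $\Delta$, but the next problem remains.

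\emph{Conflated sample counts.} With $Q=P$, every call the algorithm makes to its $Q$-sampler is also a draw from the unknown distribution. Your distinguisher therefore spends $T_P+T_Q$ samples, and \cref{lem.distinguish} only yields $T_P+T_Q=\Omega(1/\Delta)$. \cref{prob.rare.general} counts the two samplers separately, as in \cref{thm.alg.classical}. So this does not show that $\Omega(1/\Delta)$ samples from $P$ are required.

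\emph{Missing idea: decouple $Q$ from $P$.} The paper fixes the known distribution $Q=(1/2,1/2,0)$ with $p_1=1/3$ and $p_2=0$. Then:
\begin{enumerate}
\item $p_{\mathrm{rare}}=1/2$ under both inputs, so the promise holds for any constant $\epsilon<1/3$.
\item The targets are still $\delta_1$ versus $\delta_2$.
\item The distinguisher can generate every $Q$-sample itself, so the entire $\Omega(1/\Delta)$ cost falls on samples from $P$.
\end{enumerate}
This version also shows the $\Delta$-dependence persists when the rare set has constant $Q$-mass. It therefore cannot be attributed to a vanishing $p_{\mathrm{rare}}$, as it could be in your instance, where $p_{\mathrm{rare}}=\Delta$.
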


\begin{proof}
    We prove the lower bound using a single pair of yes instances. 
    This avoids relying on the ``Impossible'' branch and therefore applies both to the general rare-event sampling problem and to the self-sampling version.

    Fix a constant output accuracy parameter $\eta<1/10$.
    Consider the following two probability distributions on the sample space $\{1,2,3\}$:
    \begin{align}
        P_0&=\left(\frac{\Delta}{2}, 2\Delta, 1-\frac{5\Delta}{2}\right),\\
        P_1&=\left(2\Delta, \frac{\Delta}{2}, 1-\frac{5\Delta}{2}\right).
    \end{align}
    For $0<\Delta<1/4$, both are valid probability distributions, and
    \begin{align}
        1-\frac{5\Delta}{2}>\frac{3\Delta}{2}.
    \end{align}
    Hence, for both distributions, the boundary region
    $\mathcal{S}_{\Delta}=\{x:\Delta\le P(x)\le 3\Delta/2\}$ is empty.

    We first compute the distance between the two input distributions. Their squared Hellinger distance is
    \begin{align}
        d_H^2(P_0,P_1)
        &=
        \frac{1}{2}
        \left[
        \left(\sqrt{\frac{\Delta}{2}}-\sqrt{2\Delta}\right)^2
        +
        \left(\sqrt{2\Delta}-\sqrt{\frac{\Delta}{2}}\right)^2
        \right]
        \nonumber\\
        &=
        \left(\sqrt{2\Delta}-\sqrt{\frac{\Delta}{2}}\right)^2
        =
        \frac{\Delta}{2}
        =
        \Theta(\Delta).
    \end{align}
    By \cref{lem.distinguish}, any classical algorithm that distinguishes $P_0$ from $P_1$ with constant success probability requires
    $\Omega(1/\Delta)$ samples.

    We now show that any rare-event sampling algorithm would distinguish these two distributions.
    In the general problem \cref{prob.rare.general}, choose the fixed distribution
    \begin{align}
        Q=\left(\frac{1}{2}, \frac{1}{2}, 0\right),
        \qquad
        p_1=\frac{1}{3},
        \qquad
        p_2=0.
    \end{align}
    Since the boundary region is empty, we have $q_{\Delta}=0$.
    The promise condition is satisfied for any sufficiently small constant precision parameter, because $p_1-p_2=1/3>0=q_{\Delta}$.
    If the unknown distribution is $P_0$, then the rare set is $R_{\Delta}(P_0)=\{1\}$,
    and hence the target rare-event distribution is the delta function $\delta_1$.
    Moreover,
    \begin{align}
        p_{\mathrm{rare}}
        =
        Q(R_{\Delta}(P_0))
        =
        Q(\{1\})
        =
        \frac{1}{2}
        \ge p_1.
    \end{align}
    If the unknown distribution is $P_1$, then the rare set is $R_{\Delta}(P_1)=\{2\}$,
    and hence the target rare-event distribution is the delta function $\delta_2$.
    Similarly,
    \begin{align}
        p_{\mathrm{rare}}
        =
        Q(R_{\Delta}(P_1))
        =
        Q(\{2\})
        =
        \frac{1}{2}
        \ge p_1.
    \end{align}
    Thus both inputs are yes instances of the general problem, but their correct output samplers are different.

    The same hard pair also applies to the self-sampling problem, where $Q=P$.
    Indeed, if $P=P_0$, then the rare set is again $\{1\}$ and the conditional rare-event distribution is $P_{0,R}=\delta_1$.
    If $P=P_1$, then the rare set is again $\{2\}$ and the conditional rare-event distribution is
    $P_{1,R}=\delta_2$.
    In both cases the rare-event mass is nonzero, i.e., $p_{\mathrm{rare}}=\Delta/2>0$.
    Therefore the pair $P_0,P_1$ consists of valid yes instances also for the sampling-only self-sampling problem.

    Now suppose that there exists a classical algorithm $\mathcal{A}$ that solves either the general problem above or the self-sampling problem using $S$ samples from the unknown distribution $P$.
    We use $\mathcal{A}$ to distinguish whether the unknown input is $P_0$ or $P_1$.
    Run $\mathcal{A}$ on the corresponding rare-event sampling instance, then draw one sample from the sampler produced by $\mathcal{A}$.
    If the sample is $1$, output $P_0$.
    If the sample is $2$, output $P_1$.
    In all other cases, output either distribution arbitrarily.

    Conditioned on $\mathcal{A}$ succeeding, its output distribution is within total variation distance at most $\eta$ from the correct rare-event distribution.
    Therefore, if $P=P_0$, the produced sampler outputs $1$ with probability at least $1-\eta$.
    If $P=P_1$, the produced sampler outputs $2$ with probability at least $1-\eta$.
    Since $\eta<1/10$ and $\mathcal{A}$ succeeds with constant probability, this gives a distinguisher for $P_0$ and $P_1$ with constant success probability.
    By the Hellinger-distance lower bound above, such a distinguisher requires
    \begin{align}
        S=\Omega\left(\frac{1}{\Delta}\right).
    \end{align}
    Therefore any classical algorithm solving the rare-event sampling problem requires $\Omega(1/\Delta)$ samples from $P$.
    The conclusion holds both for the general problem and for the sampling-only self-sampling problem.
\end{proof}

Note that the above proof naturally generalizes to the case where we can only sample from the rare distribution approximately. Using the hard instance in the proof of \cref{thm.lowerbound.classical}, $P_{\mathcal{R}}=(1,0,0)$. Performing the rare event sampling algorithm on distributions $P_1$ and $P_2$, we output failure and distribution $(1,0,0)$, respectively.
Even if we can only obtain the final output to precision $1/3$ in total variation distance, we can still distinguish $P_1$ and $P_2$ by taking a constant number of samples from the output distribution.

\subsection{Quantum algorithm \label{app.alg.quantum}}

\begin{theorem}[Quantum algorithm for rare event sampling]\label{thm.alg.rare.pure}
    There exists a quantum algorithm that solves \cref{prob.rare.general} using
    \begin{align}
        \mathcal{O}\left(  \max\left\{\frac{1}{\sqrt{p_1}-\sqrt{p_2}}, \frac{1}{\sqrt{p_{\mathrm{rare}}}} \right\}\frac{1}{\sqrt{\Delta}} \log\left(\frac{1}{\epsilon}\right)\right)
    \end{align}
    queries to $U_P$ and $\mathcal{O}\left( \max\left\{\frac{1}{\sqrt{p_1}-\sqrt{p_2}}, \frac{1}{\sqrt{p_{\mathrm{rare}}}} \right\}\right)$ queries to $U_Q$ with $\mathcal{O}(n)$ ancilla qubits.
    Here, $U_P$ and $U_Q$ denote the quantum samplers for $P$ and $Q$, respectively.
\end{theorem}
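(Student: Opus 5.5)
We follow the three-stage pipeline of the main text: spectral embedding, then probability thresholding, then amplitude estimation/amplification. \textbf{Step 1 (spectral embedding).} Apply \cref{thm.amplitudes.encoding} to $U_P$ to obtain a $(1,n+2,0)$-encoding $U_H$ of $H=\diag(\sqrt{P(x_1)},\dots,\sqrt{P(x_N)})$. This uses $\mathcal{O}(1)$ queries to controlled-$U_P$ and controlled-$U_P^\dagger$. Since all amplitudes are real and nonnegative, the eigenvalues of $H$ lie in $[0,1]$.

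\textbf{Step 2 (thresholding).} Take the even polynomial $g_d$ of \cref{approx.rectangle}. We choose $t=\tfrac12\big(\sqrt{\Delta}+\sqrt{3\Delta/2}\big)$ and $\Gamma=\tfrac12\big(\sqrt{3\Delta/2}-\sqrt{\Delta}\big)=\Theta(\sqrt{\Delta})$. Then the ambiguity window $[t-\Gamma,t+\Gamma]=[\sqrt{\Delta},\sqrt{3\Delta/2}]$ corresponds exactly to $\mathcal S_\Delta$ (up to endpoints). The degree is $d=\mathcal O(\Delta^{-1/2}\log(1/\epsilon))$. To satisfy the $|g|\le 1/2$ hypothesis of \cref{theorem.qsvt}, we implement $g_d/2$; because the input encoding is exact, $\alpha=1$ and the error term is only $\delta$. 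The result is a unitary $W$ with $(\bra{0}\otimes I)W(\ket{0}\otimes I)=\tfrac12 g_d(H)$. Applying $W$ to $\ket{0}\ket{P}$ and projecting the ancilla onto $\ket 0$ gives the unnormalized vector
\begin{equation}
\ket{\phi}=\tfrac12\sum_x g_d\big(\sqrt{P(x)}\big)\sqrt{P(x)}\ket{x}.
\end{equation}
Its coefficients are within $\epsilon$ of $\tfrac12\sqrt{P(x)}$ on $\mathcal R$, within $\epsilon$ of $0$ for $P(x)\ge 3\Delta/2$, and in $[0,\tfrac12\sqrt{P(x)}]$ on $\mathcal S_\Delta$. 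Each use of $W$ costs $\mathcal O(d)$ queries to $U_P$.

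\textbf{Step 3 (decision and amplification).} Let $\mathcal A=W(I\otimes U_Q)$ if we want samples according to $Q$. For the general problem, we replace $\ket P$ by $\ket Q$ while keeping $H$ built from $U_P$, since the filter is diagonal in the computational basis. The success amplitude squared is
\begin{equation}
a=\tfrac14\sum_x g_d^2\big(\sqrt{P(x)}\big)Q(x).
\end{equation}
By the bounds from Step 2, $a$ lies within $\mathcal O(\epsilon)$ of $\tfrac14\big(p_{\mathrm{rare}}+\tilde q_\Delta\big)$ with $0\le\tilde q_\Delta\le q_\Delta$. The constant $1/4$ only rescales quantities.

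\emph{Decision.} If $p_{\mathrm{rare}}\ge p_1$ then $\sqrt a\gtrsim\tfrac12\sqrt{p_1}$. If $p_{\mathrm{rare}}\le p_2$ then $\sqrt a\lesssim\tfrac12\sqrt{p_2+q_\Delta+\epsilon}$. Amplitude estimation to additive precision $\Theta(\sqrt{p_1}-\sqrt{p_2})$ therefore separates the two cases. This step uses the promise $p_1-p_2>\epsilon+q_\Delta$, and it costs $\mathcal O(1/(\sqrt{p_1}-\sqrt{p_2}))$ uses of $\mathcal A$.

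\emph{Sampling.} In the yes case, apply fixed-point or exact amplitude amplification with $\mathcal O(1/\sqrt{p_{\mathrm{rare}}})$ uses of $\mathcal A$ to obtain, up to error $\epsilon$, the normalized state $\ket\phi/\norm{\ket\phi}$. Measuring it gives samples from $\widetilde Q(x)\propto g_d^2(\sqrt{P(x)})Q(x)$.

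Multiplying the number of uses of $\mathcal A$ by the cost of $W$ gives the stated bound on queries to $U_P$. The number of $U_Q$ queries is just the number of uses of $\mathcal A$. The ancilla count is $\mathcal O(n)$ by \cref{thm.amplitudes.encoding} and \cref{theorem.qsvt}.

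\textbf{Main obstacle: the TVD bound.} We must show $D_{\mathrm{tv}}(\widetilde Q,Q_{\mathcal R})=\mathcal O(\epsilon+\zeta)$. Following the calculation in the proof of \cref{thm.alg.classical}, write $w(x)=g_d^2(\sqrt{P(x)})Q(x)$. The weights satisfy $w=Q(1\pm\mathcal O(\epsilon))$ on $\mathcal R$, $0\le w\le Q$ on $\mathcal S_\Delta$, and $w\le\epsilon^2 Q$ elsewhere. The delicate part is the last group: its total leakage is $\epsilon^2(1-p_{\mathrm{rare}})$, which is small only *relative to* $p_{\mathrm{rare}}$ after normalization. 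To handle this I would first try shrinking the polynomial error to $\epsilon'=\epsilon\sqrt{p_{1}}$ (or $\epsilon\sqrt{p_{\mathrm{rare}}}$). This only changes the $\log(1/\epsilon)$ factor into $\log(1/(\epsilon p_{\mathrm{rare}}))$, which I would absorb or state as $\log(1/\epsilon)$ under the convention $\epsilon\le p_{\mathrm{rare}}$. After this choice, normalizing gives leakage $\mathcal O(\epsilon)$ plus the $\mathcal S_\Delta$ contribution $\le q_\Delta/p_{\mathrm{rare}}=\zeta$. Finally, add the amplification and QSVT synthesis errors $\delta=\mathcal O(\epsilon)$ via the triangle inequality.
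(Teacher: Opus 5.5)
Your pipeline is the paper's. It uses the amplitude block encoding of $\diag(\sqrt{P(x)})$, an even rectangle filter of width $\Theta(\sqrt{\Delta})$ applied via QSVT to $U_Q\ket{0}$, a decision on the success amplitude, and amplitude amplification. The TVD analysis is the same three-region split, with the non-rare leakage $\epsilon'^2/p_{\mathrm{rare}}$ controlled by tying the polynomial error to $p_1$ (you take $\epsilon'=\epsilon\sqrt{p_1}$; the paper assumes $\epsilon\le p_1$). Two of your variations are harmless: plain amplitude estimation in place of the paper's re-encoding of $\sqrt{\tilde r}$ followed by a second rectangle filter, and halving $g_d$ to meet the $|g|\le 1/2$ hypothesis of \cref{theorem.qsvt}. (Like the paper's window $[\sqrt\Delta,1.2\sqrt\Delta]$, yours contains $\sqrt\Delta$, so events with $P(x)=\Delta$ are not covered; place the window strictly inside $(\sqrt\Delta,\sqrt{3\Delta/2})$.)

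The decision step, however, does not go through as written. Your Step 2 guarantees give $4a\ge(1-\epsilon')^2p_1$ in the yes case and $4a\le p_2+q_\Delta+\epsilon'^2$ in the no case. The promise $p_1-p_2>\epsilon+q_\Delta$ only guarantees that these thresholds differ by roughly $\epsilon-2\epsilon'p_1$.

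First, your $\gtrsim$ hides that loss. With $\epsilon'=\epsilon\sqrt{p_1}$ the guaranteed ranges overlap once $2p_1^{3/2}\ge 1$, so $\epsilon'$ must be a small constant times $\epsilon$; this is the paper's ``rescale $2\epsilon$ to $\epsilon$''.

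Second, and more seriously, even after that fix the gap in $\sqrt a$ is only guaranteed to be of order $\epsilon/\sqrt{p_1}$, not $\sqrt{p_1}-\sqrt{p_2}$, because $q_\Delta$ may absorb almost all of $p_1-p_2$. So estimating to precision $\Theta(\sqrt{p_1}-\sqrt{p_2})$ does not separate the cases. This is not just an artefact of the analysis. On $\{a,b\}$ with $p_1=1/2$ and $p_2=0$, the instances $(P(a),Q(a))=(\Delta,1/2)$ and $((1+\eta)\Delta,\,1/2-2\epsilon)$ form a valid yes/no pair under the promise. The hybrid argument of \cref{thm.lowerbound.quantum} shows that separating them needs $\Omega(1/\epsilon)$ queries to $U_Q$ or $\Omega(1/(\eta\sqrt\Delta))$ queries to $U_P$. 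For small $\epsilon$ and $\eta\ll 1/\log(1/\epsilon)$, this exceeds the claimed bounds.

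The paper's proof has the same hole. Its filter, centred at $(\sqrt{p_1}+\sqrt{p_2})/2$ with $\Gamma'=(\sqrt{p_1}-\sqrt{p_2})/2$, ignores that the no-case amplitude can reach $\sqrt{p_2+q_\Delta+\epsilon^2}$. By contrast, its classical \cref{thm.alg.classical} correctly keeps $q_\Delta$ in the gap.

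To repair your argument, you have two options:
\begin{itemize}
\item Strengthen the promise to $q_\Delta+\epsilon\le c\,(p_1-p_2)$ for a constant $c<1$. This makes the amplitude gap at least a constant times $(p_1-p_2)/\sqrt{p_1}\ge\sqrt{p_1}-\sqrt{p_2}$, and your estimate then works.
\item Keep the promise and state the decision cost as $\mathcal O\bigl(1/((1-\epsilon')\sqrt{p_1}-\sqrt{p_2+q_\Delta+\epsilon'^2})\bigr)$.
\end{itemize}
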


\begin{proof}
    By \cref{thm.amplitudes.encoding}, we can use $\mathcal{O}(1)$ times of controlled-$U_P$ and controlled-$U_P^\dagger$ to construct an $(1,n+2,0)$-encoding $U_A$ of a diagonal matrix $A=\diag(\{\sqrt{p(x_{i})}\})$.

    The next step is to apply the rectangular function to the diagonal matrix $A$ of the encoding $U_A$, which approximately constructs a projection $\Pi_\mathcal{R}$ into the space filled with rare events.
    Note that in the quantum setting, we operate on the square roots of the probabilities rather than the probabilities themselves.
    To ensure the ambiguity range is contained in $\mathcal{S}_{\Delta}$, for quantum we choose the ambiguity range as $0.2\sqrt{\Delta}$ such that
    \begin{align}
        (\sqrt{\Delta}+0.2\sqrt{\Delta})^2 = 1.44\Delta<1.5\Delta.
    \end{align}
    \cref{approx.rectangle} guarantees that the rectangular function can be approximated to error $\epsilon$ by an even polynomial $P$ with degree $\mathcal{O}(\frac{1}{\sqrt{\Delta}}\log(1/\epsilon))$, with $t=(\sqrt{\Delta}+\xi)/2=1.1\sqrt{\Delta}$ and $\Gamma=\xi/2=0.1\sqrt{\Delta}$.
    \cref{theorem.qsvt} claims that such an even polynomial can be applied to the diagonal matrix $A$ by using $\mathcal{O}(\frac{1}{\sqrt{\Delta}}\log(1/\epsilon))$ queries to controlled-$U_A$ and controlled-$U_A^\dagger$.
    With these, we can approximately construct a block encoding of the operator
    \begin{equation}
        \Pi_\mathcal{R} = \sum_{x_{i}\in \mathcal{R}}\ket{x_{i}}\bra{x_{i}}.
    \end{equation}

    Then, we apply the block encoding $U_{\Pi_\mathcal{R}}$ to the initial state prepared by $U_Q$, that is,
    \begin{align}\label{eq.state.prop.naa}
        U_{\Pi_\mathcal{R}}(I\otimes U_{Q})\ket{0}\ket{0}=
        U_{\Pi_\mathcal{R}} \ket{0}\sum_{x_{i}} \sqrt{q(x_{i})}\ket{x_{i}}\approx \ket{0}\sum_{x_{i}\in \mathcal{R}}\sqrt{q(x_{i})} \ket{x_{i}} +\ket{1}\ket{\widetilde{\perp}},
    \end{align}
    where $\ket{\widetilde{\perp}}$ is an arbitrary unnormalized state.
    For simplicity, write $\ket{\phi}=U_Q\ket{0}$.
    The probability of measuring the ancilla qubits  as $\ket{0}$ is
    \begin{equation}
        \|\tilde{\Pi}_\mathcal{R}\ket{\phi} \|^2 \geq \|\Pi_\mathcal{R}\ket{\phi} \|^2 (1-\epsilon)^2 = p_{\mathrm{rare}}(1-\epsilon)^2,
    \end{equation}
    where $\tilde{\Pi}_\mathcal{R}$ is the actual constructed approximated projector.
    We also have
    \begin{align}
        \|\tilde{\Pi}_\mathcal{R}\ket{\phi} \|^2\leq \|\Pi_\mathcal{R}\ket{\phi} \|^2+q_{\Delta}=p_{\mathrm{rare}}+q_{\Delta}+\epsilon^2.
    \end{align}

    To decide whether we are in the first or the second case as required in \cref{prob.rare.general},
    it suffices to achieve this by measuring $\|\tilde{\Pi_\mathcal{R}}\ket{\phi} \|^2$ with amplitude estimation \cite{brassard2002quantum}. To separate these two cases, we need to make sure
    \begin{align}
        p_1(1-\epsilon)^2\geq p_2+q_{\Delta}+\epsilon^2.
    \end{align}
    Note that
    \begin{align}
        p_1(1-\epsilon)^2=p_1(1-2\epsilon+\epsilon^2)\geq p_1(1-2\epsilon),
    \end{align}
    it suffices to take $p_1-p_2>2\epsilon+q_{\Delta}\geq 2\epsilon p_1+q_{\Delta}$.
    One can further rescale $2\epsilon$ to $\epsilon$.

    Note that \cref{eq.state.prop.naa} is also a state preparation unitary. One can ignore the second register and only consider the first register, which can be written as
    \begin{align}
        U_{\mathrm{prep}}\ket{0}=\|\tilde{P}_{\mathrm{rare}}\ket{\phi}\|\ket{0}+\sqrt{1-\|\tilde{P}_{\mathrm{rare}}\ket{\phi}\|^2}\ket{1}.
    \end{align}
    By using \cref{thm.amplitudes.encoding} and implementing the approximate rectangular function $f(x-\frac{\sqrt{p_1}+\sqrt{p_2}}{2})$, with $\Gamma'= (\sqrt{p_1}-\sqrt{p_2})/2$, we can boost the gap between $p_1$ and $p_2$ to $\Omega(1)$. Then we can use amplitude amplification \cite{brassard2002quantum} with precision $\mathcal{O}(1)$ to distinguish the first case and the second case with probability at least $2/3$.
    To distinguish whether we are in the first case or in the second case uses $\mathcal{O}(\frac{1}{\sqrt{p_1}-\sqrt{p_2}}\frac{1}{\sqrt{\Delta}}\log\frac{1}{\epsilon})$ queries to $U_P$ and $\mathcal{O}(\frac{1}{\sqrt{p_1}-\sqrt{p_2}})$ queries to $U_Q$.

    Now we consider if we are in \textbf{the first case}.

    By further using the amplitude amplification $\mathcal{O}(\frac{1}{\sqrt{p_{\mathrm{rare}}}})$ times \cite{brassard2002quantum}, one can prepare the target state
    \begin{align}
        \frac{1}{\sqrt{p_{\mathrm{rare}}}} \sum_{x_i\in \mathcal{R}} \sqrt{Q(x_i)}\ket{x_i}.
    \end{align}
    Combining the QSVT procedure, the overall queries to $U_P$ is $\mathcal{O}(\frac{1}{\sqrt{\Delta p_{\mathrm{rare}}}}\log\frac{1}{\epsilon})$ and queries to $U_Q$ is $\mathcal{O}(\frac{1}{\sqrt{p_{\mathrm{rare}}}})$.

Let $    \tilde r
    :=
    \left\|\widetilde\Pi_{\mathcal R}\ket{\phi}\right\|^2
    =
    \sum_i
    \left|\bra{x_i}\widetilde\Pi_{\mathcal R}\ket{\phi}\right|^2$
be the actual post-selection probability. The resulting post-selected distribution is
\begin{equation}
    \widetilde{Q}_{\mathcal{R}}(x_i)
    =
    \frac{
        \left|\bra{x_i}\widetilde\Pi_{\mathcal R}\ket{\phi}\right|^2
    }{\tilde r}.
\end{equation}

Let $\mathcal{S}_{\mathrm{un}}
    :=
    \Omega\setminus(\mathcal{R}\cup\mathcal{S}_{\Delta})$
denote the non-rare region. By the polynomial approximation guarantee, for $x_i\in\mathcal R$,
\begin{equation}
    (1-\epsilon)^2Q(x_i)
    \le
    \left|\bra{x_i}\widetilde\Pi_{\mathcal R}\ket{\phi}\right|^2
    \le
    Q(x_i).
\end{equation}
For $x_i\in\mathcal S_\Delta$, we use the trivial bound
\begin{equation}
    \left|\bra{x_i}\widetilde\Pi_{\mathcal R}\ket{\phi}\right|^2
    \le
    Q(x_i),
\end{equation}
and for $x_i\in\mathcal S_{\mathrm{un}}$,
\begin{equation}
    \left|\bra{x_i}\widetilde\Pi_{\mathcal R}\ket{\phi}\right|^2
    \le
    \epsilon^2 Q(x_i).
\end{equation}
Therefore,
\begin{equation}
    (1-\epsilon)^2p_{\mathrm{rare}}
    \le
    \tilde r
    \le
    p_{\mathrm{rare}}+q_\Delta+\epsilon^2.
\end{equation}
Moreover,
\begin{equation}
    |\tilde r-p_{\mathrm{rare}}|
    \le
    \left (1-(1-\epsilon)^2\right)p_{\mathrm{rare}}
    +q_\Delta+\epsilon^2 .
\end{equation}

For the rare region, we first note that
\begin{align}
&\sum_{x_i\in\mathcal R}
\left|
    \frac{Q(x_i)}{p_{\mathrm{rare}}}
    -
    \frac{
        \left|\bra{x_i}\widetilde\Pi_{\mathcal R}\ket{\phi}\right|^2
    }{\tilde r}
\right|
\nonumber\\
\le&
\sum_{x_i\in\mathcal R}
Q(x_i)
\left|
    \frac{1}{p_{\mathrm{rare}}}
    -
    \frac{1}{\tilde r}
\right|
+
\sum_{x_i\in\mathcal R}
\frac{
    Q(x_i)
    -
    \left|\bra{x_i}\widetilde\Pi_{\mathcal R}\ket{\phi}\right|^2
}{\tilde r}.
\end{align}
The second term is nonnegative because
$\left|\bra{x_i}\widetilde\Pi_{\mathcal R}\ket{\phi}\right|^2\le Q(x_i)$
for $x_i\in\mathcal R$.

We evaluate the total variation distance between the target distribution
$Q_{\mathcal R}$ and the actual distribution $\widetilde Q_{\mathcal R}$:
\begin{align}
D_{\mathrm{tv}}(Q_{\mathcal R},\widetilde Q_{\mathcal R})
=&
\frac12
\sum_{x_i}
\left|
Q_{\mathcal R}(x_i)-\widetilde Q_{\mathcal R}(x_i)
\right|
\nonumber\\
\le&
\frac12
\left\{
\sum_{x_i\in\mathcal R}
\left|
    \frac{Q(x_i)}{p_{\mathrm{rare}}}
    -
    \frac{
        \left|\bra{x_i}\widetilde\Pi_{\mathcal R}\ket{\phi}\right|^2
    }{\tilde r}
\right|
+
\sum_{x_i\in\mathcal S_\Delta}
\frac{Q(x_i)}{\tilde r}
+
\sum_{x_i\in\mathcal S_{\mathrm{un}}}
\frac{\epsilon^2Q(x_i)}{\tilde r}
\right\}
\nonumber\\
\le&
\frac12
\left\{
\sum_{x_i\in\mathcal R}
Q(x_i)
\left|
    \frac1{p_{\mathrm{rare}}}
    -
    \frac1{\tilde r}
\right|
+
\sum_{x_i\in\mathcal R}
\frac{
    Q(x_i)
    -
    \left|\bra{x_i}\widetilde\Pi_{\mathcal R}\ket{\phi}\right|^2
}{\tilde r}
+
\frac{q_\Delta}{\tilde r}
+
\frac{\epsilon^2}{\tilde r}
\right\}
\nonumber\\
\le&
\frac12
\left\{
\frac{|\tilde r-p_{\mathrm{rare}}|}{\tilde r}
+
\frac{\left(1-(1-\epsilon)^2\right)p_{\mathrm{rare}}}{\tilde r}
+
\frac{q_\Delta}{\tilde r}
+
\frac{\epsilon^2}{\tilde r}
\right\}
\nonumber\\
\le&
\frac12
\left\{
\frac{
    2\left(1-(1-\epsilon)^2\right)p_{\mathrm{rare}}
    +2q_\Delta
    +2\epsilon^2
}{
    \tilde r
}
\right\}
\nonumber\\
\le&
\frac{
    \left(1-(1-\epsilon)^2\right)p_{\mathrm{rare}}
    +q_\Delta
    +\epsilon^2
}{
    (1-\epsilon)^2p_{\mathrm{rare}}
}
\nonumber\\
=&
\frac{2\epsilon-\epsilon^2}{(1-\epsilon)^2}
+
\frac{\zeta}{(1-\epsilon)^2}
+
\frac{\epsilon^2}{(1-\epsilon)^2p_{\mathrm{rare}}},
\end{align}
where $\zeta=q_\Delta/p_{\mathrm{rare}}$.

We choose the polynomial approximation error such that $\epsilon\le p_1$.
In the yes case of \cref{prob.rare.general}, $p_{\mathrm{rare}}\ge p_1$. Hence
\begin{equation}
    \frac{\epsilon^2}{p_{\mathrm{rare}}}
    \le
    \frac{\epsilon^2}{p_1}
    \le
    \epsilon.
\end{equation}
Therefore, assuming $\epsilon\leq 1/2$ without loss of generality,
\begin{equation}
    D_{\mathrm{tv}}(Q_{\mathcal R},\widetilde Q_{\mathcal R})
    =
    \mathcal O(\epsilon+\zeta).
\end{equation}

    If we are in \textbf{the second case}, we can directly output ``impossible".
    Combining these two cases, one needs
    \begin{align}
        \mathcal{O}\left(\max\left\{\frac{1}{\sqrt{p_1}-\sqrt{p_2}}, \frac{1}{\sqrt{p_{\mathrm{rare}}}}\right\}\frac{1}{\sqrt{\Delta}}\log\frac{1}{\epsilon}\right)
    \end{align}
    queries to $U_P$, and $\mathcal{O}\left(\max\left\{\frac{1}{\sqrt{p_1}-\sqrt{p_2}}, \frac{1}{\sqrt{p_{\mathrm{rare}}}}\right\}\right)$ queries to $U_Q$.
\end{proof}

\begin{figure}
\vspace*{0.5em}
\begin{tcolorbox}[enhanced]
\centering
\textbf{Quantum algorithm for rare event sampling}
    \vspace{0.5em}

    \textbf{Inputs:}

    \vspace{0.25em}
    \begin{tabular}{@{} l @{\hspace{0.8em}--\hspace{0.8em}} p{0.56\linewidth} @{}}
        c-$U_P/U_P^\dagger$
        & $U_P$ generates the quantum sample state $\ket{P}$. \\
        $\Delta$
        & Threshold for rare events. \\
        $\epsilon$
        & Error parameter.
    \end{tabular}

    \vspace{0.75em}

    \textbf{Outputs:}

    \vspace{0.25em}
    \begin{tabular}{@{} l @{\hspace{0.8em}--\hspace{0.8em}} p{0.56\linewidth} @{}}
        $\ket{P_{\mathcal{R}}}$
        & A quantum rare-event sample state that superposes all rare events.
    \end{tabular}

    \vspace{0.75em}

    \textbf{Algorithm:}

    \vspace{0.25em}
    \begin{enumerate}[leftmargin=*]
        \item By \cref{thm.amplitudes.encoding}, use c-$U_P/U_P^\dagger$ a constant number of times to construct the diagonal block encoding of amplitudes $U_A$.

        \item Implement a polynomial approximation of the rectangle function $f(x)$ via \cref{theorem.qsvt}, using
        $\mathcal{O}(1/\sqrt{\Delta}\log(1/\epsilon))$ applications of $U_A$.

        \item Feed the quantum sample state $\ket{P}$ into the constructed quantum circuit, and use amplitude amplification to boost the success probability and prepare the target state $\ket{P_{\mathcal{R}}}$. This further uses $\mathcal{O}(1/\sqrt{p_{\mathrm{rare}}})$ applications of the constructed quantum circuit. In total, the algorithm uses
        $\mathcal{O}(1/\sqrt{\Delta p_{\mathrm{rare}}}\log(1/\epsilon))$ applications of c-$U_P/U_P^\dagger$.
    \end{enumerate}
\end{tcolorbox}
\caption{Description of the quantum rare-event sampling algorithm, with the special case $P=Q$.}
\label{fig.algorithm.step}
\end{figure}

Following this, we can immediately write the case for rare event sampling.

\begin{corollary}[Quantum algorithm for rare event sampling]
    There exists a quantum algorithm that solves \cref{prob.rare.specific} using
    \begin{align}
        \mathcal{O}\left( \frac{1}{\sqrt{p_{\mathrm{rare}}\Delta}} \log\left(\frac{1}{\epsilon}\right)\right)
    \end{align}
    queries to $U_P$ and $U_P^\dagger$ with $\mathcal{O}(n)$ ancilla qubits.
    Here, $U_P$ denote the quantum samplers for $P$.
\end{corollary}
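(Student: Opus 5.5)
The plan is to specialize the construction in the proof of \cref{thm.alg.rare.pure} to the self-sampling case $Q=P$, so that $U_Q=U_P$. We drop the decision step that separates $p_{\mathrm{rare}}\ge p_1$ from $p_{\mathrm{rare}}\le p_2$, because \cref{prob.rare.specific} has no ``Impossible'' branch and promises $p_{\mathrm{rare}}>0$. The algorithm is the three-step procedure of \cref{fig.algorithm.step}, and the proof checks the query count and the total variation error separately.

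For the query count, we proceed in three steps.
\begin{enumerate}
\item Using \cref{thm.amplitudes.encoding}, we build a $(1,n+2,0)$-encoding $U_A$ of $A=\diag(\sqrt{P(x_i)})$ from $\mathcal{O}(1)$ calls to controlled-$U_P$ and controlled-$U_P^\dagger$.
\item We apply the even rectangle polynomial of \cref{approx.rectangle} with $t=1.1\sqrt{\Delta}$ and $\Gamma=0.1\sqrt{\Delta}$. Then every amplitude outside the transition window is classified correctly up to error $\epsilon$, and the window $(\sqrt{\Delta},1.2\sqrt{\Delta}]$ in amplitude lies inside $\mathcal S_\Delta$ (for $\alpha\ge 0.44$). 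For general constant $\alpha$ we instead take $\Gamma=\Theta(\alpha\sqrt{\Delta})$, so that the window $[\sqrt{\Delta},\sqrt{(1+\alpha)\Delta}]$ is covered.
\item Using \cref{theorem.qsvt}, this gives $U_{\widetilde\Pi_{\mathcal R}}$ at a cost of $\mathcal{O}(\Delta^{-1/2}\log(1/\epsilon))$ calls to $U_A$. We feed in $\ket{P}=U_P\ket{0}$ and obtain the flagged state $\ket{0}\widetilde\Pi_{\mathcal R}\ket{P}+\ket{1}\ket{\widetilde\perp}$.
\end{enumerate}
We then run fixed-point amplitude amplification, or standard amplification with a known lower bound on the success probability, on the flag. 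The success probability is $\tilde r\ge(1-\epsilon)^2p_{\mathrm{rare}}$, so this takes $\mathcal{O}(1/\sqrt{p_{\mathrm{rare}}})$ rounds. Each round uses one application of $U_P$ and $U_P^\dagger$ for the reflection about $\ket{P}$, plus the QSVT circuit. The total cost is therefore $\mathcal{O}(1/\sqrt{p_{\mathrm{rare}}\Delta}\,\log(1/\epsilon))$. Ancilla usage is $\mathcal{O}(n)$, from the block encoding and QSVT ancillas.

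For the error, we reuse the TVD calculation in \cref{thm.alg.rare.pure} with $Q=P$ and $q_\Delta=p_\Delta$. Rare amplitudes are kept up to a factor in $[1-\epsilon,1]$, ambiguous ones are bounded trivially, and non-rare ones are suppressed by a factor of $\epsilon$. This gives
\begin{equation*}
D_{\mathrm{tv}}\le\frac{2\epsilon+\zeta}{(1-\epsilon)^2}+\frac{\epsilon^2}{(1-\epsilon)^2p_{\mathrm{rare}}}.
\end{equation*}

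The main obstacle is the last term, $\epsilon^2/p_{\mathrm{rare}}$. In the general theorem it was controlled using $p_1$, but no such promise parameter exists here. We handle it by choosing the polynomial error $\epsilon'=\epsilon\sqrt{p_{\mathrm{rare}}}$, or $\epsilon'=\epsilon\, p_{\mathrm{rare}}$ to be safe. This changes the degree only by $\log(1/p_{\mathrm{rare}})$.

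To keep exactly the stated bound, we instead note the following. Leaked non-rare mass contributes at most $\epsilon'^2(1-p_{\mathrm{rare}})$ to $\tilde r$, so choosing $\epsilon'\le\epsilon$ with $\epsilon'^2\le\epsilon\,p_{\mathrm{rare}}$ suffices. If $p_{\mathrm{rare}}$ is unknown, a lower bound suffices, or we absorb the logarithm into $\log(1/\epsilon)$ under the mild assumption $\epsilon\lesssim p_{\mathrm{rare}}$.

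A second, minor point: amplification without knowing $p_{\mathrm{rare}}$ is handled by the standard exponential-search variant of amplitude amplification, which keeps the $\mathcal{O}(1/\sqrt{p_{\mathrm{rare}}})$ expected cost. With these choices, the output error is $\mathcal{O}(\epsilon+\zeta)$ and the query complexity is as stated.
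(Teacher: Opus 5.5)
Your proposal is correct and follows the paper's own proof: specialize \cref{thm.alg.rare.pure} to $Q=P$, drop the decision step, and reuse its thresholding-plus-amplification query count and its total-variation calculation. You are also right that the paper's one-line appeal to that calculation silently needs $\epsilon\lesssim p_{\mathrm{rare}}$: the general proof absorbed the $\epsilon^2/p_{\mathrm{rare}}$ term via $\epsilon\le p_1\le p_{\mathrm{rare}}$, and \cref{prob.rare.specific} has no $p_1$. Note, however, that your second fix $\epsilon'^2\le\epsilon\,p_{\mathrm{rare}}$ still puts $\tfrac12\log(1/p_{\mathrm{rare}})$ into the degree, so it does not recover the exact bound on its own; the bound holds as stated only when $\epsilon\lesssim p_{\mathrm{rare}}$ (or $p_{\mathrm{rare}}\ge\mathrm{poly}(\epsilon)$), and otherwise $\log(1/\epsilon)$ must be replaced by $\log\bigl(1/(\epsilon p_{\mathrm{rare}})\bigr)$.
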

\begin{proof}
    This is the self-sampling case $Q=P$ of \cref{thm.alg.rare.pure}, with the ``Impossible'' decision branch omitted.
    In \cref{prob.rare.specific}, the rare-event mass $p_{\mathrm{rare}}$ is assumed to be nonzero, and the goal is only to prepare a sampler close to the conditional rare-event distribution.

    The thresholding construction from \cref{thm.alg.rare.pure} uses
    \begin{align}
        \mathcal{O}\left(
            \frac{1}{\sqrt{\Delta}}
            \log\left(\frac{1}{\epsilon}\right)
        \right)
    \end{align}
    queries to $U_P$ and $U_P^\dagger$ to implement the approximate rare-event projector.
    Applying this projector to the quantum sample state $U_P\ket{0}$ succeeds with probability $\mathcal{O}(p_{\mathrm{rare}})$.
    Amplitude amplification therefore adds a factor $\mathcal{O}(1/\sqrt{p_{\mathrm{rare}}})$.
    Hence the total query complexity is
    \begin{align}
        \mathcal{O}\left(
            \frac{1}{\sqrt{p_{\mathrm{rare}}\Delta}}
            \log\left(\frac{1}{\epsilon}\right)
        \right).
    \end{align}
    The total variation error bound is the same as in the proof of \cref{thm.alg.rare.pure}, with $Q=P$, giving $\mathcal{O}(\epsilon+\zeta)$.
\end{proof}

In the following, we provide the lower bound proof for the quantum case.

\begin{theorem}[Quantum lower bound in $\Delta$]\label{thm.lowerbound.quantum}
    For any $0<\Delta<1/4$, any quantum algorithm that solves
    \cref{prob.rare.general} with quantum samplers $U_P,U_Q$,
    parameters $p_1=2/3$, $p_2=1/3$, and $0<\epsilon<1/3$,
    requires
    \[
        \Omega \left(\frac{1}{\sqrt{\Delta}}\right)
    \]
    queries to $U_P$ and $U_P^\dagger$, even when $Q$ is fixed and known.
\end{theorem}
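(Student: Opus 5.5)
We reduce from quantum state discrimination, in the spirit of the proof of \cref{thm.lowerbound.classical}, replacing the Hellinger-distance bound of \cref{lem.distinguish} by its quantum counterpart. The quantum analogue says that distinguishing two quantum samplers $U_{P_0}$ and $U_{P_1}$ with constant success probability requires $\Omega(1/d_H(P_0,P_1))$ queries. To justify this we will invoke the hybrid argument of \cite{bennett1997strengths}, or equivalently the distribution-testing lower bound of \cite{belovs:lipics.esa.2019.16}, applied to a pair of unitaries whose operator-norm distance is $\mathcal{O}(d_H(P_0,P_1))$. So the plan is to exhibit a hard pair whose squared Hellinger distance is $\Theta(\Delta)$, and then show that any solver of \cref{prob.rare.general} with the stated parameters distinguishes the pair with constant advantage.

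\textbf{The hard pair.} We reuse the three-outcome pair from \cref{thm.lowerbound.classical}:
\[
P_0=\left(\tfrac{\Delta}{2},2\Delta,1-\tfrac{5\Delta}{2}\right),\qquad P_1=\left(2\Delta,\tfrac{\Delta}{2},1-\tfrac{5\Delta}{2}\right).
\]
For both distributions the boundary region is empty, so $q_\Delta=0$. We fix the known distribution $Q=(1/2,1/2,0)$. Then under either $P_b$ exactly one of the outcomes $1,2$ is rare, giving $p_{\mathrm{rare}}=1/2$.

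Under the choice $p_1=2/3$, $p_2=1/3$, however, $p_{\mathrm{rare}}=1/2$ falls in neither the yes nor the no case. We therefore adjust $Q$:
\begin{itemize}
\item take $Q=(3/4,1/4,0)$, so that $P_0$ (rare set $\{1\}$, $p_{\mathrm{rare}}=3/4\ge p_1$) is a yes instance with target distribution $\delta_1$;
\item under $P_1$ the rare set is $\{2\}$ with $p_{\mathrm{rare}}=1/4\le p_2$, so $P_1$ is a no instance and the algorithm must output ``Impossible''.
\end{itemize}
Both instances satisfy the promise $p_1-p_2=1/3>\epsilon+q_\Delta$ for $\epsilon<1/3$. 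The distinguisher runs the algorithm and outputs $P_1$ exactly when the answer is ``Impossible''. This succeeds with probability at least $2/3$, and needs no sampling from the output.

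\textbf{Unitaries close in operator norm.} We construct samplers $U_{P_0}$ and $U_{P_1}$ that differ by $\mathcal{O}(\sqrt{\Delta})$ in operator norm. Let $\ket{P_b}$ be the two sample states; they satisfy
\[
1-\braket{P_0}{P_1}=d_H^2(P_0,P_1)=\Delta/2.
\]
Define $U_{P_1}=VU_{P_0}$, where $V$ is the rotation in $\mathrm{span}\{\ket{P_0},\ket{P_1}\}$ carrying $\ket{P_0}$ to $\ket{P_1}$, extended as the identity elsewhere. Then
\[
\|U_{P_0}-U_{P_1}\|=\|I-V\|=\norm{\ket{P_0}-\ket{P_1}}_2=\sqrt{2}\,d_H=\Theta(\sqrt{\Delta}).
\]
Taking adjoints gives $\|U_{P_0}^\dagger-U_{P_1}^\dagger\|=\Theta(\sqrt{\Delta})$ as well, and the same holds for controlled versions, since the algorithm of \cref{thm.amplitudes.encoding} uses controlled queries.

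\textbf{Hybrid argument.} For a $T$-query algorithm, the final states under the two oracles differ in norm by at most $T\cdot\Theta(\sqrt{\Delta})$. Queries to $U_Q$ are identical in both runs because $Q$ is fixed, so they contribute nothing. Constant distinguishing advantage requires the final-state distance to be $\Omega(1)$, which forces $T=\Omega(1/\sqrt{\Delta})$.

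\textbf{Main obstacle.} The delicate point is the access model. The lower bound must hold for every valid sampler pair, and the algorithm may call $U_P^\dagger$ and controlled versions; this is why we build the pair explicitly so that all oracle variants are $\mathcal{O}(\sqrt{\Delta})$-close. A second thing to verify is that the adjusted $Q$ leaves the yes/no promise of \cref{prob.rare.general} consistent with $\epsilon<1/3$, which it does.
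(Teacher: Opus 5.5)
Your proof is correct and follows the same route as the paper. You take explicit samplers that are $\mathcal{O}(\sqrt{\Delta})$-close in operator norm (also for inverses and controlled versions), apply the hybrid argument of \cite{bennett1997strengths}, and reduce through the ``Impossible'' branch with $Q$ fixed. The paper instead uses the two-outcome pair $P_0(1)=\Delta/2$, $P_1(1)=2\Delta$ with $Q=(0,1)$, so $p_{\mathrm{rare}}\in\{1,0\}$, together with canonical single-qubit rotations. Your three-outcome swap pair with $Q=(3/4,1/4,0)$ works equally well, as does your generic construction $U_{P_1}=VU_{P_0}$, which gives $\|U_{P_0}-U_{P_1}\|=\sqrt{2}\,d_H(P_0,P_1)$ for any pair. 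One wording fix: it is the algorithm, not the lower bound, that must hold for every valid sampler, and that is exactly what lets you fix this particular pair.
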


\begin{proof}
    We prove the lower bound by reducing a simple quantum-sampler
    distinguishing task to rare-event sampling.
    Consider two probability distributions $P_0$ and $P_1$ over the sample space
    $\Omega=\{0,1\}$: $P_0(1)=\Delta/2$ and $P_0(0)=1-\Delta/2$; $P_1(1)=2\Delta$ and $P_1(0)=1-2\Delta$.
    Since $0<\Delta<1/4$, these are valid probability distributions.
    Also we have $P_0(1)<\Delta$, $P_1(1)>\frac{3\Delta}{2}$,
    and both $P_0(0)$ and $P_1(0)$ are larger than $3\Delta/2$.
    Therefore the boundary region $\mathcal{S}_{\Delta}$
    is empty for both instances.

    For $b\in\{0,1\}$, let the corresponding quantum samplers be the one-qubit rotations
    \begin{align}
        U_b\ket{0}
        =
        \sqrt{1-P_b(1)}\ket{0}
        +
        \sqrt{P_b(1)}\ket{1}.
    \end{align}
    Equivalently, write
    \begin{align}
        U_b\ket{0}
        =
        \cos\theta_b\ket{0}
        +
        \sin\theta_b\ket{1},
    \end{align}
    where $\theta_0=\arcsin\sqrt{\frac{\Delta}{2}}$ and $\theta_1=\arcsin\sqrt{2\Delta}$. 
    For $0<\Delta<1/4$, this gives $|\theta_1-\theta_0|=\Theta(\sqrt{\Delta})$.
    We first show that distinguishing these two quantum samplers requires
    $\Omega(1/\sqrt{\Delta})$ queries. Choose the canonical rotation
    implementation
\begin{align}
            U_b
        =
        \begin{pmatrix}
            \cos\theta_b & -\sin\theta_b\\
            \sin\theta_b & \cos\theta_b
        \end{pmatrix}.
\end{align}
    Then
    \begin{align}
        \|U_0-U_1\|
        =
        2\left|\sin\frac{\theta_1-\theta_0}{2}\right|
        =
        O(|\theta_1-\theta_0|)
        =
        O(\sqrt{\Delta}).
    \end{align}
    The same bound holds for the inverse unitaries $\|U_0^\dagger-U_1^\dagger\|
        =
        O(\sqrt{\Delta})$.

    Consider an arbitrary quantum algorithm making $T$ queries to the
    unknown sampler and its inverse, interleaved with arbitrary known
    unitaries. Let $\ket{\Psi_0^{(T)}}$ and $\ket{\Psi_1^{(T)}}$ denote
    the final pure states of the algorithm when the oracle is $U_0$ and
    $U_1$, respectively. By the standard hybrid argument for quantum
    query algorithms~\cite{bennett1997strengths}, replacing the oracle
    calls one by one gives
    \begin{align}
                \left\|
            \ket{\Psi_0^{(T)}}-\ket{\Psi_1^{(T)}}
        \right\|
        \le
        T\cdot
        \max\left\{
            \|U_0-U_1\|,
            \|U_0^\dagger-U_1^\dagger\|
        \right\}=\mathcal{O}(T\sqrt{\Delta}).
    \end{align}

    On the other hand, if the algorithm distinguishes the two cases with
    success probability at least $2/3$, then the final states must have
    constant trace distance. Indeed, by the Helstrom--Holevo theorem~\cite{helstrom1967detection,holevo1973statistical,watrous2018theory}, the
    optimal success probability for distinguishing two equally likely
    states $\rho_0,\rho_1$ is
    \begin{align}
         p_{\mathrm{succ}}
        =
        \frac12+\frac14\|\rho_0-\rho_1\|_1.
    \end{align}
    Thus $p_{\mathrm{succ}}\ge 2/3$ implies $\|\rho_0-\rho_1\|_1\geq 2/3$.
    For pure states $\rho_b=\ket{\Psi_b^{(T)}} \bra{\Psi_b^{(T)}}$, we have
    \begin{align}
         \frac12\|\rho_0-\rho_1\|_1
    =
    \sqrt{1-\left|\braket{\Psi_0^{(T)}}{\Psi_1^{(T)}}\right|^2}
    \le
    \left\|
        \ket{\Psi_0^{(T)}}-\ket{\Psi_1^{(T)}}
    \right\|.
    \end{align}
Therefore, $\|\rho_0-\rho_1\|_1\ge 2/3$ implies
\begin{align}
        \left\|
        \ket{\Psi_0^{(T)}}-\ket{\Psi_1^{(T)}}
    \right\|
    \ge
    \frac13
    =
    \Omega(1).
\end{align}

    Combining this with the hybrid bound yields $O(T\sqrt{\Delta})=\Omega(1)$,
    and therefore
    \begin{align}
        T=\Omega \left(\frac{1}{\sqrt{\Delta}}\right).
    \end{align}

    We now show that any rare-event sampling algorithm would distinguish
    $U_0$ from $U_1$. Set $Q=(0,1)$, $p_1=2/3$, $p_2=1/3$, and $0<\epsilon<1/3$.
    Since the boundary region is empty for both instances, we have
    $q_\Delta=0$. Hence the promise condition
    $p_1-p_2>\epsilon+q_\Delta$
    is satisfied.

    Suppose there exists a quantum algorithm $\mathcal A$ that solves
    \cref{prob.rare.general} using $S$ queries to $U_P$ and
    $U_P^\dagger$. We use $\mathcal A$ to distinguish whether the
    unknown sampler is $U_0$ or $U_1$. Run $\mathcal A$ with quantum
    sampling access to the unknown $P$, and with the fixed known sampler
    $U_Q$ for $Q=(0,1)$. If $\mathcal A$ outputs ``Impossible'', we
    output $P_1$. Otherwise, we output $P_0$.

    If $P=P_0$, then outcome $1$ is rare because $P_0(1)=\frac{\Delta}{2}<\Delta$.
    Since $Q=(0,1)$, the $Q$-mass of the rare set is $p_{\mathrm{rare}}=Q(\{1\})=1\ge p_1$.
    Therefore $\mathcal A$ must construct a sampler for the rare-event
    distribution with probability at least $2/3$.

    If $P=P_1$, then no outcome is rare: $P_1(1)=2\Delta>3\Delta/2$,
    while $P_1(0)>3\Delta/2$. Hence $ p_{\mathrm{rare}}
        =
        Q(\emptyset)
        =
        0
        \le
        p_2$.
    By the definition of \cref{prob.rare.general}, $\mathcal A$ must
    output ``Impossible'' with probability at least $2/3$.

    Thus $\mathcal A$ distinguishes the two quantum samplers $U_0$ and
    $U_1$ with constant success probability using $S$ queries to
    $U_P$ and $U_P^\dagger$. Since this distinguishing task requires
    $\Omega(1/\sqrt{\Delta})$ queries, we conclude that
    \[
        S=\Omega \left(\frac{1}{\sqrt{\Delta}}\right).
    \]
\end{proof}

\begin{corollary}[Quantum lower bound for self-sampling rare-event sampling]
Fix a constant $\alpha>0$ in the definition
$S_\Delta=\{x:\Delta<P(x)\le (1+\alpha)\Delta\}$.
For all sufficiently small $\Delta>0$, any quantum algorithm that solves \cref{prob.rare.specific} with constant success probability and constant accuracy requires
\[
    \Omega\!\left(\frac{1}{\sqrt{\Delta}}\right)
\]
queries to $U_P$ and $U_P^\dagger$. This holds even for distributions with empty ambiguous region.
\end{corollary}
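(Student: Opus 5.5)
The lower bound for the self-sampling problem \cref{prob.rare.specific} will follow the same reduction as \cref{thm.lowerbound.quantum}. The one change is that we cannot rely on an ``Impossible'' branch, since \cref{prob.rare.specific} requires $p_{\mathrm{rare}}>0$. Both hard instances will therefore be yes instances whose conditional rare-event distributions are far apart, just as in the classical \cref{thm.lowerbound.classical}. On the sample space $\{1,2,3\}$ I take
\[
P_0=\left(\tfrac{\Delta}{2},\,c\Delta,\,1-(\tfrac12+c)\Delta\right),\qquad P_1=\left(c\Delta,\,\tfrac{\Delta}{2},\,1-(\tfrac12+c)\Delta\right),
\]
with a constant $c>1+\alpha$, for instance $c=2+\alpha$. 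For $\Delta$ sufficiently small, meaning $1-(\tfrac12+c)\Delta>(1+\alpha)\Delta$, both are valid distributions. In both, the only rare event has mass $\Delta/2$, the other two events exceed $(1+\alpha)\Delta$, and so $\mathcal S_\Delta=\emptyset$ and $\zeta=0$. The targets are $P_{0,\mathcal R}=\delta_1$ and $P_{1,\mathcal R}=\delta_2$, which are at total variation distance $1$.

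Next I fix quantum samplers that are close in operator norm. The natural choice is $U_b=V_b\otimes$ (known completion), where $V_b$ prepares $\sum_x\sqrt{P_b(x)}\ket{x}$ on a qutrit, or on two qubits with one unused basis state. Both states share the amplitude $\sqrt{1-(\tfrac12+c)\Delta}$ on $\ket{3}$. On the two-dimensional span of $\ket1,\ket2$, they differ by a rotation taking $(\sqrt{\Delta/2},\sqrt{c\Delta})$ to $(\sqrt{c\Delta},\sqrt{\Delta/2})$, an angle of $\Theta(1)$ applied to a vector of norm $\Theta(\sqrt\Delta)$. The plan is to let $U_1=WU_0$ with $W$ a rotation in the plane containing $\ket{P_0}$ and $\ket{P_1}$ by the angle between them. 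That angle is $\arccos\braket{P_0}{P_1}$, and since
\[
1-\braket{P_0}{P_1}=\Theta(\Delta),
\]
the angle is $\Theta(\sqrt\Delta)$. Because $W$ is a rotation by this angle, $\|U_0-U_1\|=\|W-I\|=O(\sqrt\Delta)$, and the same bound holds for the inverses.

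With this in place, the hybrid argument and the Helstrom bound from the proof of \cref{thm.lowerbound.quantum} apply verbatim. Any $T$-query algorithm yields final states at distance $O(T\sqrt\Delta)$, while distinguishing with constant bias requires distance $\Omega(1)$. The reduction runs the assumed solver $\mathcal A$, which succeeds with constant probability and constant accuracy $\eta<1/2$ in TVD. It then measures one sample from the output and guesses $P_0$ on outcome $1$ and $P_1$ on outcome $2$. Conditioned on success, the guess is correct with probability at least $1-\eta$, which gives constant advantage; if needed, $\mathcal A$ is repeated a constant number of times with a majority vote, at constant-factor cost. Hence $T=\Omega(1/\sqrt\Delta)$.

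The only delicate step is constructing samplers whose operator-norm distance is $O(\sqrt\Delta)$. Closeness of the prepared states alone is not enough; it must also hold for the full unitaries and their inverses. The two-dimensional rotation $W$ handles this, provided $\ket{P_0}$ and $\ket{P_1}$ are real and nonparallel, which they are.
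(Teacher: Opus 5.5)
Your proof is correct. It uses the same hard pair $P_0,P_1$ on $\{1,2,3\}$ with $c>1+\alpha$, the same validity condition on $\Delta$, and the same one-sample reduction as the paper. The difference is in how you obtain the $\Omega(1/\sqrt{\Delta})$ distinguishing bound.

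The paper computes $d_H^2(P_0,P_1)=(\sqrt{c}-1/\sqrt{2})^2\Delta$ and then cites Belovs' general $\Omega(1/d_H)$ lower bound for distinguishing quantum samplers as a black box. You instead prove the needed bound directly, in the style of the paper's own proof of \cref{thm.lowerbound.quantum}. You fix implementations $U_0$ and $U_1=WU_0$, and observe that $\|U_0-U_1\|=\|W-I\|=\|\ket{P_0}-\ket{P_1}\|=\sqrt{2}\,d_H(P_0,P_1)=\Theta(\sqrt{\Delta})$. The hybrid argument and the Helstrom bound then finish the proof. Fixing the implementations is legitimate, because the algorithm must succeed for every unitary with $U_P\ket{0}=\ket{P}$.

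What your route buys is a self-contained argument that shows explicitly which oracle pair is hard. It also makes visible that operator-norm closeness of the full unitaries, not just of the prepared states, is what drives the bound. What the paper's route buys is brevity, by appealing to a general $\Theta(1/d_H)$ characterization. That characterization is not tied to a particular choice of sampler implementation, but the paper leaves implicit that its access model matches that of \cref{prob.rare.specific}.

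One small correction, which applies equally to the paper's informal wording. Having $\eta<1/2$ together with constant success probability $s$ does not automatically give a distinguisher, because a failed run may output the wrong delta adversarially. Under $P_0$ you output $1$ with probability at least $s(1-\eta)$. Under $P_1$ you output $1$ with probability at most $s\eta+(1-s)$. So you need $s(1-2\eta)>1-s$, for example $s=2/3$ and $\eta<1/4$. Repetition with thresholding amplifies a positive gap of this kind, but it cannot create one that a single run lacks.
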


\begin{proof}
Choose any constant $c>1+\alpha$, and assume
$\Delta<1/(c+3/2+\alpha)$. Consider two distributions on
$\Omega=\{1,2,3\}$:

\begin{align}
    P_0
    &=
    \left(
        \frac{\Delta}{2},\, c\Delta,\,
        1-\left(c+\frac12\right)\Delta
    \right),\\
    P_1
    &=
    \left(
        c\Delta,\, \frac{\Delta}{2},\,
        1-\left(c+\frac12\right)\Delta
    \right).
\end{align}
For both distributions, the third probability is larger than
$(1+\alpha)\Delta$, and $c\Delta>(1+\alpha)\Delta$. Hence the ambiguous
region is empty. Under $P_0$, the rare set is $\{1\}$ and the
conditional rare-event distribution is $\delta_1$. Under $P_1$, the
rare set is $\{2\}$ and the conditional rare-event distribution is
$\delta_2$.

The squared Hellinger distance between the two input distributions is
\begin{align}
d_H^2(P_0,P_1)
&=
\frac12
\left[
\left(\sqrt{\frac{\Delta}{2}}-\sqrt{c\Delta}\right)^2
+
\left(\sqrt{c\Delta}-\sqrt{\frac{\Delta}{2}}\right)^2
\right]  \\
&=
\left(\sqrt c-\frac{1}{\sqrt2}\right)^2\Delta
=
\Theta(\Delta).
\end{align}
By the quantum distribution-distinguishing lower bound of Ref.~\cite{belovs:lipics.esa.2019.16},
distinguishing quantum samplers for $P_0$ and $P_1$ requires
\begin{align}
    \Omega\left(\frac{1}{d_H(P_0,P_1)}\right)
    =
    \Omega\left(\frac{1}{\sqrt{\Delta}}\right)
\end{align}
queries.

It remains to reduce this distinguishing task to rare-event sampling.
Suppose there were an algorithm $\mathcal A$ solving \cref{prob.rare.specific} with
$o(1/\sqrt{\Delta})$ queries. Given an unknown sampler for either
$P_0$ or $P_1$, run $\mathcal A$ and draw one sample from the
sampler it outputs. If the sample is $1$, output $P_0$; if the
sample is $2$, output $P_1$; otherwise output arbitrarily.
Conditioned on the success of $\mathcal A$, the output distribution is
within constant total variation distance of either $\delta_1$ or $\delta_2$.
Thus this procedure distinguishes $P_0$ from $P_1$ with constant
success probability. This contradicts the Hellinger-distance lower bound
above. Therefore any quantum algorithm solving \cref{prob.rare.specific} requires
$\Omega(1/\sqrt{\Delta})$ queries.
\end{proof}

\begin{figure}
    \includegraphics[width=\linewidth]{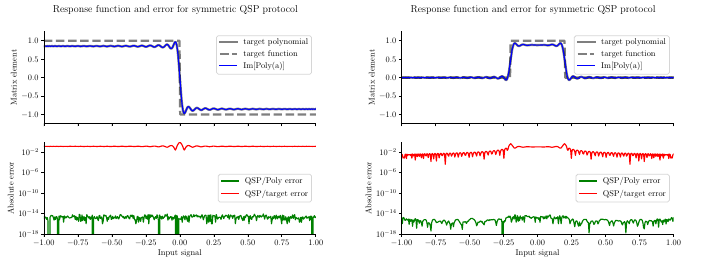}
    \caption{Polynomial approximation for the sign function (left) and rectangle function (right).}
    \label{fig.qsp.approx}
\end{figure}

\section{Discussion}

\subsection{Polynomial approximation}

Here, we provide some discussions of the polynomial approximation used in the quantum algorithm.
In this paper, we focus on the stochastic process, which implies that all quantum amplitudes (corresponding to the square root of the classical distribution) are real and positive.
To achieve the rare event-related tasks, we need a good polynomial approximation for the function
\begin{align}
    f(x)=
    \begin{cases}
        1 & \text{for all } x \in [0, \sqrt\Delta], \\
        0 & \text{for all } x \in (\sqrt\Delta, 1],
    \end{cases}
\end{align}
and $|f(x)|\leq 1$ for $x\in[-1,0)$.
The shifted Heaviside function is a good candidate.
There are two ways for approximating the Heaviside function: one is achieved by the error function \cite{low2017optimal}, which is an approximation of the sign function, and another is achieved by taking the integral of the filtering function \cite{lin2020nearoptimalground}, which is an approximation of the delta function.
Note that for the sign function case, one needs to add a constant $1$.
Asymptotically, the sign function approximation is better than the second method.

However, from practical performance, we notice that directly approximating the rectangle function  
\begin{align}
    g(x)=
    \begin{cases}
        1 & \text{for all } x \in [-\sqrt\Delta, \sqrt\Delta], \\
        0 & \text{for all } x \in [-1, -\sqrt\Delta) \cup (\sqrt\Delta, 1],
    \end{cases}
\end{align}
via optimization-based angle finding methods works better, which is somewhat surprising, as one usually constructs the rectangle function by taking linear combinations of the Heaviside function in theoretical proofs.
The reason is that in practice, we use an optimization-based method to find the polynomial approximation.
Especially for the quasi-Newton method \cite{dong2021efficient}, which is currently the state-of-the-art method to find many polynomial approximations, one can see from \cref{fig.qsp.approx} that the overall approximation for the rectangle function is much better than the sign function.
This difference in approximation error leads to the result that when we use the sign function in the numerical experiment, it requires a much higher polynomial degree to compress the largest amplitudes than the rectangle function case.

Further, the rectangle function needs fewer ancilla qubits than the sign function when we implement it via QSVT in the block-encoding setting.
The reason is that to implement the shifted sign function, one needs to make a shift and take a linear combination with the identity, as the shifted sign function is neither odd nor even and cannot be directly implementable via QSVT phase angles. This requires two ancilla qubits, and the coefficient from LCU will make amplitudes smaller.
For the rectangle function, as we can directly find the polynomial approximation in practice, it does not require additional ancilla qubits.

Therefore, in this paper, we choose to use the rectangle function.
The above discussions do not affect much for the asymptotic results, which are described in the theorems.
However, for practical considerations like the numerical experiment, these do make a difference.

\subsection{Related works}

The rare event sampling problem can be decomposed into two distinct tasks: first, identifying the rare events, and second, sampling from them. Our quantum algorithm achieves efficiency by performing these steps coherently, thereby avoiding the need to explicitly read out or measure the rare events.
In contrast, relying on the standard amplitude estimation algorithm \cite{brassard2002quantum} would incur a linear overhead in the size of the sample space, as it necessitates estimating the probability of each event individually. Even utilizing multidimensional quantum amplitude estimation \cite{Apeldoorn2021multidimensional} proves suboptimal, as it requires $\mathcal{O}(1/\Delta)$ queries to estimate all probabilities to the necessary precision of $\mathcal{O}(\Delta)$.
Further, to achieve a quadratic speedup with respect to $p_{\mathrm{rare}}$ during the sampling phase, one must effectively implement a projector onto the rare event subspace. Without our coherent approach, constructing such a projector would require complex quantum arithmetic circuits.
However, the trade-off for this coherent implementation is that our complexity scales with the product of $\mathcal{O}(1/\sqrt{\Delta})$ and $\mathcal{O}(1/\sqrt{p_{\mathrm{rare}}})$. In the worst-case scenario where there exists only a single rare event (that is, $p_{\mathrm{rare}}=\mathcal{O}(\Delta)$), our complexity converges to that of multidimensional quantum amplitude estimation.

Related problems include quantum minimum and maximum finding. The original quantum minimum finding algorithm, described in Ref.~\cite{durr1999quantumalgorithmfindingminimum}, is based on Grover's search algorithm. This approach has been generalized to the two-phase quantum search in Ref.~\cite{Apeldoorn2019Improvements}.
These methods assume that one can prepare the state
\begin{align}
    \frac{1}{\sqrt{|\mathcal{X}^L|}}\sum_{x\in \mathcal{X}^L} \ket{x}\ket{p(x)},
\end{align}
and one has access to an oracle $O$ such that
\begin{align}
    O\ket{p(x)}=
    \begin{cases}
        -\ket{p(x)} & \text{if } p(x)\leq \Delta, \\
        \ket{p(x)}  & \text{otherwise},
    \end{cases}
\end{align}
where $\ket{p(x)}$ corresponds to the binary representation of $p(x)$.
To prepare the state, a standard way is to assume another oracle
\begin{align}
    O_x:\ket{x}\ket{0}\rightarrow \ket{x}\ket{p(x)},
\end{align}
and implement this oracle onto the uniform superposition state.
Constructing these oracles may pose significant implementation challenges.
Similar limitations apply to the quantum maximum finding algorithm \cite{ahuja1999quantumalgorithmfindingmaximum}. While recent work \cite{rattew2023nonlinear} addresses maximum finding using a state preparation input model, their approach relies on restrictive assumptions regarding the largest amplitude and spectral gap (the difference between the largest and second-largest amplitudes).

Our approach draws conceptual inspiration from quantum ground state preparation (or low-energy state preparation) \cite{lin2020nearoptimalground, dong2022ground, buhrman2025beating}. In this analogy, our target output corresponds to a superposition of low-energy eigenstates, defined as those events with probability below the threshold $\Delta$. To utilize this insight, however, one must construct an effective Hamiltonian derived from the given state preparation unitary, which is highly nontrivial.
Furthermore, we explicitly prove the optimality of our quantum algorithm and demonstrate a separation between classical and quantum complexities. Our proof technique for these bounds is distinct from the method in Ref.~\cite{lin2020nearoptimalground}.

\section{Phase transition of quantum advantages in power-law tails}
\label{app:power_law_scaling}

In this section, we provide a rigorous discussion to determine in a concrete model family, when the rare-event mass $p_{\mathrm{rare}}$ stays macroscopic and when it vanishes. This is the key quantity controlling whether the quantum algorithm attains its ideal $\Theta(\Delta^{-1/2})$ scaling, or instead suffers an additional penalty from amplitude amplification.

Throughout this section, we fix $P=Q$, $\epsilon=\Theta(1)$ and focus only on the asymptotic dependence on $\Delta$. Under this convention, the result as stated in \cref{thm.alg.classical}, \cref{thm.alg.rare.pure} simplifies to
\begin{align}
\mathcal{S}_C(\Delta)
&=
\mathcal{O}\left(
\min\left\{
\frac{1}{\Delta}\log N,
\frac{1}{\Delta^2}
\right\}
+
\frac{1}{p_{\mathrm{rare}}}
\right),\\
\mathcal{S}_Q(\Delta)
&=
\mathcal{O}\left(
\frac{1}{\sqrt{\Delta p_{\mathrm{rare}}}}
\right).
\label{eq:SC_SQ_simplified}
\end{align}
When $N=\mathrm{poly}(1/\Delta)$, one has $\log N=\Theta(\log(1/\Delta))$, and therefore
\begin{equation}
\mathcal{S}_C(\Delta)
=
\mathcal{O}\left(
\frac{1}{\Delta}\log\frac{1}{\Delta}
+
\frac{1}{p_{\mathrm{rare}}}
\right).
\label{eq:SC_polyN}
\end{equation}
Thus, once $N$ is polynomial in $1/\Delta$, the asymptotic competition between classical and quantum algorithms is determined entirely by the scaling of $p_{\mathrm{rare}}$.

\subsection{Rank-frequency power-law model}

We consider a rank-ordered family of distributions over $N$ events,
\begin{equation}
P(x_k)=\frac{k^{-\gamma}}{Z_{N,\gamma}},
\label{eq:zipf_model}
\end{equation}
where $\gamma\ge 0$ is the tail exponent and $Z_{N,\gamma}:=\sum_{j=1}^{N} j^{-\gamma}$ is the generalized harmonic number (also the partition function).

The events are ordered so that $P(x_1)\ge P(x_2)\ge \cdots \ge P(x_N)$. Since the probabilities are monotone decreasing, the rare-event set is always a tail in rank space. For $\gamma>0$, define
\begin{equation}
k_\Delta := \min\{k\in\{1,\dots,N\}: P(x_k)\le \Delta\}.
\end{equation}
Following this, the rare-event set is $R_\Delta=\{x_k: k\ge k_\Delta\}$,
and the threshold rank satisfies
\begin{equation}
P(x_{k_\Delta})\le \Delta < P(x_{k_\Delta-1}).
\end{equation}

The total rare-event mass is
\begin{equation}
p_{\mathrm{rare}}
=
\sum_{k= k_\Delta}^{N} P(x_k)
=
\frac{1}{Z_{N,\gamma}}
\sum_{k= k_\Delta}^{N} k^{-\gamma}.
\label{eq:prare_sum}
\end{equation}
Because $k\mapsto k^{-\gamma}$ is monotone, integral comparison implies that the sum and integral have the same asymptotic scaling (a standard consequence of the integral test, also known as the Maclaurin–Cauchy test), especially when $N$ is large:
\begin{equation}
p_{\mathrm{rare}}
=
\Theta\left(
\frac{1}{Z_{N,\gamma}}
\int_{k_\Delta}^{N} x^{-\gamma} dx
\right).
\label{eq:prare_integral_scaling}
\end{equation}

To make the role of the state-space size explicit, we parameterize
\begin{equation}
N(\Delta)=\Theta(\Delta^{-q}),
\qquad q>0.
\label{eq:N_scaling}
\end{equation}
We also require that for sufficiently small $\Delta$,
\begin{equation}
P(x_N)\le \Delta < P(x_1).
\label{eq:nondegeneracy}
\end{equation}
The left inequality ensures that the rare set is nonempty, while the right inequality ensures that not every state is rare.

\subsection{Evaluation of the rare-event mass}

We now evaluate $p_{\mathrm{rare}}$ in the three tail regimes $\gamma>1$, $\gamma=1$, and $0<\gamma<1$. The case $\gamma=0$ is the uniform case, and one cannot reasonably distinguish which event is rare.

\textbf{Regime 1: $\gamma>1$.}
In this regime, the generalized harmonic number converges:
\begin{equation}
Z_{N,\gamma}
=
\zeta(\gamma)+\mathcal{O}(N^{1-\gamma})
=
\Theta(1),
\label{eq:H_gamma_gt_1}
\end{equation}
where $\zeta(\cdot)$ is the zeta function.
Therefore, the threshold rank obeys $k_\Delta
=
\Theta(\Delta^{-1/\gamma})$.

The nonemptiness condition $P(x_N)\le \Delta$ is equivalent to
\begin{equation}
\frac{N^{-\gamma}}{\zeta(\gamma)+o(1)} \lesssim \Delta,
\end{equation}
which, under $N=\Theta(\Delta^{-q})$, requires $q\ge 1/\gamma$ (up to constant factors at the boundary). In this regime, \cref{eq:prare_integral_scaling} yields
\begin{align}
p_{\mathrm{rare}}
&=
\Theta\left(
\int_{k_\Delta}^{N} x^{-\gamma} dx
\right)
=
\Theta\left(
\frac{k_\Delta^{ 1-\gamma}-N^{1-\gamma}}{\gamma-1}
\right).
\end{align}
Whenever $N\gtrsim k_\Delta$, the lower limit dominates, so
\begin{equation}
p_{\mathrm{rare}}
=
\Theta\left(k_\Delta^{ 1-\gamma}\right)
=
\Theta\left(\Delta^{(\gamma-1)/\gamma}\right),
\label{eq:prare_gt_1}
\end{equation}
implying that the rare-event mass vanishes polynomially as $\Delta\to 0$. Although the distribution has a long tail in cardinality, that tail carries too little total probability. The rare subspace therefore becomes increasingly difficult to amplify quantumly.

Substituting \cref{eq:prare_gt_1} into \cref{eq:SC_polyN}, we obtain
\begin{equation}
\mathcal{S}_C(\Delta)
=
\mathcal{O}\left(
\frac{1}{\Delta}\log\frac{1}{\Delta}
\right),
\label{eq:SC_gt_1}
\end{equation}
because
\begin{equation}
\frac{1}{p_{\mathrm{rare}}}
=
\Theta\left(\Delta^{-(\gamma-1)/\gamma}\right)
=
o(\Delta^{-1}).
\end{equation}
The quantum complexity is
\begin{align}
\mathcal{S}_Q(\Delta)
&=
\Theta\left(
\frac{1}{\sqrt{\Delta p_{\mathrm{rare}}}}
\right)
=
\Theta\left(
\Delta^{-1/2} 
\Delta^{-(\gamma-1)/(2\gamma)}
\right)
\nonumber\\
&=
\Theta\left(
\Delta^{-1+\frac{1}{2\gamma}}
\right).
\label{eq:SQ_gt_1}
\end{align}

Since $\gamma>1$, we have $-1+\frac{1}{2\gamma}<-\frac{1}{2}$, the exponent in \cref{eq:SQ_gt_1} is more negative than $-1/2$. In other words, the quantum algorithm is strictly worse than the ideal $\Theta(\Delta^{-1/2})$ scaling because it pays an additional amplification penalty coming from the shrinking mass of the rare subspace.
Therefore, the speedup is strictly sub-quadratic.

\textbf{Regime 2: $\gamma=1$.}
At the critical exponent $\gamma=1$, the normalization diverges logarithmically:
\begin{equation}
Z_{N,1}
=
\sum_{j=1}^N \frac{1}{j}
=
\log N + \gamma_{\mathrm E} + \mathcal{O}(1/N)
=
\Theta(\log N),
\label{eq:H_gamma_eq_1}
\end{equation}
where $\gamma_{\mathrm E}$ is the Euler's constant.
The threshold rank is therefore
\begin{equation}
k_\Delta
=
\Theta\left(\frac{1}{\Delta\log N}\right).
\label{eq:kdelta_eq_1}
\end{equation}
Using \cref{eq:prare_integral_scaling},
\begin{align}
p_{\mathrm{rare}}
&=
\Theta\left(
\frac{1}{\log N}
\int_{k_\Delta}^{N}\frac{dx}{x}
\right)
=
\Theta\left(
\frac{\log(N/k_\Delta)}{\log N}
\right).
\label{eq:prare_eq_1_general}
\end{align}

Now substitute $N=\Theta(\Delta^{-q})$. Then
\begin{equation}
\log N
=
q\log\frac{1}{\Delta}+\mathcal{O}(1),
\label{eq:logN_q}
\end{equation}
while \cref{eq:kdelta_eq_1} implies
\begin{equation}
\log k_\Delta
=
\log\frac{1}{\Delta}
-
\log\Big(q\log\frac{1}{\Delta}\Big)
+
\mathcal{O}(1).
\label{eq:logk_q}
\end{equation}
Hence
\begin{align}
\log\frac{N}{k_\Delta}
&=
\log N-\log k_\Delta
\nonumber\\
&=
(q-1)\log\frac{1}{\Delta}
+
\log\log\frac{1}{\Delta}
+
\mathcal{O}(1).
\label{eq:log_ratio_eq_1}
\end{align}
Substituting into \cref{eq:prare_eq_1_general} gives
\begin{equation}
p_{\mathrm{rare}}
=
\Theta\left(
\frac{
(q-1)\log(1/\Delta)+\log\log(1/\Delta)
}{
q\log(1/\Delta)
}
\right).
\label{eq:prare_eq_1_q}
\end{equation}

This formula reveals three distinct subregimes.

\begin{enumerate}[label=(\alph*)]
    \item $q>1$.  In this case,
\begin{equation}
p_{\mathrm{rare}}
\to
1-\frac{1}{q}
\in(0,1).
\label{eq:prare_eq_1_q_gt_1}
\end{equation}
Thus, the rare-event mass remains macroscopic. The classical and quantum complexities become
\begin{align}
\mathcal{S}_C(\Delta)
&=
\mathcal{O}\left(
\frac{1}{\Delta}\log\frac{1}{\Delta}
\right),\\
\mathcal{S}_Q(\Delta)
&=
\Theta\left(
\frac{1}{\sqrt{\Delta}}
\right).
\label{eq:complexity_eq_1_q_gt_1}
\end{align}
Therefore
\begin{equation}
\frac{\mathcal{S}_Q(\Delta)}{\sqrt{\mathcal{S}_C(\Delta)}}
=
\mathcal{O}\left(
\frac{1}{\sqrt{\log(1/\Delta)}}
\right)
\to 0.
\label{eq:strict_superquad_eq_1_q_gt_1}
\end{equation}
    \item $q=1$. This is the boundary case $N=\Theta(1/\Delta)$. \cref{eq:prare_eq_1_q} simplifies to
\begin{equation}
p_{\mathrm{rare}}
=
\Theta\left(
\frac{\log\log(1/\Delta)}{\log(1/\Delta)}
\right).
\label{eq:prare_eq_1_q_eq_1}
\end{equation}
Thus, the rare-event mass is \emph{not} constant, but it vanishes only logarithmically slowly. The quantum complexity becomes
\begin{equation}
\mathcal{S}_Q(\Delta)
=
\Theta\left(
\frac{1}{\sqrt{\Delta}}
\sqrt{
\frac{\log(1/\Delta)}{\log\log(1/\Delta)}
}
\right).
\label{eq:SQ_eq_1_q_eq_1}
\end{equation}
Meanwhile, the classical complexity remains
\begin{equation}
\mathcal{S}_C(\Delta)
=
\mathcal{O}\left(
\frac{1}{\Delta}\log\frac{1}{\Delta}
\right).
\label{eq:SC_eq_1_q_eq_1}
\end{equation}
This is a weaker boundary version of the phase transition.
    \item $q<1$. In this case, $P(x_N)\gg \Delta$, so the rare set is empty. This regime is outside the problem setting.

From the calculation, we see that in the regime of $\gamma=1$, the logarithmic divergence of the normalization exactly compensates the decay of the probabilities. This is the critical point at which the tail stops losing polynomial mass. If $N$ grows slightly faster than $1/\Delta$, the rare-event set already carries a constant fraction of the total probability, and the quantum algorithm reaches its ideal $\Theta(\Delta^{-1/2})$ scaling.
\end{enumerate}

\textbf{Regime 3: $0<\gamma<1$.}
We now turn to the most favorable heavy-tail regime for the quantum algorithm. 
For $0<\gamma<1$, the normalization diverges polynomially with $N$:
\begin{equation}
Z_{N,\gamma}
=
\sum_{j=1}^{N} j^{-\gamma}
=
\frac{N^{1-\gamma}}{1-\gamma}\big(1+o(1)\big).
\label{eq:H_gamma_lt_1_rewrite}
\end{equation}
Therefore, the rank-frequency law takes the asymptotic form
\begin{equation}
P(x_k)
=
\frac{k^{-\gamma}}{Z_{N,\gamma}}
=
(1-\gamma) N^{-(1-\gamma)} k^{-\gamma}\big(1+o(1)\big).
\label{eq:P_gamma_lt_1_rewrite}
\end{equation}

Since $P(x_k)$ is monotone decreasing in $k$, define the threshold rank by
\begin{equation}
k_\Delta
:=
\min\{k\in\{1,\dots,N\}: P(x_k)\le \Delta\}.
\label{eq:kDelta_definition_regime3}
\end{equation}
Whenever $k_\Delta\le N$, the rare-event set is precisely
\begin{align}
R_\Delta=\{x_k: k\ge k_\Delta\}.
\end{align}

Using \cref{eq:P_gamma_lt_1_rewrite}, the condition $P(x_k)\le \Delta$ is equivalent, to leading order, to
\begin{align}
(1-\gamma) N^{-(1-\gamma)} k^{-\gamma}\lesssim \Delta,
\end{align}
which gives
\begin{equation}
k_\Delta
=
\left(
\frac{1-\gamma}{\Delta N^{1-\gamma}}
\right)^{1/\gamma}
\big(1+o(1)\big).
\label{eq:kDelta_regime3}
\end{equation}
Dividing by $N$, we obtain the especially useful ratio
\begin{equation}
\frac{k_\Delta}{N}
=
\left(
\frac{1-\gamma}{\Delta N}
\right)^{1/\gamma}
\big(1+o(1)\big).
\label{eq:kDelta_over_N_regime3}
\end{equation}

This expression makes the nondegeneracy conditions transparent. To have a genuine rare-event problem, we must require both:
\begin{enumerate}
    \item the rare set is non-empty, and
    \item the rare set does not coincide with the entire support.
\end{enumerate}

The first requirement is that at least one state lies below the threshold. Because $P(x_k)$ decreases with $k$, this is equivalent to demanding that the least probable state be rare:
\begin{equation}
P(x_N)\le \Delta.
\label{eq:nonempty_rare_condition}
\end{equation}
Using \cref{eq:P_gamma_lt_1_rewrite} at $k=N$, we find
\begin{equation}
P(x_N)
=
\frac{N^{-\gamma}}{Z_{N,\gamma}}
=
\frac{1-\gamma}{N}\big(1+o(1)\big).
\label{eq:PN_regime3}
\end{equation}
Hence, the rare set is non-empty only if $(1-\gamma)/N\lesssim \Delta$, equivalently, $N\gtrsim (1-\gamma)/\Delta$.
If we parameterize the support size as
\begin{equation}
N(\Delta)=\Theta(\Delta^{-q}),
\label{eq:N_scaling_regime3}
\end{equation}
then it implies, at the exponent level, $q\geq 1$.
This lower bound is therefore not an extra assumption; it is exactly the condition ensuring that the rare set is not empty. More precisely:
\begin{itemize}
    \item if $q<1$, then $P(x_N)\gg \Delta$, so no state is rare asymptotically;
    \item if $q=1$, writing $N\sim c/\Delta$, non-emptiness requires $c\ge 1-\gamma$;
    \item if $q>1$, then $P(x_N)\ll \Delta$, so the rare set is automatically non-empty.
\end{itemize}

The second requirement is that not every state be rare. Since $P(x_1)$ is the largest probability, this means $P(x_1)>\Delta$.
Using \cref{eq:P_gamma_lt_1_rewrite} at $k=1$, we obtain
\begin{equation}
P(x_1)
=
\frac{1}{Z_{N,\gamma}}
=
\frac{1-\gamma}{N^{1-\gamma}}\big(1+o(1)\big).
\label{eq:P1_regime3}
\end{equation}
Thus, to keep the problem nontrivial, we need
\begin{equation}
\frac{1-\gamma}{N^{1-\gamma}}\gtrsim \Delta.
\label{eq:not_all_rare_condition2}
\end{equation}
Under the scaling ansatz $N(\Delta)=\Theta(\Delta^{-q})$, this becomes
\begin{equation}
q<\frac{1}{1-\gamma}
\label{eq:q_upper_bound_regime3}
\end{equation}
at the exponent level. Indeed, if $q>1/(1-\gamma)$, then even the most probable state satisfies $P(x_1)\ll \Delta$, so all states are rare asymptotically. The boundary case $q=1/(1-\gamma)$ depends on prefactors and is excluded here for simplicity.

Combining these two requirements, the genuinely nondegenerate regime is
\begin{equation}
1\le q<\frac{1}{1-\gamma},
\label{eq:nondegenerate_q_regime3}
\end{equation}
with the lower boundary $q=1$ requiring the prefactor condition $N\sim c/\Delta$ and $c>1-\gamma$ if one wants a strictly positive limiting rare mass.

We now evaluate the aggregate rare-event mass. By integral comparison,
\begin{align}
p_{\mathrm{rare}}
&=
\frac{1}{Z_{N,\gamma}}
\sum_{k= k_\Delta}^{N} k^{-\gamma}
\nonumber\\
&=
\frac{1}{Z_{N,\gamma}}
\left(
\frac{N^{1-\gamma}-k_\Delta^{1-\gamma}}{1-\gamma}
\right)
+o(1).
\label{eq:prare_regime3_step1}
\end{align}
Using \cref{eq:H_gamma_lt_1_rewrite}, this simplifies to
\begin{equation}
p_{\mathrm{rare}}
=
1-\left(\frac{k_\Delta}{N}\right)^{1-\gamma}+o(1).
\label{eq:prare_regime3_step2}
\end{equation}
Finally, substituting \cref{eq:kDelta_over_N_regime3} gives
\begin{equation}
p_{\mathrm{rare}}
=
1-
\left(
\frac{1-\gamma}{\Delta N}
\right)^{\frac{1-\gamma}{\gamma}}
+o(1).
\label{eq:prare_regime3_final}
\end{equation}

There are two relevant nondegenerate subcases:
\begin{enumerate}[label=(\alph*)]
    \item $q=1.$
    Write $N\sim c/\Delta$ with $c>1-\gamma$. Then
\begin{equation}
p_{\mathrm{rare}}
\to
1-
\left(
\frac{1-\gamma}{c}
\right)^{\frac{1-\gamma}{\gamma}}
\in(0,1).
\label{eq:prare_regime3_q1}
\end{equation}
Thus, even at the minimal support growth required for non-emptiness, the rare tail already carries a strictly positive constant fraction of the total probability mass, provided $c>1-\gamma$.
\item $1<q<1/(1-\gamma).$
Here $\Delta N\to\infty$, so \cref{eq:prare_regime3_final} gives
\begin{equation}
p_{\mathrm{rare}}\to 1.
\label{eq:prare_regime3_qgt1}
\end{equation}
In other words, as soon as the support grows faster than $1/\Delta$ but still remains below the full-degeneracy threshold, almost all of the probability mass lies in the rare tail.
\end{enumerate}

Therefore, throughout the entire nondegenerate regime in \cref{eq:nondegenerate_q_regime3},
\begin{equation}
p_{\mathrm{rare}}=\Omega(1).
\label{eq:prare_Omega1_regime3}
\end{equation}

This is the decisive structural property of the $0<\gamma<1$ regime. Although each individual rare event has a probability at most $\Delta$, the set of all such events retains a macroscopic fraction of the total mass. The tail is therefore heavy not only in cardinality, but also in aggregate probability. As a result, the amplitude-amplification stage of the quantum algorithm does not suffer any asymptotic penalty from a vanishing target subspace.

Substituting $p_{\mathrm{rare}}=\Omega(1)$ into the general complexity bounds yields
\begin{align}
\mathcal{S}_C
&=
\mathcal{O}\left(
\frac{1}{\Delta}\log\frac{1}{\Delta}
\right),\\
\mathcal{S}_Q
&=
\mathcal{O}\left(
\frac{1}{\sqrt{\Delta}}
\right).
\end{align}
Hence the quantum algorithm retains its ideal $\Delta^{-1/2}$ scaling, while the classical method still pays the logarithmic identification cost.

\subsection{Summary of quantum advantages and interpretations}

The preceding case analysis shows that the critical quantity is not merely the number of rare states, but the total probability mass carried by those states. The quantum algorithm always pays the thresholding cost $\Theta(\Delta^{-1/2})$, but it pays an additional amplification factor $p_{\mathrm{rare}}^{-1/2}$. Thus:
\begin{itemize}
    \item if $p_{\mathrm{rare}}$ vanishes polynomially, as it does for $\gamma>1$, the speedup is degraded to sub-quadratic;
    \item if $p_{\mathrm{rare}}=\Omega(1)$, as it does generically for $0<\gamma<1$ and for $\gamma=1$ with $q>1$, the quantum complexity saturates at $\Theta(\Delta^{-1/2})$;
    \item at the boundary $(\gamma,q)=(1,1)$, the rare mass shrinks only logarithmically, quantum algorithm does not achieve the ideal $\Theta(\Delta^{-1/2})$ form.
\end{itemize}

This identifies a sharp phase transition at $\gamma=1$ \emph{within the power-law family of \cref{eq:zipf_model}}, and we list all the results in \cref{tab:complexity_scaling_corrected}.
Above the critical exponent, the rare tail is too light and $p_{\mathrm{rare}}$ collapses. At and below the critical exponent, the tail remains macroscopically populated, and the amplitude-amplification penalty disappears.

The argument proves a phase transition \emph{within the rank-frequency power-law model}. We do not claim that $\gamma\le 1$ is the only possible way for an arbitrary distribution family to satisfy $p_{\mathrm{rare}}=\Omega(1)$ together with $N=\mathrm{poly}(1/\Delta)$. For the Zipfian power-law families analyzed here, the threshold $\gamma=1$ is where the asymptotic behavior changes.

\begin{table}[t]
\centering
\small
\renewcommand{\arraystretch}{1.22}
\setlength{\tabcolsep}{4.5pt}
\begin{threeparttable}
\caption{Asymptotic regimes for rank-frequency power laws
$P(x_k)\propto k^{-\gamma}$ with $N(\Delta)=\Theta(\Delta^{-q})$.}
\label{tab:complexity_scaling_corrected}
\begin{tabular}{@{}ccccc@{}}
\toprule
\textbf{Tail regime}
&
\textbf{Condition on $q$}
&
\textbf{Rare mass $p_{\mathrm{rare}}$}
&
\textbf{Quantum cost $\mathcal{S}_Q(\Delta)$}
&
\textbf{Advantage}
\\
\midrule
$\gamma>1$
&
$q\ge 1/\gamma$
&
$\Theta\left(\Delta^{(\gamma-1)/\gamma}\right)$
&
$\Theta\left(\Delta^{-1+\frac{1}{2\gamma}}\right)$
&
Subquadratic
\\
$\gamma=1$
&
$q>1$
&
$1-\frac{1}{q}+o(1)$
&
$\Theta \left(\Delta^{-1/2}\right)$
&
Ideal quadratic
\\
$\gamma=1$
&
$q=1$
&
$\Theta \left(
\frac{\log\log(1/\Delta)}{\log(1/\Delta)}
\right)$
&
$\Theta \left(
\Delta^{-1/2}
\sqrt{
\frac{\log(1/\Delta)}{\log\log(1/\Delta)}
}
\right)$
&
Nearly quadratic
\\
$0<\gamma<1$
&
$1\le q<1/(1-\gamma)$
&
$\Omega(1)$
&
$\Theta \left(\Delta^{-1/2}\right)$
&
Ideal quadratic
\\
$\gamma=0$
&
--
&
Degenerate
&
Degenerate
&
Degenerate
\\
\bottomrule
\end{tabular}
\begin{tablenotes}[flushleft]
\footnotesize
\item Here $\mathcal{S}_Q(\Delta)$ suppresses logarithmic factors in
$\log(1/\epsilon)$. In the nondegenerate regimes, the classical
cost satisfies
$\mathcal{S}_C(\Delta)=O \left(\Delta^{-1}\log(1/\Delta)\right)$
and $\mathcal{S}_C(\Delta)=\Omega(\Delta^{-1})$.
The transition occurs at $\gamma=1$.
\end{tablenotes}
\end{threeparttable}
\end{table}

\section{Application to stochastic process}

\subsection{Stochastic processes and the Asymptotic Equipartition Property}
\label{app:stochastic&AEP}

A discrete stochastic process generates a random output $X_t$ at each time step $t$, which takes value $x_t$ from an alphabet $\mathcal{X}$ of finite alphabet size. A consecutive $L$ output sequence $x_{t:t+L}:= x_t,x_{t+1},\cdots x_{t+L-1}$ is governed by the joint probability distribution $\Pr(x_{t:t+L})$.

Stochastic processes can be effectively simulated using recurrent models, which comprise a memory system and an output mechanism~\cite{rabiner2003introduction}. At time step $t$, given a memory state $S_i$, the model generates an output $x_t$ and updates its memory to state $S_j$ according to the joint probability $P( S_j, x_t | S_i)$. Following this update, the model recursively applies these transition rules to generate the subsequent output sequence.

\begin{definition}[Joint entropy]
    Given a collection of random variables $X_1,\dots, X_L$, with support on $\Omega_1,\dots,\Omega_L$ respectively, we define the joint entropy of the collection of discrete random variables as
    \begin{align}
        H(X_1,\dots,X_L)=-\sum_{x_1\in \mathcal{X}}\cdots\sum_{x_L\in \mathcal{X}} P(x_1,\dots,x_L)\log P(x_1,\dots, x_L).
    \end{align}
\end{definition}

For discrete-time stochastic processes, the entropy rate is defined asymptotically.
However, note that the entropy rate may not exist for an arbitrary stochastic process.

\begin{definition}[Entropy rate of stochastic process]
    For a discrete-valued, discrete-time stochastic process $\mathfrak{X}$, the entropy rate is defined as
    \begin{align}
        H(\mathfrak{X})=\lim_{L\rightarrow \infty} \frac{1}{L}H(X_{0:L}).
    \end{align}
\end{definition}

Here, we focus on the so-called stationary distribution, where any consecutive $L$ outputs are statistically invariant with respect to time,
\begin{equation}
    P(x_{0:L}) = P(x_{t:t+L}) \quad \forall t, L.
\end{equation}

For a stationary process, it can be shown that the entropy rate exists and can be quantified via the conditional entropy.
\begin{lemma}
    For a stationary stochastic process, the entropy rate exists and is equal to
    \begin{align}
        H(\mathfrak{X})=\lim_{L\rightarrow \infty} H(X_{L-1}|X_{L-2},\dots, X_0).
    \end{align}
\end{lemma}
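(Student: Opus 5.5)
The plan is the textbook argument: first show that $h_L := H(X_{L-1}\mid X_{L-2},\dots,X_0)$ is monotone and bounded, so it has a limit. Then show that $\frac1L H(X_{0:L})$ is the running average of the $h_k$ and therefore has the same limit (Cesàro).

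\emph{Step 1: monotonicity and boundedness.} By the chain rule $H(X_{0:L})=\sum_{k=1}^{L} h_k$, where $h_1 = H(X_0)$. Since conditioning cannot increase entropy,
\begin{align}
h_{L+1}=H(X_L\mid X_{L-1},\dots,X_0)\le H(X_L\mid X_{L-1},\dots,X_1).
\end{align}
By stationarity, the joint law of $(X_1,\dots,X_L)$ equals that of $(X_0,\dots,X_{L-1})$. Hence the right-hand side equals $h_L$, so $h_{L+1}\le h_L$. Every $h_L$ lies in $[0,\log|\mathcal X|]$ because the alphabet is finite. A nonincreasing sequence bounded below converges, so $h:=\lim_{L\to\infty} h_L$ exists.

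\emph{Step 2: Cesàro averaging.} We have $\frac1L H(X_{0:L})=\frac1L\sum_{k=1}^L h_k$. Given $\varepsilon>0$, choose $K$ such that $|h_k-h|<\varepsilon$ for all $k>K$. Split the sum into $k\le K$ and $k>K$:
\begin{align}
\left|\frac1L\sum_{k=1}^L h_k - h\right|\le \frac{K\cdot 2\log|\mathcal X|}{L}+\varepsilon .
\end{align}
Letting $L\to\infty$ and then $\varepsilon\to0$ gives $\lim_{L\to\infty}\frac1L H(X_{0:L}) = h$. So the entropy rate exists and equals the stated conditional-entropy limit.

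\emph{Main point of care.} There is no real obstacle. The one step that needs attention is the stationarity shift in Step 1: it must be applied to the joint law of the whole block $(X_1,\dots,X_L)$, not to individual marginals, so that the two conditional entropies match exactly. The only other fact used is that conditioning on an extra variable ($X_0$) does not increase entropy, which is standard (it is nonnegativity of conditional mutual information).
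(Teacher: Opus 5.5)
Your proof is correct and complete. It is the standard textbook argument: monotonicity of $h_L$ from ``conditioning does not increase entropy'' together with the block-shift form of stationarity (which is exactly the paper's definition $P(x_{0:L})=P(x_{t:t+L})$), followed by Ces\`{a}ro averaging through the chain rule. The paper gives no proof of its own and simply quotes the lemma as this known fact, so there is nothing to compare beyond noting that $|h_k-h|\le\log|\mathcal X|$ already suffices in Step 2, which makes your factor of $2$ unnecessary though harmless.
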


The following theorem states that for a well-conditioned stochastic process, almost all sequences we expect to see a probability $p(x_{0:L})\approx 2^{-LH(\mathfrak{X})}$.
A stationary stochastic process is ergodic if the time average of the output sequence converges to the ensemble average (or space average), namely,
\begin{equation}
    \lim_{L\to \infty} \frac{1}{L}\sum_{t=1}^L X_t = \mathbb{E}[X].
\end{equation}

\begin{lemma}[Asymptotic Equipartition Property (AEP) \cite{algoet1988Sandwicha}]
    Assume $\mathfrak{X}$ is a stationary and ergodic process, the probability of a sequence $x_{0:L}$ converges to the entropy rate of the stochastic process, that is,
    \begin{equation}
        \lim_{L\to \infty}-\frac{1}{L} \log P(x_{0:L}) = H(\mathfrak{X}).
    \end{equation}
\end{lemma}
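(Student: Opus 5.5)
The plan is to prove the AEP in the Shannon--McMillan--Breiman form by the sandwich argument of Algoet and Cover. The statement is understood almost surely: for a random trajectory $X_{0:L}$ drawn from the process, $-\frac1L\log P(X_{0:L})\to H(\mathfrak X)$. The idea is to bracket $P(X_{0:L})$ between two quantities that each obey an ordinary ergodic theorem. These are the $k$-th order Markov approximations and the infinite-past conditional.

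\emph{Upper approximation.} For fixed $k$, define
\[
P^{(k)}(x_{0:L}) = P(x_{0:k})\prod_{t=k}^{L-1}P(x_t\mid x_{t-k:t}).
\]
Then $-\frac1L\log P^{(k)}(X_{0:L})$ is an ergodic average of the stationary function $-\log P(X_t\mid X_{t-k:t})$, plus a term that vanishes. Birkhoff's ergodic theorem gives almost sure convergence of this quantity to $H^{k}:=H(X_k\mid X_{0:k})$.

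\emph{Lower approximation.} Define $P(x_{0:L}\mid X_{-\infty:0})=\prod_t P(x_t\mid X_{-\infty:t})$. Its normalized log converges almost surely to $H^\infty:=\mathbb E[-\log P(X_0\mid X_{-\infty:0})]$. Again this is Birkhoff's theorem, with martingale convergence used to make sense of the infinite-past conditional.

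\emph{Sandwich.} Both ratios $P^{(k)}(X_{0:L})/P(X_{0:L})$ and $P(X_{0:L})/P(X_{0:L}\mid X_{-\infty:0})$ have expectation at most $1$. Markov's inequality then bounds the probability that either ratio exceeds $L^2$ by $1/L^2$. Borel--Cantelli gives
\[
\limsup_{L} \frac1L\log\frac{P^{(k)}}{P}\le 0
\quad\text{and}\quad
\limsup_{L} \frac1L\log\frac{P}{P(\cdot\mid\text{past})}\le 0
\]
almost surely. Combining these,
\[
H^\infty\le\liminf_L -\tfrac1L\log P(X_{0:L})\le\limsup_L -\tfrac1L\log P(X_{0:L})\le H^{k}.
\]

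The main obstacle is closing the sandwich: one must show $H^k\downarrow H^\infty=H(\mathfrak X)$ as $k\to\infty$. By the preceding lemma, $H^k$ decreases to the entropy rate, so the remaining step is to identify $H^\infty$ with that limit. I would try martingale convergence of $P(X_0\mid X_{-k:0})\to P(X_0\mid X_{-\infty:0})$ first. Because the alphabet is finite, $-\log P$ is uniformly integrable; this follows from the bound $\mathbb E[\sup_k -\log P(X_0\mid X_{-k:0})]\le \log|\mathcal X|+$const. Convergence of the conditional entropies then follows.

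Letting $k\to\infty$ finishes the proof.
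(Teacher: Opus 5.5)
Your proposal is correct and is exactly the Algoet--Cover sandwich argument that the paper relies on. The paper gives no proof of its own and only cites \cite{algoet1988Sandwicha}, and you read the statement correctly as almost-sure convergence along a random trajectory of a two-sided stationary process that is ergodic in the usual shift-invariant sense, which is stronger than the paper's informal ``time average of $X_t$'' definition. To close the sandwich, the maximal bound $\mathbb E[\sup_k -\log P(X_0\mid X_{-k:0})]<\infty$ you invoke is true (Chung's lemma) but heavier than needed: write $H^k=\mathbb E\bigl[\sum_{x\in\mathcal X}\phi\bigl(P(X_0=x\mid X_{-k:0})\bigr)\bigr]$ with $\phi(u)=-u\log u$ bounded and continuous on $[0,1]$, and L\'evy's martingale convergence plus bounded convergence then gives $H^k\to H^\infty$ directly.
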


This can be understood as an information-theoretic analog of the law of large numbers.
We define the typical sequence to have a tighter bound for these sequences.

\begin{definition}[$\epsilon_t$-typical sequence]
    For $\epsilon_t>0$, we say a $L$-length sequence $x_{0:L}$ is $\epsilon_t$-typical if
    \begin{align}
        \left|-\frac{1}{L}\log P(x_{0:L})-H(\mathfrak{X})\right|\leq \epsilon_t.
    \end{align}
\end{definition}
We define typical set $\mathcal{T}(\epsilon_t,L)$ as the set of all $\epsilon_t$-typical $L$-length sequences, that is,
\begin{equation}
    \mathcal{T}(\epsilon_t,L) \coloneqq \left\{ x_{0:L} : \left| - \frac{1}{L}\log P(x_{0:L}) - H(\mathfrak{X}) \right| \leq \epsilon_t\right\}.
\end{equation}

The AEP theorem implies the following theorem about the typical set:
\begin{lemma}[Properties of typical sequence]\label{thm.typical}
    Let $\epsilon_t>0$ be fixed.
    For any $\delta_t>0$, for sufficiently large $L$, the probability that a sequence is $\epsilon_t$-typical is at least $1-\delta$, namely,
    \begin{align}
        P(x_{0:L}\in \mathcal{T}(\epsilon_t,L) ) \geq 1-\delta_t,
    \end{align}
    and the size of the typical set satisfies
    \begin{align}
        (1-\delta_t)2^{L(H(\mathfrak{X}) - \epsilon_t)} \leq \left| \mathcal{T}(\epsilon_t,L) \right| \leq 2^{L(H(\mathfrak{X}) + \epsilon_t)}.
    \end{align}
\end{lemma}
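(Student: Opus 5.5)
The plan is to derive both claims of \cref{thm.typical} from the AEP lemma, which gives convergence in probability of $-\frac{1}{L}\log P(X_{0:L})$ to $H(\mathfrak{X})$. Almost-sure convergence, as in the Shannon--McMillan--Breiman form of Ref.~\cite{algoet1988Sandwicha}, implies convergence in probability. The two parts are then handled separately: a probability bound first, and then a two-sided counting bound that uses the probability bound together with the defining inequalities of $\mathcal{T}(\epsilon_t,L)$.

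\textbf{Probability bound.} For fixed $\epsilon_t>0$, convergence in probability means
\begin{align}
\lim_{L\to\infty}\Pr\left[\left|-\tfrac{1}{L}\log P(X_{0:L})-H(\mathfrak{X})\right|>\epsilon_t\right]=0 .
\end{align}
Hence for any $\delta_t>0$ there is an $L_0$ such that for all $L\ge L_0$ this probability is at most $\delta_t$. The complement event is exactly $X_{0:L}\in\mathcal{T}(\epsilon_t,L)$, which gives $P(\mathcal{T}(\epsilon_t,L))\ge 1-\delta_t$.

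\textbf{Upper bound on the size.} Every $x_{0:L}\in\mathcal{T}(\epsilon_t,L)$ satisfies $P(x_{0:L})\ge 2^{-L(H(\mathfrak{X})+\epsilon_t)}$. Summing over the typical set,
\begin{align}
1\ge\sum_{x_{0:L}\in\mathcal{T}}P(x_{0:L})\ge|\mathcal{T}(\epsilon_t,L)|\,2^{-L(H(\mathfrak{X})+\epsilon_t)},
\end{align}
so $|\mathcal{T}(\epsilon_t,L)|\le 2^{L(H(\mathfrak{X})+\epsilon_t)}$. This bound holds for every $L$, not only for large $L$.

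\textbf{Lower bound on the size.} Every typical sequence also satisfies $P(x_{0:L})\le 2^{-L(H(\mathfrak{X})-\epsilon_t)}$. Combining this with the probability bound gives, for $L\ge L_0$,
\begin{align}
1-\delta_t\le P(\mathcal{T})\le|\mathcal{T}(\epsilon_t,L)|\,2^{-L(H(\mathfrak{X})-\epsilon_t)} .
\end{align}
Rearranging yields $|\mathcal{T}(\epsilon_t,L)|\ge(1-\delta_t)2^{L(H(\mathfrak{X})-\epsilon_t)}$.

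\textbf{Main obstacle.} The counting steps are elementary. The only substantive input is the AEP itself, namely the Shannon--McMillan--Breiman theorem for stationary ergodic processes, which we take from the preceding lemma. Two technical points need care.
\begin{enumerate}
\item The AEP lemma is phrased for a sequence $x_{0:L}$. It must be read as a statement about the random variable $X_{0:L}$ drawn from $P$, with almost-sure (hence in-probability) convergence. This reading is what makes the probability bound above valid.
\item The "$\delta$" appearing in the statement should be read as $\delta_t$.
\end{enumerate}
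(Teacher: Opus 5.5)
Your proposal is correct and takes essentially the same route as the paper, which gives no separate proof and presents this lemma as a direct consequence of the AEP lemma. Your argument is the standard derivation it implicitly relies on: convergence in probability for the mass bound, then summing the defining probability bounds of typical sequences for the two cardinality bounds. Your remarks are also appropriate: the AEP must be read as almost-sure convergence of $-\frac{1}{L}\log P(X_{0:L})$ for the random sequence, and $\delta$ should read $\delta_t$.
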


An important observation is that the size of the typical set is exponentially smaller than the whole possible set, unless the stochastic process is nearly uniformly distributed. This can be seen from evaluating the ratio between the size of the typical set and the size of all events.
\begin{align}
    \frac{| \mathcal{T}(\epsilon_t,L)|}{|\mathcal{X}|^L}\leq \frac{2^{L(H(\mathfrak{X}) + \epsilon_t)}}{|\mathcal{X}|^L}=\frac{2^{L(H(\mathfrak{X}) + \epsilon_t)}}{2^{L \log |\mathcal{X}|}}=2^{L(H(\mathfrak{X}) + \epsilon_t-\log |\mathcal{X}|)}.
\end{align}

A $\varepsilon$-machine is a special type of Hidden Markov Model that consists of several hidden states, which serve as the memory states $S_i$ that carry information from the past.
The past is defined as a semi-infinite sequence $\overleftarrow{x}:= x_{-\infty:0}$. An epsilon machine maps the past $\overleftarrow{x}$ to one of the hidden states $S_i$. And at each time step, the machine generates an output $x$ and updates its memory state from $S_i$ to $S_j$ with probability $\Pr(x, S_j|S_i)$. Such a $\varepsilon$-machine is also called a unifilar hidden Markov model whose output $x_t$ and the hidden state $S_t$ determine the next hidden state $S_{t+1}$ uniquely.

\subsection{Thermodynamic mapping for rare events}
\label{app:ratio}

By definition, rare events are events that are less likely to happen than typical events.
The AEP theorem, per \cref{thm.typical}, claims that all of the typical sequences happen with probability roughly $2^{-LH(\mathfrak{X})}$.
\begin{definition}[Rare event for stochastic process]\label{def.rare.stoq}
    For a stochastic process $\mathfrak{X}$ with entropy rate $H(\mathfrak{X})$, we define the set of rare events as
    \begin{align}
        \mathcal{R}_K^L = \{x_{0:L} | P(x_{0:L}) \leq  2^{-KL H(\mathfrak{X})}\}\coloneqq \{x_{0:L} | P(x_{0:L}) \leq  \Delta\},
    \end{align}
    where $K>1$ is some large constant, and $\Delta = 2^{-L KH(\mathfrak{X})}$.
\end{definition}

To verify whether the event is rare or not, there is a small range of uncertainty.
\begin{definition}[Unsure event for stochastic process]
    For a stochastic process $\mathfrak{X}$ with entropy rate $H(\mathfrak{X})$, we define the set of unsure events as
    \begin{align}
        \mathcal{S}_\xi = \{x_{0:L}| \Delta \leq \Pr(x_{0:L}) \leq \Delta +2\xi\},
    \end{align}
    where $\Delta = 2^{-L KH(\mathfrak{X})}$, and $0<\xi<\Delta$.
\end{definition}

To bound the ratio of rare events is to notice that the entropy rate is similar to the ``energy density", inspired by the concept of Boltzmann weight in statistical physics. By changing the ``temperature", it is possible to construct a mapping, such that the atypical set of the original process is then transformed into the typical set of the new process \cite{Aghamohammadi2017Minimuma}.
Analogously to the energy, we write $u = KH(\mathfrak{X})$.
We define a set
\begin{equation}
    \Lambda_{u,L}^\mathfrak{X} = \left\{x_{0:L}\middle| - \frac{\log_2\Pr(x_{0:L})}{L} = u\right\}.
\end{equation}
\begin{lemma}[$\beta$-mapping \cite{Aghamohammadi2017Minimuma}]\label{lemma:beta.mapping}
    For the stationary stochastic process $\mathfrak{X}$, there exists a new stochastic process $\mathfrak{X}_\beta$ such that
    \begin{equation}
        \lim_{L\to\infty}\Lambda_{u,L}^\mathfrak{X} = \lim_{L\to\infty}\Lambda_{u_\beta,L}^{\mathfrak{X}_\beta},
    \end{equation}
    where $u_\beta = \beta u +\log_2 \lambda$, and $\beta, \lambda>0$.
    For any $u$, there exists a $\beta$ such that $u_\beta = H(\mathfrak{X}_\beta)$.
\end{lemma}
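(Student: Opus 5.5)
The plan is to realise the $\beta$-mapping as an exponential tilt of the transfer (Perron--Frobenius) operator of the process. I will work with a finite-state unifilar model, for instance the $\varepsilon$-machine of \cref{app:stochastic&AEP}, with labelled transition matrices $T^{(x)}_{ij}=\Pr(x,S_j\mid S_i)$. For $\beta>0$ define the tilted matrix
\begin{equation*}
T_\beta=\sum_{x}\bigl(T^{(x)}\bigr)^{\circ\beta},
\end{equation*}
where $\circ\beta$ denotes the entrywise power. By irreducibility and Perron--Frobenius, $T_\beta$ has a simple leading eigenvalue $\lambda_\beta>0$ with a positive right eigenvector $r_\beta$. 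The new process $\mathfrak{X}_\beta$ is then defined by the Doob-transformed transitions
\begin{equation*}
\Pr_\beta(x,S_j\mid S_i)=\frac{\bigl(T^{(x)}_{ij}\bigr)^\beta\,(r_\beta)_j}{\lambda_\beta\,(r_\beta)_i}.
\end{equation*}
This is a stochastic, stationary and ergodic process, and by construction it is again unifilar.

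Next I compute word probabilities. Because the model is unifilar, a word $x_{0:L}$ starting from state $S_{i_0}$ follows a unique state path. The ratios of the eigenvector entries telescope along that path, which gives
\begin{equation*}
\Pr_\beta(x_{0:L})=\frac{\Pr(x_{0:L})^{\beta}}{\lambda_\beta^{L}}\cdot\frac{(r_\beta)_{i_L}}{(r_\beta)_{i_0}},
\end{equation*}
after also averaging over the initial stationary distribution. Both the boundary factor and the initial-distribution factor are bounded above and below by constants independent of $L$. Taking $-\tfrac1L\log_2$ of both sides gives $-\tfrac1L\log_2\Pr_\beta(x_{0:L})=\beta u+\log_2\lambda_\beta+\mathcal{O}(1/L)$ whenever $-\tfrac1L\log_2\Pr(x_{0:L})=u$. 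Hence the level sets coincide in the $L\to\infty$ limit, with $u_\beta=\beta u+\log_2\lambda$ where $\lambda=\lambda_\beta$. One caveat is that the $\mathcal{O}(1/L)$ term is uniform but not zero, so the set equality has to be read as an equality of limiting energy classes.

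Finally, for the last claim I study $\phi(\beta)=\log_2\lambda_\beta$. This function is convex and analytic (a standard log-spectral-radius argument), and by the AEP applied to $\mathfrak{X}_\beta$ its entropy rate is $H(\mathfrak{X}_\beta)=\beta u^*(\beta)+\phi(\beta)$ with $u^*(\beta)=-\phi'(\beta)$. So $\mathfrak{X}_\beta$ concentrates on original sequences of energy $u^*(\beta)$, and the condition $u_\beta=H(\mathfrak{X}_\beta)$ reduces to $u=u^*(\beta)=-\phi'(\beta)$.

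The main obstacle is this last inversion. Since $\phi$ is convex, $-\phi'$ is monotone, and at $\beta=1$ we have $\lambda_1=1$ and $u^*(1)=H(\mathfrak{X})$. By continuity, $u^*$ sweeps from the maximal energy as $\beta\to0^+$ down to the minimal energy as $\beta\to\infty$. So every $u$ strictly inside the range of attainable energies, in particular $u=KH(\mathfrak{X})$ for $K>1$ not too large, is hit by some $\beta\in(0,1)$. I would restrict the statement to $u$ in this interval and handle the endpoints as limits.
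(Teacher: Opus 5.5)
The paper gives no proof of this lemma beyond citing the reference. Your construction is the standard one from that reference: the entrywise tilt $T_\beta$, the Perron root $\lambda_\beta$, the Doob transform, and the telescoping of $r_\beta$ along the unique unifilar path. It is sound for processes with a finite irreducible unifilar presentation, a hypothesis the lemma's ``for the stationary stochastic process'' leaves implicit. For $\beta>0$ your averaging step is also fine. Write $p_i=\Pr(x_{0:L}\mid S_i)$. Then both $\Pr(x_{0:L})$ and $\bigl(\lambda_\beta^{L}\Pr_\beta(x_{0:L})\bigr)^{1/\beta}$ lie within $L$-independent factors of $\max_i p_i$.

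The gap is in the final inversion. It is false that $u^*(\beta)=-\phi'(\beta)$ tends to the maximal energy as $\beta\to0^+$. Since $\phi$ is analytic at $0$, $u^*(0^+)=u^*(0)$, which is the energy rate of the typical sequences of $\mathfrak{X}_0$, the maximal-entropy (Parry) measure on the support. The maximal energy $u_{\max}$ is approached only as $\beta\to-\infty$. A concrete example is an i.i.d.\ coin with $\Pr(1)=p<1/2$. Then $\mathfrak{X}_\beta$ is i.i.d.\ with $\Pr_\beta(1)=p^\beta/(p^\beta+(1-p)^\beta)<1/2$ for every $\beta>0$. So $u^*$ maps $(0,\infty)$ onto $\bigl(-\log_2(1-p),\,-\tfrac12\log_2[p(1-p)]\bigr)$ and never reaches energies in $\bigl[-\tfrac12\log_2[p(1-p)],\,-\log_2 p\bigr)$. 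In general, since $u^*\le u_{\max}$ is nonincreasing and $\phi$ is analytic on $\mathbb{R}$, $u^*(0)=u_{\max}$ would force $u^*$ to be constant. Hence $u^*(0)<u_{\max}$ whenever $u_{\min}<u_{\max}$.

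With $\beta>0$ you can therefore solve $u=u^*(\beta)$ only for $u\in(u_{\min},u^*(0))$. Your ``$K$ not too large'' must become the explicit condition $KH(\mathfrak{X})<-\phi'(0)$, and the lemma's ``for any $u$'' is false under its own restriction $\beta>0$. Covering the remaining rare-event shells requires $\beta<0$. Taking $\beta=0$ does not help: there $u_\beta=\log_2\lambda_0=H(\mathfrak{X}_0)$ holds trivially for every $u$, and the level-set correspondence collapses.

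The tilt and Doob transform extend verbatim to $\beta<0$ if the power is applied only on the support, but your averaging bound does not. For $\beta<0$ the sum $\sum_i\pi_\beta(i)p_i^\beta$ is dominated by the smallest nonzero $p_i$ rather than by $\max_i p_i$. You would then need a fixed start state, as in the paper's numerics, or a synchronization assumption such as finite Markov order, so that the start-state-conditional probabilities stay within bounded ratios.
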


With this lemma in hand, we can bound the size of the rare event set $\mathcal{S}_\xi$.

\begin{theorem}[Bounds for the rare event]\label{thm.ratio}
    For sufficiently large $L$, the probability $p_{rare}$ rare event set can be lower bounded by
    \begin{equation}
        p_{\mathrm{rare}} =\Omega( \Delta^{1 - \mu})
    \end{equation}
    where $\mu = \frac{H(\mathfrak{X}_\beta)}{KH(\mathfrak{X})} <1$.
\end{theorem}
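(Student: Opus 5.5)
We lower bound $p_{\mathrm{rare}}$ by the probability mass of the single "energy shell" at the threshold, $\Lambda^{\mathfrak X}_{u,L}$ with $u=KH(\mathfrak X)$. Every sequence in this shell has $P(x_{0:L})\approx 2^{-uL}=\Delta$, so up to the $2\xi$ slack in $\mathcal S_\xi$ it lies in $\mathcal R_K^L$; we may also use a slightly larger $u'=u+\eta$, which lies strictly inside the rare set. The resulting bound is
\begin{equation*}
p_{\mathrm{rare}} \;\ge\; \sum_{x_{0:L}\in \Lambda^{\mathfrak X}_{u',L}} P(x_{0:L}) \;\approx\; |\Lambda^{\mathfrak X}_{u',L}|\, 2^{-u'L}.
\end{equation*}
The task therefore reduces to estimating the cardinality of this shell, i.e.\ to showing it grows like $2^{s L}$ with $s=H(\mathfrak X_\beta)$.

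\textbf{Counting the shell via \cref{lemma:beta.mapping}.} Choose $\beta$ so that $u_\beta=\beta u+\log_2\lambda=H(\mathfrak X_\beta)$. The lemma then identifies the atypical shell $\Lambda^{\mathfrak X}_{u,L}$ of $\mathfrak X$ with the typical shell $\Lambda^{\mathfrak X_\beta}_{H(\mathfrak X_\beta),L}$ of the tilted process $\mathfrak X_\beta$, asymptotically in $L$. Assuming $\mathfrak X_\beta$ is stationary and ergodic (it is again an $\varepsilon$-machine with tilted transition weights), \cref{thm.typical} applies to it with tolerance $\epsilon_t$. This gives, for large $L$,
\begin{equation*}
|\Lambda|\ge (1-\delta_t)\,2^{L(H(\mathfrak X_\beta)-\epsilon_t)}.
\end{equation*}
Here $\Lambda$ is interpreted as the $\epsilon_t$-thickened shell, which is the only sensible reading at finite $L$. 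Under the correspondence, sequences in this typical set of $\mathfrak X_\beta$ have original probability $2^{-L(u\pm\epsilon_t/\beta)}$.

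\textbf{Assembling the bound.} Combining the two estimates,
\begin{equation*}
p_{\mathrm{rare}}\gtrsim 2^{L(H(\mathfrak X_\beta)-\epsilon_t)}\,2^{-L(u+\epsilon_t/\beta)} = \Delta^{\,1-H(\mathfrak X_\beta)/(KH(\mathfrak X))}\,2^{-O(\epsilon_t)L},
\end{equation*}
using $\Delta=2^{-uL}$. This is $\Delta^{1-\mu}$ up to a subexponential correction once $\epsilon_t\to 0$ with $L$. The claim $\mu<1$ is $H(\mathfrak X_\beta)<KH(\mathfrak X)=u$, which holds because the typical set of $\mathfrak X_\beta$ has total mass at most $1$ under $\mathfrak X$. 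So $|\Lambda|\,2^{-uL}\le 1$, giving $H(\mathfrak X_\beta)\le u$, with strict inequality since $K>1$ puts the shell away from the typical set of $\mathfrak X$.

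\textbf{Main obstacle.} The delicate step is the shell thickness. The rare set requires $P\le\Delta$ exactly, whereas typicality only pins down $-\frac1L\log P$ within $\pm\epsilon_t$. The remedy is to shift the target energy to $u+\epsilon_t/\beta$ so that the whole thickened shell lies inside $\mathcal R_K^L$. Continuity of $\beta\mapsto H(\mathfrak X_\beta)$ then shows that the shift changes the exponent only by $O(\epsilon_t)$. If a genuine $\Omega(\cdot)$ bound with constants is demanded, rather than matching leading exponents, this step must be handled by a large-deviation (Gärtner–Ellis) statement for $\mathfrak X$. At present the paper's $\Omega$ must be read at the exponent level.
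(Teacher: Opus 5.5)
Your proposal is correct and follows essentially the same route as the paper: shift the threshold energy to $u=KH(\mathfrak X)+\eta$, use the $\beta$-mapping to place the typical set of $\mathfrak X_\beta$ inside a thickened shell $\Lambda^{\mathfrak X,\epsilon_2}_{u,L}\subseteq\mathcal R_K^L$ (with $\epsilon_2<\eta$), and multiply the typical-set cardinality bound $(1-\delta)2^{L(H(\mathfrak X_\beta)-\epsilon_1)}$ by the minimum per-sequence probability $2^{-L(KH(\mathfrak X)+\eta+\epsilon_2)}$. Your caveat that this gives $\Omega(\Delta^{1-\mu})$ only at the level of exponents is accurate, since the paper silently absorbs the exponentially small factor $2^{-L(\eta+\epsilon_1+\epsilon_2)}$ in its last step. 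Your argument yields $\mu\le 1$, whereas strict $\mu<1$ would need a large-deviation rate; the paper simply asserts $\mu<1$ without proof.
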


\begin{proof}
    According to \cref{lemma:beta.mapping}, given a $u$, we can find a $\beta$ such that
    \begin{equation}
        \lim_{L\to\infty}\Lambda_{u,L}^\mathfrak{X} = \lim_{L\to\infty}\Lambda_{u_\beta,L}^{\mathfrak{X}_\beta}, u_\beta = H(\mathfrak{X}_\beta).
    \end{equation}
    We can define an alternative set 
    \begin{equation}
        \Lambda_{u,L}^{\mathfrak{X},\epsilon} = \left\{x_{0:L}\middle| |- \frac{\log_2\Pr(x_{0:L})}{L} - u| \leq \epsilon \right\}.
    \end{equation}

    For any tolerant error $\epsilon_2$, we can find a tolerant error $\epsilon_1$ and sequence length $L$, such that the typical set $\mathcal{T}_{\beta}(\epsilon_1,L)$ of process $\mathfrak{X}$   satisfies
    \begin{equation}
        \mathcal{T}_{\beta}(\epsilon_1,L) \subseteq  \Lambda_{u,L}^{\mathfrak{X},\epsilon_2}
    \end{equation}

    Let $u= KH(\mathfrak{X}) + \eta$, where $\eta$ is a very small shift. We choose $\epsilon_2 <\eta$. Then we can obtain
    \begin{equation}
        \Lambda_{u,L}^{\mathfrak{X},\epsilon_2} \subseteq \mathcal{R}_K^L
    \end{equation}
    Then $\mathcal{T}_{\beta}(\epsilon_1,L) \subseteq \mathcal{R}_K^L$. 

    By \cref{thm.typical}, we can have the following lower bound
    \begin{align}
        P(\Lambda_{u,L}^{\mathfrak{X},\epsilon_2} ) &\geq 2^{-L(KH(\mathfrak{X}) + \eta +\epsilon_2)} |\mathcal{T}_{\beta}(\epsilon_1,L)| \\
        & \geq 2^{-L(KH(\mathfrak{X}) + \eta +\epsilon_2)}(1-\delta)2^{L (H(\mathfrak{X}_\beta)-\epsilon_1)}\\
        & = \Omega ( \Delta^{1 - \mu})
    \end{align}
   where $\mu = \frac{H(\mathfrak{X}_\beta)}{KH(\mathfrak{X})} <1$.

   Since $\Lambda_{u,L}^{\mathfrak{X},\epsilon_2} \subseteq \mathcal{R}_K^L$, we can have the lower bound of the probability of rare event set,
   \begin{equation}
       p_{\mathrm{rare}} \geq P(\Lambda_{u,L}^{\mathfrak{X},\epsilon_2} ) =  \Omega ( \Delta^{1 - \mu})
   \end{equation}
\end{proof}

\subsection{Quantum rare event sampling for stochastic processes}
\label{app.stochastic.alg}

In the following, we detail the applications of our algorithm to stochastic processes, which have additional properties provided by the AEP theorem and $\beta$-mapping.
This corresponds to a special case of \cref{thm.alg.rare.pure} when $Q=P$ and $p_1-p_2=\mathcal{O}(1)$.

Given sampling access to a stochastic process, our goal is to sample rare events whose probability is significantly smaller than that of the typical events.
This goal is similar to the low-energy state preparation in quantum physics. One may regard a given distribution $P(x_i)$ as the energy of a physical system's state $x_i$.
Consequently, the physical system has the Hamiltonian $H=\sum_{i} \sqrt{P(x_i)}\ketbra{x_i}{x_i}$. Then, finding the low-energy states of the Hamiltonian, whose energy is less than $\sqrt{\Delta}$, is equivalent to finding the rare events of the stochastic process.

\begin{corollary}[Quantum rare event sampling for stochastic process]\label{thm.alg.rare}
    Consider a stochastic process $\mathfrak{X}$ satisfying the asymptotic equipartition property and the $\beta$-mapping.
    Let the rare event set $\mathcal{R}^L_K$ be defined as \cref{def.rare.stoq} and $p_{\mathrm{rare}}$ be the sum of probabilities of all rare events.
    Let the ambiguity be $\xi$, and the ratio of the probability of unsure events to rare events be $\zeta$.
    Given access to a sampler $U_L$ for this stochastic process, one can construct a quantum sampler for a distribution that is $\mathcal{O}(\epsilon+\zeta)$-close to the distribution
    \begin{align}
        P_{\mathcal{R}}= \begin{cases}
            P(x_{i}) /{p_{\mathrm{rare}}}, & x_{i}\in \mathcal{R}^L_K,\\
            0, & x_{i}\notin \mathcal{R}^L_K,
            \end{cases}
    \end{align}
    by using $\mathcal{O}( \sqrt{\frac{\Delta^{\mu}}{\xi^2}}\log  \frac{1}{\epsilon})$ times of $U_L$, where $\mu=\frac{H(\mathfrak{X}_\beta)}{KH(\mathfrak{X})}<1$.
\end{corollary}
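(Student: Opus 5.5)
The plan is to instantiate \cref{thm.alg.rare.pure} with $Q=P$ and $U_P=U_L$. The only new ingredients are the additive ambiguity width $\xi$ in place of the multiplicative window $(1+\alpha)\Delta$, and the lower bound on $p_{\mathrm{rare}}$ from \cref{thm.ratio}. I proceed in three steps.

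\emph{Amplitude block encoding.} By \cref{thm.amplitudes.encoding}, $\mathcal{O}(1)$ queries to controlled-$U_L$ and controlled-$U_L^\dagger$ give a $(1,n+2,0)$-encoding $U_A$ of $A=\diag(\sqrt{P(x_{0:L})})$, where $n=L\log|\mathcal X|$.

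\emph{Thresholding with an additive window.} The ambiguous set is $\mathcal S_\xi=\{x:\Delta\le P(x)\le\Delta+2\xi\}$, so the transition width must be measured in the square-root variable. Since
\[
\sqrt{\Delta+2\xi}-\sqrt{\Delta}=\frac{2\xi}{\sqrt{\Delta+2\xi}+\sqrt\Delta}\ge \frac{\xi}{\sqrt{\Delta+2\xi}}=\Omega\!\left(\frac{\xi}{\sqrt\Delta}\right)
\]
(using $\xi<\Delta$), I choose the rectangle function $f_t$ of \cref{approx.rectangle} with $t$ at the midpoint of $[\sqrt\Delta,\sqrt{\Delta+2\xi}]$ and $\Gamma=\Theta(\xi/\sqrt\Delta)$. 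This keeps the transition region inside the square roots of the probabilities in $\mathcal S_\xi$. The degree is then $d=\mathcal O(\Gamma^{-1}\log(1/\epsilon))=\mathcal O(\sqrt{\Delta}\,\xi^{-1}\log(1/\epsilon))$. By \cref{theorem.qsvt}, $d$ uses of $U_A$ give a block encoding $\widetilde\Pi_{\mathcal R}$ that is $\epsilon$-close to $1$ on $\mathcal R^L_K$ and $\epsilon$-close to $0$ above $\Delta+2\xi$.

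\emph{Amplification and error.} Applying $\widetilde\Pi_{\mathcal R}$ to $U_L\ket 0$ and running amplitude amplification with $\mathcal O(1/\sqrt{p_{\mathrm{rare}}})$ rounds prepares the post-selected state. The total-variation calculation in the proof of \cref{thm.alg.rare.pure} carries over verbatim, with $q_\Delta$ replaced by $P(\mathcal S_\xi)$, and gives error $\mathcal O(\epsilon+\zeta)$; this needs $\epsilon^2\lesssim\epsilon\,p_{\mathrm{rare}}$, or else rescaling $\epsilon$. The total query count is therefore
\[
\mathcal O\!\left(\frac{\sqrt\Delta}{\xi\sqrt{p_{\mathrm{rare}}}}\log\frac1\epsilon\right).
\]
Inserting $p_{\mathrm{rare}}=\Omega(\Delta^{1-\mu})$ from \cref{thm.ratio} gives $\sqrt{\Delta}\,\Delta^{-(1-\mu)/2}/\xi=\sqrt{\Delta^{\mu}/\xi^2}$, which is the stated bound.

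The main obstacle is the asymptotic nature of \cref{thm.ratio}. It holds only for sufficiently large $L$, and the hidden constants and slack exponents ($\eta$, $\epsilon_1$, $\epsilon_2$, $\delta$) appear in the exponent, multiplying $L$. Strictly, then, $p_{\mathrm{rare}}\ge\Delta^{1-\mu+o(1)}$. I would absorb these by choosing them to vanish with $L$, or read $\Omega$ up to such subexponential factors, consistent with the paper's use of $\widetilde{\mathcal O}$. A secondary check is that the chosen $\Gamma$ satisfies the constraint $\Gamma<1/2$ of \cref{approx.rectangle}, which is immediate here.
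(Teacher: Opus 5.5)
Your proposal is correct and follows the paper's proof essentially step for step. Your midpoint threshold $t=(\sqrt\Delta+\sqrt{\Delta+2\xi})/2$ with half-width $\Gamma=\Theta(\xi/\sqrt\Delta)$ is exactly the paper's choice $t=\sqrt\Delta+\xi'$, $\Gamma=\xi'$ with $\xi'=(\sqrt{\Delta+2\xi}-\sqrt\Delta)/2$, followed by $\mathcal{O}(1/\sqrt{p_{\mathrm{rare}}})$ rounds of amplitude amplification and the substitution $p_{\mathrm{rare}}=\Omega(\Delta^{1-\mu})$. Your two caveats are legitimate points the paper glosses over, not gaps in your argument: the $\epsilon^2/p_{\mathrm{rare}}$ leakage term forces $\epsilon\lesssim p_{\mathrm{rare}}$ (at only logarithmic cost), and the $\Omega$ in \cref{thm.ratio} hides factors $2^{-L(\eta+\epsilon_1+\epsilon_2)}$.
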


\begin{proof}
    The algorithm follows mostly the same as \cref{thm.alg.rare.pure}.
    To make sure the ambiguity range is $\xi$, for quantum we need to set $\xi'$ such that
    \begin{align}
        (\sqrt{\Delta}+2\xi')^2 = \Delta + 2\xi.
    \end{align}
    It suffices to take $\xi'=\frac{-\sqrt{\Delta}+\sqrt{\Delta+2 \xi}}{2}$.
    Since $\xi$ is small compared to $\Delta$, we further have $\xi'=\Theta(\xi/\sqrt{\Delta})$.
    \cref{approx.rectangle} guarantees that the rectangular function can be approximated to error $\epsilon$ by an even polynomial $P$ with degree $\mathcal{O}(\frac{1}{\xi'}\log(1/\epsilon))$.
    \cref{theorem.qsvt} claims that such an even polynomial can be applied to the diagonal matrix $A$ by using $\mathcal{O}(\frac{1}{\xi'}\log(1/\epsilon))$ queries to controlled-$U_L$ and controlled-$U_L^\dagger$.

    We can set $t= \sqrt{\Delta}+\xi', \Gamma =\xi'$ in \cref{approx.rectangle}, then construct a construct a $(1, n+3, \epsilon)$-encoding $U_{\Pi_\mathcal{R}}$ of a reflection operator by using $\mathcal{O}(\frac{1}{\xi'}\log(\frac{1}{\epsilon}))$ times of controlled-$U$ and controlled-$U^\dagger$. The unsure set in \cref{approx.rectangle} is
    \begin{equation}
        \mathcal{S}_{\xi}=\{x_{0:L}|\Delta \leq \Pr(x_{0:L}) \leq \Delta + 2\xi \}.
    \end{equation}
    Let $\Pi_\mathcal{R}$ denote the projector into all rare events and $\Pi_{\mathrm{un}}$ denote the projector into the unknown area, i.e.,
    \begin{equation}
        \Pi_\mathcal{R} = \sum_{x_{0:L}\in \mathcal{R}_K^L}\ket{x_{0:L}}\bra{x_{0:L}},~ \Pi_{\mathrm{un}} = \sum_{x_{0:L}\in \mathcal{S}_\xi}\ket{x_{0:L}}\bra{x_{0:L}}.
    \end{equation}

    Then, we apply the block encoding $U_{\Pi_\mathcal{R}}$ to the initial state
    \begin{align}
        U_{\Pi_\mathcal{R}} \ket{0}\sum_{x_{0:L}} \sqrt{P(x_{0:L})}\ket{x_{0:L}} & \approx \ket{0}\Pi_\mathcal{R} \sum_{x_{0:L}} \sqrt{P(x_{0:L})}+\ket{1}\ket{\widetilde{\perp}}\nonumber \\ &=\ket{0}\sum_{x_{0:L}\in \mathcal{R}^{L}_{K}}\sqrt{P(x_{0:L})} \ket{x_{0:L}} +\ket{1}\ket{\widetilde{\perp}},
    \end{align}
    where $\ket{\widetilde{\perp}}$ is an unnormalized state.
    The probability of measuring the ancilla qubits  as $\ket{0}$ is
    \begin{equation}
    \|\tilde{\Pi}_\mathcal{R}\ket{\phi} \|^2 \geq \|\Pi_\mathcal{R}\ket{\phi} \|^2 (1-\epsilon)^2 = p_{\mathrm{rare}}(1-\epsilon)^2.
    \end{equation}
    By further using the amplitude amplification $\mathcal{O}(\frac{1}{\sqrt{p_{\mathrm{rare}}}})$ times, one can prepare the target state
    \begin{align}
        \frac{1}{\sqrt{p_{\mathrm{rare}}}}\sum_{x_{0:L}\in \mathcal{R}^{L}_{K}}\sqrt{P(x_{0:L})}\ket{x_{0:L}}.
    \end{align}

    Combining the QSVT procedure requiring $\mathcal{O}(1/\xi' \log \frac{1}{\epsilon})$, the overall queries to the controlled-$U$ and controlled-$U^\dagger$ is $\mathcal{O}(\frac{\sqrt{\Delta}}{\xi\sqrt{ p_{\mathrm{rare}}}}\log\frac{1}{\epsilon})$. Since $p_{\mathrm{rare}}= \Omega(\Delta^{1-\mu})$ where $\mu <1$, the final complexity is 
    $$
    \mathcal{O}\left(\frac{\Delta^{\mu/2}}{\xi}\log\frac{1}{\epsilon}\right)
    $$

    Following exactly the same error analysis as in \cref{thm.alg.rare.pure}, one can show the total variation distance is $\mathcal{O}(\epsilon+\zeta)$.
\end{proof}

In most cases $\xi=\mathcal{O}(\Delta)$, and we obtain the complexity as $\mathcal{O}(\sqrt{\Delta^{\mu -2}}\log\frac{1}{\epsilon})$.
Under the same conditions, following \cref{thm.alg.classical}, the classical complexity would be $\mathcal{O}(\frac{1}{\Delta}\log \frac{1}{\epsilon})$, which yields at least a polynomial speedup.

\subsection{Quantum stochastic modeling}

Given a stochastic process governed by distribution $\Pr(x_{0:L})$, we can always find a unifilar hidden Markov model, called $\varepsilon$-machine. The $\varepsilon$-machine
consists of a set of hidden memory states $S_i$with transition probability $\Pr(S_j,x|S_i)$. At each time step, the memory state updates to $S_j$ from $S_i$ and outputs $x$ with probability $\Pr(S_j,x|S_i)$. The $\varepsilon$-machine is unifilar, meaning that the current memory state $S_i$ and output $x$ determine the next memory state $S_j$.

From the $\varepsilon$-machine, we can construct a quantum model. The first step is to construct the MPS state
\begin{equation}
    A_{x,ij} = \sqrt{\Pr(S_j,x|S_i)}.
\end{equation}
Then, we can evaluate its left canonical form $\tilde{A}_x$ such that
\begin{equation}
    \sum_x \tilde{A}_x^\dagger \tilde{A}_x = I.
\end{equation}
$\tilde{A}_x$ forms a set of Kraus operators that can be used to construct the unitary operator $U$ such that
\begin{equation}
    U\ket{i}\ket{0} = \sum_x \tilde{A}_x\ket{i}\ket{x}.
\end{equation}

The quantum model will also map the classical state $S_i$ to a quantum state $\ket{\sigma_i}$, which are usually non-orthogonal to each other.
The memory states $\ket{\sigma_i}$ can also be computed from MPS states $A_{x,ij}$.
It is proven that for certain stochastic processes, the memory dimension $d_q $ of the quantum model, which is defined to be the dimension of the Hilbert space that the quantum states lie in, can be significantly less than the logarithm of the number of classical states $S_i$~\cite{yang2018matrix}.

Given an initial state $\ket{\sigma_i}$, we can couple the memory state with $L$ output systems initialized in state $\ket{0}^{\otimes L}$, which results in
\begin{equation}
    U_L\ket{\sigma_i}\ket{0}^{\otimes L} = \sum_{x_{0:L}} \sqrt{\Pr(x_{0:L}|S_i)}\ket{\sigma_j}\ket{x_{0:L}},
\end{equation}
where $\ket{\sigma_j}$ denotes the subsequent memory state determined by the initial state $\ket{\sigma_i}$ and the output sequence $x_{0:L}$.

Meanwhile, the decoupling procedure decouples the memory state from the output systems by applying the reverse unitary $U_{r}$
Assume the stochastic process has Markov order $\chi$, which means $\ket{\sigma_i}$ is determined by the last $\chi_i$ bits. Then we can construct the reverse unitary $U_{r}$ which is a $\chi$-qubit control unitary such that
\begin{equation}
    U_{r} = \sum_{x_{0:\chi}} \ket{x_{0:\chi}}\bra{x_{0:\chi}}\otimes U_{\sigma_i}^\dagger
\end{equation}
where $U_{\sigma_i} \ket{0} = \ket{\sigma_i}$. Applying $U_{r}$ will map the memory state back to $\ket{0}$. Therefore, combining $U_L$ and $U_{r}$ will give the desired quantum state
\begin{equation}
    U_{r}U_L\ket{\sigma_i}\ket{0}^{\otimes L} = \ket{0} \left(\sum_{x_{0:L}} \sqrt{\Pr(x_{0:L}|S_i)}\ket{x_{0:L}}\right).
\end{equation}

\section{Numerical results}
\label{app:numerics}
\subsection{Setup}
We simulate the results of our algorithm using \texttt{numpy} by directly obtaining the matrix representation of the block-encoding unitaries and operating on them.

For simulations on small-scale systems, we first construct a $(1,(L+\chi)\log|\mathcal{X}|+2, 0)$-block encoding where diagonal encodes the amplitudes of the output quantum state of the recurrent quantum circuit via the protocol as listed in the main paper. Note that the additional qubits stem in the block encoding when compared to \cref{thm.amplitudes.encoding} from the fact that the memory qubit is reverted to the zero state, and serves as an ancilla in the block encoding. The amplitudes are then accordingly transformed via our algorithm to produce the required block encoding given an extra ancilla qubit.

For larger systems, however, constructing the full $(1,(L+\chi)\log|\mathcal{X}|+2, 0)$-block encoding requires an additional polynomial cost for classical statevector simulators due to the much larger statevector from the excess ancilla qubits. We instead classically construct a $(1, \chi\log|\mathcal{X}|, 0)$-block encoding unitary as follows:
\begin{equation}
    U(D) = \begin{pmatrix} D & \sqrt{I-D^2}\\ \sqrt{I-D^2} & -D\end{pmatrix}
\end{equation}
where $D$ is a diagonal square matrix that encodes the amplitudes of the output state of the recurrent quantum circuit on its diagonal. While an efficient implementation of such a block encoding on a quantum circuit is an open question, for simulation purposes, it serves the same effect as the diagonal block encoding we construct in our paper, given that the $\mathrm{PREP}^\dagger\cdot\mathrm{SEL}\cdot\mathrm{PREP}$ methodology used to obtain the diagonal block encoding in \citep{rattew2023nonlinear} produces a Grover-like reflection operator. This smaller-scale construction can be used to simulate much larger systems with much less time and memory consumption than a full implementation of the block encoding.

To obtain the phase angles in the QSVT algorithm that approximates the thresholding function, we use the optimization method introduced in \citep{dong2021efficient} and implemented in \texttt{pyqsp}~\citep{chao2020finding,martyn2021grand} package. Instead of the constructive methods used to construct angles typically used in proofs for QSVT~\cite{low2019hamiltonian,gilyen2019quantum}, the phase angles that correspond to the approximation of a polynomial of fixed degree $d$ are found by optimization methods that minimize the difference between the found polynomial and the target function. In particular, the optimizer we utilize finds symmetrical phase angles via an iterative quasi-Newton method~\citep{dong2021efficient}. Similar research using optimization methods to find phase angles for QSP and QSVT includes \citep{chao2020finding, wang2022energy}.

To apply the angles from QSP to those used in QSVT, one has to first convert the angles from the $W_x$ convention used in QSP to the reflection convention of block encoding construction used in QSVT. We modify the phase angles as follows, where $\phi_k'$ are the angles obtained for the $W_x$ convention, and $\phi_k$ are the angles for the reflection convention.
\begin{equation}
    \phi_k=\begin{cases}
        \phi_0' + \frac{\pi}{4} - \frac{d \pi}{2} & k = 0          \\
        \phi_k' + \frac{\pi}{2}                   & 1\le k \le d-1 \\
        \phi_d' + \frac{\pi}{4}                   & k=d            \\
    \end{cases}.
\end{equation}

Given that the obtained angles from QSP/QSVT implements $g \in \mathbb{C}[x]$ such that $f(x) = \mathrm{Re}(g(x))$, to implement the block encoding of $f(D)$, we need to implement $f(x) = \frac{1}{2}(g(x)+g^*(x))$ where the phase angles of $g^*(x)$ are obtained by negating the phase angles obtained for $g(x)$. While this can be achieved by simply taking the sum of two QSVT circuits by LCU, one can alternatively modify the QSVT circuit such that a Hadamard gate is added to the ancilla qubit for the phase-controlled rotation gates before and after the QSVT circuit, as shown in \cref{figRealQET}, as the additional input of the $\ket{1}$ state would negate the phase angles, as shown in \citep{martyn2021grand}.

\begin{figure}
    \centering
    \includegraphics[width=0.5\linewidth]{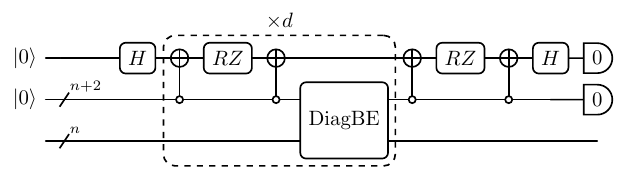}
    \caption{Circuit for implementation of a matrix polynomial.}
    \label{figRealQET}
\end{figure}

For classical simulations, we first generate the original probability distribution, then sample from the distribution. From the samples, we compute the sample distribution and truncate samples whose occurrences in the sample distribution fall above the threshold -- thus finding the set of rare events that can be later used for importance/rejection sampling. Note that when one increases classical queries, most queries would produce non-rare events, which would increase the TVD until the increased queries would allow for an extra sample of rare events, and the TVD drops again. This results in a near-periodic fluctuation of the TVD in the classical sampling method that is absent in the quantum method.

Our simulation results show that our algorithm can be used to effectively identify rare events, suppress non-rare events, and linearly boost the probabilities of the rare events, given that we can approximate the threshold function to a sufficient closeness.

\begin{figure}
    \centering
    \begin{tikzpicture}[>=stealth, node distance=3.4cm]
        \node[draw, circle, minimum size=1cm] (h) {$s_0$};
        \node[draw, circle, minimum size=1cm, right of=h] (t) {$s_1$};

        \draw[->, bend left=20] (h) to node[above] {$p,\ 1$} (t);
        \draw[->, bend left=20] (t) to node[below] {$p,\ 0$} (h);
        \draw[->, out=135, in=45, looseness=7] (h) to node[above] {$1-p,\ 0$} (h);
        \draw[->, out=135, in=45, looseness=7] (t) to node[above] {$1-p,\ 1$} (t);
    \end{tikzpicture}
    \caption{Perturbed coin dynamics. State $s_0$ denotes heads and state $s_1$ denotes tails. At each step, the coin flips with probability $p$ and otherwise remains in the current state.}
    \label{fig:pcoin}
\end{figure}

\subsection{Perturbed coin model}
\label{app:q_pcoin}

\begin{figure}
    \includegraphics[width=\linewidth]{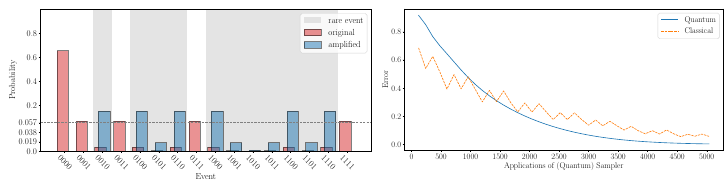}
    \caption{\textbf{Simulation results for a perturbed coin.} The left panel shows the output of the quantum algorithm on a smaller-sized version of the p-coin simulation with $L=4$. The new distribution of rare events (blue) amplifies rare events while non-rare events are suppressed compared to the original (red). The right panel shows the convergence of the TVD error for $L=8$ as the number of queries increases. The quantum algorithm converges faster and more smoothly compared to its classical counterpart.}
    \label{fig:results}
\end{figure}

We apply our example to a simple Markovian process that can be generated by a perturbed coin. At each time step, the coin flips with probability $p$, otherwise remains at its original state, as shown in~\cref{fig:pcoin}.
As time evolves, one may observe a sequence of heads (denoted as 0) and tails (denoted as 1). The classical perturbed coin contains only two classical memory states: heads (denoted as $s_0$) and tails (denoted as $s_1$), which carry information from the past.

Quantum models use the quantum memory states $\{\ket{\sigma_0},\ket{\sigma_1}\}$ to carry the information from past rather than the classical states~\cite{felix2018Practical,yang2018matrix}. These quantum states can be expressed as
\begin{equation}
    \begin{aligned}
        \ket{\sigma_0} & = \sqrt{1-p}\ket{0} + \sqrt{p}\ket{1}, \\
        \ket{\sigma_1} & = \sqrt{p}\ket{0} + \sqrt{1-p}\ket{1}, \\
    \end{aligned}
\end{equation}
$\ket{\sigma_0}$ and $\ket{\sigma_1}$ are non-orthogonal to each other, which leads to the memory advantage over the classical model.

The unitary operator $U$ implements the transition between the quantum memory states. At each time step, the quantum model implements the coupling unitary operator $U$ to couple the memory state $\ket{\sigma_i}$ and the output register, initialized in the vacuum state $\ket{0}$. Since the perturbed coin is a Markov process, the output $0/1$ determines the next memory state to be $\ket{\sigma_{0/1}}$. The unitary operator $U$ satisfies
\begin{equation}
    \begin{aligned}
        A_0 & = \bra{0}U\ket{0} = \ket{\sigma_0}\bra{0}, \\
        A_1 & = \bra{1}U\ket{0} = \ket{\sigma_1}\bra{1}.
    \end{aligned}
\end{equation}
The unitary operator $U$ can be constructed from the Kraus operators $A_x$. Specifically, we choose the first qubit to be the memory qubit and the second qubit to be the output qubit. The matrix form of $U$ is
\begin{equation}
    U = \begin{bmatrix}
        \sqrt{1-p} & * & 0          & * \\
        0          & * & \sqrt{p}   & * \\
        \sqrt{p}   & * & 0          & * \\
        0          & * & \sqrt{1-p} & *
    \end{bmatrix}
\end{equation}
where $*$ denotes the missing entries. The unitary operator can then be obtained by using the Gram-Schmidt procedure to obtain the missing entries.

We encode the probability distribution in the amplitudes of a quantum state and block-encode the distribution as a diagonal block-encoding. We set sequence length $L=8$, rareness parameter $K=2.2$, and the probability of the coin to be $p=0.1$. The history is set to be $S_0$. We evaluate the TVD distance of the sampling distribution. To compare with classical Monte Carlo methods of simulation, we fix the number of degrees $d$ for polynomial approximation for the quantum simulations, which would amount to $ d\lceil 1/\sqrt{p_{\rm rare}}\rceil$ queries after $\lceil1/\sqrt{p_{\rm rare}}\rceil$ rounds of amplitude amplification. We disregard the final post-selection of the quantum case and directly take the probability distribution that would be generated from the quantum state if post-selection were 0 after amplification. In the classical case, we match the number of queries $ d\lceil 1/\sqrt{p_{\rm rare}}\rceil$ to identify rare events, and further, to compete on fair grounds, we do not perform rejection sampling in the importance sampling phase, but only account for the runtime to identify the rare events. Further implementation details can be found in \cref{app:numerics}. We plot the TVD distance of our quantum algorithm together with that of the Monte Carlo method in~\cref{fig:results}.
\end{document}